\tikzstyle{vertex}=[circle, draw, inner sep=0pt, minimum size=6pt]
\newtheorem{example}{Example}
\newtheorem{corollary}{Corollary}
\newtheorem{lemma}{Lemma}
\newtheorem{claim}{Claim}
\newtheorem{theorem}{Theorem}
\mathchardef\mhyphen="2D
\newcommand\MSA{\mathcal{M}(\mathrm{SA})}
\renewcommand\v{{\texttt{v}}}
\newcommand\CST{\mathrm{SAL}}
\newcommand\TSC{\mathrm{2\mhyphen SAL}}
\newcommand\ST{\mathrm{ST}}
\newcommand\CoST{\mathcal{C}(\mathrm{ST})}
\newcommand\CoMST{\mathcal{C}(\mathrm{MxST})}
\newcommand\MxSTMin{M\footnotesize{xST}\normalsize M\footnotesize{INIMIZATION}\normalsize}
\newcommand\CMxSTMin{CM\footnotesize{xST}\normalsize M\footnotesize{INIMIZATION}\normalsize}
\newcommand\CSTMin{CS\footnotesize{T}\normalsize M\footnotesize{INIMIZATION}\normalsize}
\newcommand\MSAMin{M\footnotesize{SA}\normalsize M\footnotesize{INIMIZATION}\normalsize}
\newcommand\SA{\mathrm{SA}}
\newcommand\MxST{\mathrm{MxST}}
\newcommand\Ls{$L(\sigma)$}
\newcommand\ks{k$_\sigma$}
\title{\textbf{Building Efficient and Compact Data Structures for Simplicial Complexes}\footnote{An extended abstract of this paper appeared in the proceedings of the 31$^{\text{st}}$ International Symposium on Computational Geometry.}}
\date{}
\author{Jean-Daniel Boissonnat\footnote{Geometrica, INRIA Sophia Antipolis - M\'{e}diterran\'{e}e, France. Email: \texttt{Jean-Daniel.Boissonnat@inria.fr}. This work was partially supported by the Advanced Grant of the European Research Council GUDHI (Geometric Understanding in Higher Dimensions). }
\and 
Karthik C.\ S.\footnote{Department of Computer Science and Applied Mathematics, Weizmann Institute of Science, Israel. Email: \texttt{karthik.srikanta@weizmann.ac.il}. This work was partially supported by Irit Dinur's ERC-StG grant number 239985.  Some parts of this work were done at ENS Lyon and at University of Nice - Sophia Antipolis, and were supported by LIP fellowship and Labex UCN@Sophia scholarship respectively.} 
\and
S\'ebastien Tavenas\footnote{ Microsoft Research, India. Email: \texttt{t-sebat@microsoft.com}.
A part of this work was done at LIP, ENS Lyon (UMR 5668 ENS Lyon - CNRS - UCBL - INRIA, Universit\'{e} de Lyon).}}
\begin{document}
\maketitle
\begin{abstract}
\noindent  The Simplex Tree (ST)  is a recently introduced data structure
that can represent abstract simplicial complexes of any dimension and
allows efficient implementation of a large range of basic operations on
simplicial complexes. In this paper, we show how to optimally compress
the Simplex Tree while retaining its functionalities. In addition, we
propose two new data structures called the Maximal Simplex Tree (MxST)  and  the Simplex Array List (SAL). We analyze the compressed
Simplex Tree, the Maximal Simplex Tree, and the Simplex Array List under various settings.
\end{abstract}
\clearpage

\section{Introduction}
Simplicial complexes are widely used in combinatorial and computational topology, and have found many applications in topological data analysis and geometric inference. 
The most common representation uses the Hasse diagram of the complex that
has one node per simplex and an edge between any pair of  incident simplices whose dimensions differ by one. A few attempts to obtain more compact representations have been reported recently.

 Attali et al.\ \cite{DataStructure3} proposed the skeleton-blockers
data structure which represent a simplicial complex by its 1-skeleton
together with its set of blockers. Blockers are the simplices which
are not contained in the complex but whose proper subfaces are. Flag
complexes have no blockers and the skeleton-blocker
representation is  especially efficient for complexes that are
``close'' to flag complexes. An interesting property of the
skeleton-blocker representation is that it  enables efficient edge contraction.

Boissonnat and Maria \cite{SimplexTree} have proposed a tree
representation called the Simplex Tree that can represent general
simplicial complexes and scales well with dimension.  The nodes of the
tree are in bijection with the simplices (of all dimensions) of the
simplicial complex. In this way, the Simplex Tree explicitly stores
all the simplices of the complex but it does not represent explicitly
all the incidences between simplices that are stored in the Hasse
diagram. Storing all the simplices is useful (for example, one can
then attach information to each simplex or store a  filtration
efficiently). Moreover,  the tree structure
of the Simplex Tree leads to efficient implementation of the basic operations on
simplicial complexes (such as retrieving incidence relations, and in
particular retrieving the faces or the cofaces of a simplex).

In this paper, we propose a way to compress the Simplex Tree so as to
store as few nodes and edges as possible without compromising the
functionality of the data structure. The new compressed data
structure is in fact a finite automaton (referred to in this paper as
the Minimal Simplex Automaton) and we describe an optimal algorithm for
its construction. Previous works have looked at trie compression and
have tried to establish a good trade-off between speed and size, but
in most of the works, the emphasis is on one of the two. Two examples
of work where the speed is of main concern are \cite{TrieSpeed1} where
the query time is improved by reducing the number of levels in a
binary trie (which corresponds to truncating the Simplex Tree at a
certain height) and \cite{TrieSpeed2} where trie data structures are
optimized for computer memory architectures. Other popular compact representations for tries in connection with predictive text compression are discussed in \cite{patternMatching}, but they only include (all) substrings of constant length that exist in a text and also do not focus on supporting efficient access in the compressed trie. Therefore, such representations are not useful here, due to the loss of significant information. 

When the size of the structure is of primary concern, the focus is
usually on automata compression. For instance, in the context
of natural language data processing, significant savings in memory
space can be obtained if the dictionary is stored in a directed
acyclic word graph (DAWG), a form of a minimal deterministic
automaton, where common suffixes are shared
\cite{AutomataCompression}. However, theoretical analysis of
compression is seldom done (if at all), in any of these works. In this
paper, we analyze the size of the Minimal Simplex Automaton and also
demonstrate (through experiments) that compression works especially
well for Simplex Tree due to the structure of simplicial complexes: namely, 
that all subfaces of a simplex in the complex also belong
to the complex. Additionally, we consider the influence of the labeling of the
vertices on  compression, which can be significant.  Further, we show that it is hard to find
an optimal labeling for the compressed Simplex Tree
and for the Minimal Simplex Automaton.

We introduce two new data structures for simplicial complexes called
the Maximal Simplex Tree (MxST) and the Simplex Array List (SAL).
MxST is a subtree of the
Simplex Tree whose leaves are in bijection with the maximal
simplices (i.e., simplices with no cofaces) of the complex. We show that this data structure is compact  and that it allows
efficient operations. MxST is
augmented to obtain SAL
where every node uniquely represents an edge. A nice feature of SAL is
its invariance over  labeling of
vertices. We show that SAL  supports efficient basic operations  and
that it is compact when the dimension of the simplicial complex is fixed,
a case of great interest in Manifold Learning and
Topological Data Analysis. 

\section{Simplicial Complex: Definitions and a Lower Bound}\label{S2}
A simplicial complex $K$ is defined over a (finite) vertex set $V$ whose elements are called the vertices of $K$ and is a set of non-empty subsets of $V$ that is required to satisfy the following two conditions:
\begin{enumerate}
\item $p\in V\Rightarrow \{p\}\in K$
\item $\sigma\in K, \tau\subseteq\sigma\Rightarrow\tau\in K$
\end{enumerate}

Each element $\sigma\in K$ is called a simplex or a face of $K$ and, if $\sigma\in K$ has precisely $s + 1$ elements $(s \ge -1)$, $\sigma$ is called an $s$-simplex and the dimension of $\sigma$ is $s$. The dimension of the simplicial complex $K$ is the largest $d$ such that it contains a $d$-simplex.

A face of a simplex $\sigma = \{p_0 ,..., p_s \}$ is a simplex whose vertices form a subset of $\{p_0 ,..., p_s \}$. A proper face is a face different from $\sigma$ and the facets of $\sigma$ are its proper faces of maximal dimension. A simplex $\tau\in K$ admitting $\sigma$ as a face is called a coface of $\sigma$.

In this paper, the class of $d$ dimensional simplicial complexes on
$n$ vertices with $m$ simplices, of which $k$ are maximal, is denoted by
${\cal K}(n,k,d,m)$,  and $K$ denotes a simplicial complex in $ {\cal
  K} (n,k,d,m)$
  .  At times, we say
$K_{\theta}\in {\cal K}_\theta(n,k,d,m)$ (where $\theta:\
V\rightarrow \{1,2,...,|V|\}$ is a labeling of the vertex set $V$ of $K$) when we want to emphasize that some of the data structures seen in this paper are influenced by the labeling of the vertices.

A maximal simplex of a simplicial complex is a simplex which is not contained in a larger simplex of the complex. 
A simplicial complex is pure, if all its maximal simplices are of the
same dimension. Also, a free pair is defined as a pair of simplices
$(\tau,\sigma)$ in $K$ where $\tau$ is the only coface of $\sigma$.  In Figure~\ref{fig:SimplicialComplexExample}, we have a simplicial complex on vertex set $\{1,2,3,4,5,6\}$ which has three maximal simplices: the two tetrahedra 1--3--4--5 and 2--3--4--5, and the triangle 1--3--6. We use this complex as an example through out the paper.

\begin{figure}[!h]
\centering
\resizebox{4cm}{!}{
\begin{tikzpicture}[-,>=stealth',shorten >=0.5pt,auto,node distance=2cm,
 thick,main node/.style={circle,fill=blue!10,draw,font=\sffamily\large\bfseries}]
\draw [fill=blue!10!white] (0,0)--(3,-5)--(6,0) --(8.75,4.75)-- (3,5) --cycle; 
\draw [fill=blue!15!white] (2.5,-1.5)--(3,5) --(0,0)-- (3,-5) --cycle; 
\draw [fill=blue!12!white] (2.5,-1.5)--(3,5) --(0,0) --cycle; 
\draw [fill=blue!8!white] (6,0) --(8.75,4.75)-- (3,5) --cycle; 
\Huge
\draw [dashed] (0,0) -- (6,0);
\draw [-] (0,0) -- (3,5) -- (6,0);
\draw [-] (0,0) -- (2.5,-1.5) -- (6,0);
\draw [-] (3,5) -- (2.5,-1.5) -- (3,-5);
\draw [-] (0,0) -- (3,-5) -- (6,0);
\draw [-] (3,5) -- (8.75,4.75) -- (6,0);

\node at (-0.25,0) {4};
\node at (6.35,-0.1) {3};
\node at (3,5.45) {1};
\node at (3,-5.35) {2};
\node at (9.1,4.75) {6};
\node at (2.9,-1.8) {5};
\end{tikzpicture}
}
\caption{Simplicial complex with the tetrahedra 1--3--4--5 and 2--3--4--5, and the triangle 1--3--6.}
\label{fig:SimplicialComplexExample}
\end{figure}
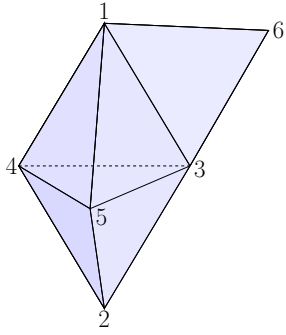

The flag complex of an undirected graph $G$ is defined as an abstract simplicial complex, whose simplices are the sets of vertices in the cliques of $G$. Let $(P, d)$ be a metric space where $P$ is a discrete point set. Given a positive real number $r > 0$, the Rips complex is the abstract simplicial complex $\mathcal{R}^r(P)$ where a simplex  $\sigma\in\mathcal{R}^r(P)$ if and only if $d(p, q) \le r$ for every pair of vertices of $\sigma$. Note that the Rips complex is a special case of a flag complex. This completes the definition of the complexes which will be used in this paper.

We would like to note here that the case when
$k=\mathcal{O}(n)$, is of particular interest. It can be
observed in flag complexes, constructed from planar graphs and
expanders \cite{ELS10}, and in general, from nowhere dense
graphs \cite{GKS13}, and  also from chordal
graphs \cite{G80}. Generalizing, for all flag complexes constructed from graphs with
degeneracy 
$\mathcal{O}(\log n)$
(degeneracy is the smallest integer $r$ such that every subgraph has a vertex of degree at most $r$),
we have that
$k=n^{\mathcal{O}(1)}$ \cite{ELS10}. This encompasses a large class of complexes
encountered in practice. 

Now, we obtain a lower bound on the space needed to represent simplicial
complexes by presenting a counting argument on the number of distinct simplicial complexes.
\begin{theorem}
\label{Lowerbound}
Consider the class of all simplicial complexes on $n$ vertices of dimension $d$, containing $k$ maximal simplices, where $d\ge2$ and $k\ge n+1$, and consider 
any data structure that can represent the simplicial complexes of this class. Such a data structure requires $\log{\binom{\binom{n/2}{d+1}}{k-n}}$ bits to be stored. For any constant  $\varepsilon\in (0,1)$ and for  $\frac{2}{\varepsilon}n\le k\le n^{(1-\varepsilon)d}$ and $d\le n^{\varepsilon/3}$,  the bound becomes $\Omega(kd\log n)$.
\end{theorem}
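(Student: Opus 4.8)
The plan is to prove this purely by an information-theoretic counting argument: any data structure that can represent each complex of the class must assign distinct bit-strings to distinct complexes, so if the class contains $N$ complexes then at least one of them needs a representation of length $\geq \log N$ bits. Hence it suffices to exhibit a family of at least $\binom{\binom{n/2}{d+1}}{k-n}$ pairwise distinct complexes all lying in the prescribed class, and then to estimate this quantity in the stated parameter regime.

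To build the family, I would partition the vertex set $V$ into a pool $C$ of $n/2$ vertices and a remainder $R$ of $n/2$ vertices, and fix once and for all a low-dimensional base $B$ on $V$ consisting of exactly $n$ maximal simplices that together cover every vertex of $V$ (so condition~1 of the definition holds and all $n$ vertices genuinely occur), each base simplex being required to contain at least one vertex of $R$. The free part of the construction is a choice of $k-n$ distinct $d$-simplices from among the $\binom{n/2}{d+1}$ many $(d+1)$-subsets of $C$, which we add together with all their faces to $B$. Two observations make each such choice a valid member of $\mathcal{K}(n,k,d,\cdot)$: distinct $(d+1)$-subsets never contain one another, so the chosen simplices are mutually maximal; and since every base simplex meets $R$ while the chosen simplices lie entirely inside $C$, no chosen $d$-simplex is swallowed by the base and no base simplex is swallowed by a chosen one. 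Thus each choice yields a complex with exactly $k$ maximal simplices ($n$ from $B$ and $k-n$ chosen) and dimension exactly $d$ (using $k-n\geq 1$, which holds since $k\geq n+1$). Distinct choices differ in their set of maximal $d$-simplices and hence give distinct complexes, so the family has size $\binom{\binom{n/2}{d+1}}{k-n}$ and the first bound follows.

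For the asymptotic bound I would write $M=\binom{n/2}{d+1}$ and use $\binom{M}{t}\geq (M/t)^t$ to obtain $\log\binom{M}{k-n}\geq (k-n)\log\big(M/(k-n)\big)$. The hypothesis $k\geq \tfrac{2}{\varepsilon}n$ gives $k-n=\Omega(k)$. Since $d\leq n^{\varepsilon/3}$, the factor $\tfrac{n/2}{d+1}$ is at least $\tfrac14 n^{1-\varepsilon/3}$, so $M\geq\big(\tfrac{n/2}{d+1}\big)^{d+1}$ yields $\log M\geq (1-\varepsilon/3)\,d\log n-O(d)$; in particular $M>n^{(1-\varepsilon)d}\geq k$, which both legitimizes the binomial and confirms $k-n\leq M$. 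Finally $k\leq n^{(1-\varepsilon)d}$ gives $\log(k-n)\leq(1-\varepsilon)d\log n$, whence $\log\big(M/(k-n)\big)\geq\big((1-\varepsilon/3)-(1-\varepsilon)\big)d\log n-O(d)=\tfrac{2\varepsilon}{3}\,d\log n-O(d)=\Omega(d\log n)$. Multiplying by $k-n=\Omega(k)$ produces the claimed $\Omega(kd\log n)$.

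The routine parts are the binomial estimates and the bookkeeping of the $\varepsilon$-constraints; the delicate step is the construction, where I must design the fixed base $B$ so that the complex has \emph{exactly} $k$ maximal simplices and covers all $n$ vertices no matter which $k-n$ simplices are selected, while keeping every selected simplex maximal. Realizing exactly $n$ base maximal simplices that cover $V$, each meeting $R$ and hence never contained in a $(d+1)$-subset of $C$, is where the real care is needed; the hypotheses $d\geq 2$ and $k\geq n+1$ are precisely what guarantee that such a base coexists with at least one genuine $d$-simplex, so that the dimension is exactly $d$ and the maximal-simplex count is exactly $k$.
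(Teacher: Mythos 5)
Your proposal is correct and follows essentially the same strategy as the paper: a pigeonhole/counting argument over a family of $\binom{\binom{n/2}{d+1}}{k-n}$ complexes obtained by choosing $k-n$ maximal $d$-simplices on half the vertex set and padding with $n$ maximal edges, followed by the bound $\binom{M}{t}\ge (M/t)^t$ and the same bookkeeping of the $\varepsilon$-constraints. The only (harmless) variation is that you fix a single $1$-dimensional base covering all of $V$ with every base edge meeting the reserved half, whereas the paper builds the $1$-dimensional padding adaptively on the vertices unused by each $\Gamma_i$; both secure exactly $k$ maximal simplices.
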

\begin{proof} The proof of the first statement  is by contradiction. 
  Let us define $h=k-n\ge 1$ and suppose that there exists a data structure that can be stored using only $s<\log \alpha \stackrel{{\rm def}}{=} \log{\binom{\binom{n/2}{d+1}}{h}}$ bits. 
We will construct $\alpha$ simplicial complexes, all with the same set  $P$ of $n$ vertices,  the same dimension $d$, and with exactly $k$ maximal simplices. By the pigeon hole principle, two different simplicial complexes, say $K$ and $K^\prime$, are encoded by the same word. So any algorithm will give the same answer for $K$ and $K^\prime$. But, by the construction of these complexes, there is a simplex which is in $K$ and not in $K^\prime$. This leads to a contradiction. 

The simplicial complexes are constructed as follows. Let $P'\subset P$ be a subset of cardinality $n/2$, and consider the set of all possible simplicial complexes of dimension $d$ with vertices in $P'$ that contain  $h$ maximal simplices. We further assume that all maximal simplices have dimension $d$ exactly. These complexes are $\alpha=\binom{\binom{n/2}{d+1}}{h}$ in number, since the total number of maximal $d$ dimensional simplices is $\binom{n/2}{d+1}$ and we choose $h$ of them. Let us call them $\Gamma_1,\ldots,\Gamma_\alpha$.  We now extend each $\Gamma_i$ so as to obtain a simplicial complex whose vertex set is $P$ and has exactly $k$ maximal simplices. The maximal simplices will consist of  the $h$ maximal simplices of dimension $d$ already constructed plus a number of maximal simplices of dimension 1.  The set of vertices of $\Gamma_i$, ${\rm vert}(\Gamma_i)$,  may be a strict subset of $P'$. Let its cardinality be $\frac{n}{2}-r_i$  and observe that $0\leq r_i<\frac{n}{2}$. Consider now the complete graph on the $\frac{n}{2}+r_i$ vertices of $P\setminus {\rm vert}(\Gamma_i)$.  Any spanning tree of this graph gives $\frac{n}{2}+r_i-1$ edges and we arbitrarily choose $\frac{n}{2}-r_i+1$ edges from the remaining edges of the graph to obtain $n$ distinct edges spanning over the  vertices of $P\setminus {\rm vert}(\Gamma_i)$.  We have thus constructed a 1--dimensional simplicial complex $K_i$ on the $\frac{n}{2}+r_i$ vertices of $P\setminus {\rm vert}(\Gamma_i)$ with exactly $n$ maximal simplices. Finally, we define the complex $\Lambda_i=\Gamma_i\cup K_i$ that has $P$ as its vertex set, dimension $d$, and  $k$ maximal simplices. 
The set of $\Lambda_i$, $i=1, \cdots ,\alpha$, is the set of simplicial complexes we were  looking for.

The second statement in the theorem is proved through the following computation:
\begin{align*}
\log\dbinom{\dbinom{n/2}{d+1}}{k-n}\ge &\ \log{\left(\frac{n^{(d+1)(k-n)}}{2^{(d+1)(k-n)}(d+1)^{(d+1)(k-n)}(k-n)^{(k-n)}}\right)}\\
= &\ (d+1)(k-n)\log n -(d+1)(k-n)\\
&\ \ - (d+1)(k-n)\log (d+1) - (k-n)\log (k-n) \\
> &\ (d+1)(k-n)\log n -3(d+1)(k-n)-(d+1)(k-n)\log d - (k-n)\log k \\
\ge &\ (d+1)(k-n)(\log n -3-\log d) - (k-n)(1-\varepsilon)d\log n \\
\ge &\ d\varepsilon(k-n)\log n + (k-n)\log n -(d+1)(k-n)(3+\frac{\varepsilon}{3}\log n) \\
\ge &\ \frac{2\varepsilon}{3}\left(1-\frac{\varepsilon}{2}\right)kd\log n + (k-n)\log n -3d(k-n)-(k-n)(3+\frac{\varepsilon}{3}\log n) \\
= &\ \Omega(kd\log n)
\end{align*}
We note that in the above computation, the first inequality is obtained by applying the following bound on binomial coefficients: $\binom{n}{d}\ge\left(\frac{n}{d}\right)^d$.
\end{proof}\renewcommand{\thefootnote}{\fnsymbol{footnote}}

We can adapt the above proof to build $n$ maximal
simplices on $P\setminus {\rm vert}(\Gamma_i)$ each of dimension $d$,
to ensure the lower bound applies also to  pure simplicial complexes. This is done by first building $\lfloor\frac{\lvert P\setminus {\rm vert}(\Gamma_i)\rvert}{d+1}\rfloor$ disjoint maximal simplices of dimension $d$ on vertices of $P\setminus {\rm vert}(\Gamma_i)$ and then, building  one maximal simplex which contains all the remaining vertices. We would complete the construction of  $k$ maximal simplices in the complex by choosing $n-\lfloor\frac{\lvert P\setminus {\rm vert}(\Gamma_i)\rvert}{d+1}\rfloor-1$ new maximal simplices of dimension $d$ from vertices of $P\setminus {\rm vert}(\Gamma_i)$. 

Theorem~\ref{Lowerbound} applies particularly to the case of pseudomanifolds of fixed
dimension  where we have $k\le n^\frac{d}{2}$ (i.e., $\varepsilon=\frac{1}{2}$ suffices) \cite{pseudomanifold}. The case where $d$ is small is important in Manifold Learning where it is usually assumed that the data live close to a manifold of small intrinsic dimension. The dimension of the simplicial complex should reflect this fact and ideally be equal to the dimension of the manifold. 

\section{Compression of the Simplex Tree}\label{S3}
Let $K\in {\cal K}(n,k,d,m)$ be a simplicial complex whose  vertices are  labeled from 1 to $n$ and ordered accordingly. We can thus associate to each simplex of $K$ a word on the alphabet set $\{1, \ldots ,n\}$. Specifically, a $j$-simplex of $K$ is uniquely represented as the word of length $j + 1$ consisting of the ordered set of the labels of its $j + 1$ vertices. Formally, let $\sigma = \{v_{\ell_0} , \ldots , v_{\ell_j} \}$ be a simplex, where $v_{\ell_i}$ are vertices of $K$ and $\ell_i \in \{1, \ldots , n\}$ and $\ell_0 <\cdot\cdot\cdot < \ell_j$ . $\sigma$ is represented by the word $[\sigma] = [ \ell_0 , \cdots , \ell_j ]$. The last label of the word representation of a simplex $\sigma$ will be called the last label of $\sigma$ and denoted by last($\sigma$). The simplicial complex $K$ can be defined as a collection of words on an alphabet of size $n$. To compactly represent the set of simplices of $K$, we store the corresponding words in a tree satisfying the following properties:
\begin{enumerate}
\item The nodes of the tree are in bijection with the simplices (of all dimensions) of the complex. The root is associated to the empty face.
\item Each node of the tree, except the root, stores the label of a vertex. Specifically, the node $N$ associated to a simplex $\sigma\neq\emptyset$ stores the label of the vertex last($\sigma$).
\item The vertices whose labels are encountered along a path from the root to a node $N$ associated to a simplex $\sigma$, are the vertices of $\sigma$. The labels are sorted by increasing order along such a path, and each label appears exactly once.
\end{enumerate}

This data structure is called the Simplex Tree of $K$ \cite{SimplexTree} and denoted by ST$(K)$ or simply ST when there is no ambiguity. It may be seen as a trie \cite{StringSearch} on the words representing the simplices of the complex. The depth of the root is 0 and the depth of a node is equal to the dimension of the simplex it represents plus one. Also, in this paper we assume  that ST is directed from the root to the leaves.

We give a constructive definition of ST. Starting from an empty tree,
insert the words representing the simplices of the complex in the
following manner. When inserting the word $[\sigma] = [ \ell_0
,\cdot\cdot\cdot, \ell_j ]$ start from the root, and follow the path
containing successively all labels $\ell_0 , \cdot\cdot\cdot ,
\ell_i$, where $[ \ell_0 ,\cdot\cdot\cdot, \ell_i ]$ denotes the
longest prefix of $[\sigma]$ already stored in the Simplex Tree. Next,
append to the node representing $[ \ell_0 ,\cdot\cdot\cdot, \ell_i ]$
a path consisting of the nodes storing labels $\ell_{i+1}
,\cdot\cdot\cdot, \ell_j$. In Figure 2, we give ST for the simplicial complex shown in Figure~\ref{fig:SimplicialComplexExample}.
\begin{figure}[!h]
\centering
\resizebox{10cm}{!}{
\begin{tikzpicture}[->,>=stealth',shorten >=0.5pt,auto,node distance=2cm,
  thick,main node/.style={circle,fill=blue!10,draw,font=\sffamily\large\bfseries}]
  \node[main node] (0) at (0,0) {X};
  \node[main node] (1) at (-10,-2) {1};
  \node[main node] (2) at (-6,-2) {2};
  \node[main node] (3) at (-2,-2) {3};
  \node[main node] (4) at (0,-2) {4};
  \node[main node] (5) at (2,-2) {5};
  \node[main node] (6) at (4,-2) {6};
  \node[main node] (13) at (-11.5,-4) {3};
  \node[main node] (14) at (-10.5,-4) {4};
  \node[main node] (15) at (-9.5,-4) {5};
  \node[main node] (16) at (-8.5,-4) {6};
  \node[main node] (23) at (-7,-4) {3};
  \node[main node] (24) at (-6,-4) {4};
  \node[main node] (25) at (-5,-4) {5};
  \node[main node] (34) at (-3,-4) {4};
  \node[main node] (35) at (-2,-4) {5};
  \node[main node] (36) at (-1,-4) {6};
  \node[main node] (45) at (0,-4) {5};
  \node[main node] (134) at (-13,-6) {4};
  \node[main node] (135) at (-12,-6) {5};
  \node[main node] (136) at (-11,-6) {6};
  \node[main node] (145) at (-10,-6) {5};
  \node[main node] (234) at (-8,-6) {4};
  \node[main node] (235) at (-7,-6) {5};
  \node[main node] (245) at (-6,-6) {5};
  \node[main node] (345) at (-3,-6) {5};
  \node[main node] (1345) at (-13,-8) {5};
  \node[main node] (2345) at (-8,-8) {5};
      
  \path[every node/.style={font=\sffamily\small}]
    (0) edge node {} (1)
    (0) edge node {} (2)
    (0) edge node {} (3)
    (0) edge node {} (4)
    (0) edge node {} (5)
    (0) edge node {} (6)
    (1) edge node {} (13)
    (1) edge node {} (14)
    (1) edge node {} (15)
    (1) edge node {} (16)
    (2) edge node {} (23)
    (2) edge node {} (24)
    (2) edge node {} (25)
    (3) edge node {} (34)
    (3) edge node {} (35)
    (3) edge node {} (36)
    (4) edge node {} (45)
    (13) edge node {} (134)
    (13) edge node {} (135)
    (13) edge node {} (136)
    (14) edge node {} (145)
    (23) edge node {} (234)
    (23) edge node {} (235)
    (24) edge node {} (245)
    (34) edge node {} (345)
    (134) edge node {} (1345)
    (234) edge node {} (2345)
     ;
     
\end{tikzpicture}
}
\caption{Simplex Tree of the simplicial complex in Figure~\ref{fig:SimplicialComplexExample}.}
\label{fig:SimplexTreeExample}
\end{figure}
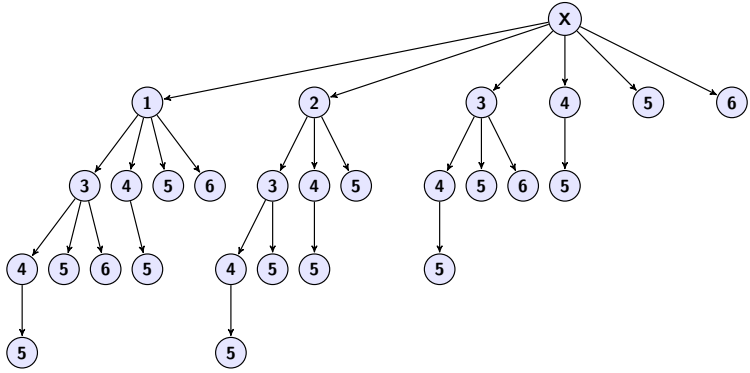

If $K$ consists of $m$ simplices (including the
empty face), the associated ST contains exactly $m$ nodes. Thus, we
need $\Theta(m\log n)$ space/bits to represent ST (since each node stores a vertex which needs $\Theta(\log n)$ bits to be represented). We can compare this
to the lower bound of Theorem~\ref{Lowerbound}. In particular, if
$k=\mathcal{O}(1)$ then, ST requires at least $\Omega(2^d\log n)$ bits
where as Theorem~\ref{Lowerbound} proves the necessity of only
$\Omega(d\log n)$ bits. Therefore, while the Simplex Tree is an efficient
data structure for some basic operations such as determining membership
of a simplex and computing the $r$-skeleton of the complex, it requires storing every simplex explicitly through a node, leading to combinatorial redundancy. To overcome this, we  introduce a compression technique for the ST.

\subsection{Compressed Simplex Tree}\label{S31}

Consider the ST in Figure~\ref{fig:SimplexTreeExampleCommon} and  note that
the red shaded region appears twice. The goal of the compression is
to identify these common parts and store them only once. More concretely, if the
same subtree is rooted at two different nodes in ST then, the subtree
is stored only once and the two root nodes now point to the unique
copy of the subtree. As a consequence, the nodes are no longer in
bijection with the nodes of the complex (as it was in the case of ST),
but we still have the property that the paths from the root are in
bijection with the simplices. We see in
Figure~\ref{fig:CompressedSimplexTreeExample}, the compressed ST of the simplicial complex described in
Figure~\ref{fig:SimplicialComplexExample}. In the rest of the paper,
we denote by $\mathcal{C}$, this action of compression. {\bf Also,
  unless otherwise stated $\mathbf{\lvert \ST\rvert}$ and $\mathbf{\lvert \CoST\rvert}$
  refer to the number of edges in $\mathbf{ST}$ and $\mathbf{\CoST}$ respectively. }

\begin{figure}[!h]
\centering
\resizebox{10cm}{!}{
\begin{tikzpicture}[->,>=stealth',shorten >=0.5pt,auto,node distance=2cm,
  thick,main node/.style={circle,fill=blue!10,draw,font=\sffamily\large\bfseries}]

\filldraw [fill=pink,draw=red] [ultra thick] plot [smooth cycle] coordinates {(-11.5,-3.25) (-10.5,-6) (-13.25,-8.75)  (-13.5,-6)};
\filldraw [fill=pink,draw=red] [ultra thick] plot [smooth cycle] coordinates {(-2,-1.25) (-0.5,-4) (-3.25,-6.75)  (-3.5,-4)};

  \node[main node] (0) at (0,0) {X};
  \node[main node] (1) at (-10,-2) {1};
  \node[main node] (2) at (-6,-2) {2};
  \node[main node] (3) at (-2,-2) {3};
  \node[main node] (4) at (0,-2) {4};
  \node[main node] (5) at (2,-2) {5};
  \node[main node] (6) at (4,-2) {6};
  \node[main node] (13) at (-11.5,-4) {3};
  \node[main node] (14) at (-10.5,-4) {4};
  \node[main node] (15) at (-9.5,-4) {5};
  \node[main node] (16) at (-8.5,-4) {6};
  \node[main node] (23) at (-7,-4) {3};
  \node[main node] (24) at (-6,-4) {4};
  \node[main node] (25) at (-5,-4) {5};
  \node[main node] (34) at (-3,-4) {4};
  \node[main node] (35) at (-2,-4) {5};
  \node[main node] (36) at (-1,-4) {6};
  \node[main node] (45) at (0,-4) {5};
  \node[main node] (134) at (-13,-6) {4};
  \node[main node] (135) at (-12,-6) {5};
  \node[main node] (136) at (-11,-6) {6};
  \node[main node] (145) at (-10,-6) {5};
  \node[main node] (234) at (-8,-6) {4};
  \node[main node] (235) at (-7,-6) {5};
  \node[main node] (245) at (-6,-6) {5};
  \node[main node] (345) at (-3,-6) {5};
  \node[main node] (1345) at (-13,-8) {5};
  \node[main node] (2345) at (-8,-8) {5};
      
  \path[every node/.style={font=\sffamily\small}]
    (0) edge node {} (1)
    (0) edge node {} (2)
    (0) edge node {} (3)
    (0) edge node {} (4)
    (0) edge node {} (5)
    (0) edge node {} (6)
    (1) edge node {} (13)
    (1) edge node {} (14)
    (1) edge node {} (15)
    (1) edge node {} (16)
    (2) edge node {} (23)
    (2) edge node {} (24)
    (2) edge node {} (25)
    (3) edge node {} (34)
    (3) edge node {} (35)
    (3) edge node {} (36)
    (4) edge node {} (45)
    (13) edge node {} (134)
    (13) edge node {} (135)
    (13) edge node {} (136)
    (14) edge node {} (145)
    (23) edge node {} (234)
    (23) edge node {} (235)
    (24) edge node {} (245)
    (34) edge node {} (345)
    (134) edge node {} (1345)
    (234) edge node {} (2345)
     ;
     
\end{tikzpicture}
}
\caption{Common subtrees of the Simplex Tree in Figure~\ref{fig:SimplexTreeExample}.}
\label{fig:SimplexTreeExampleCommon}
\end{figure}
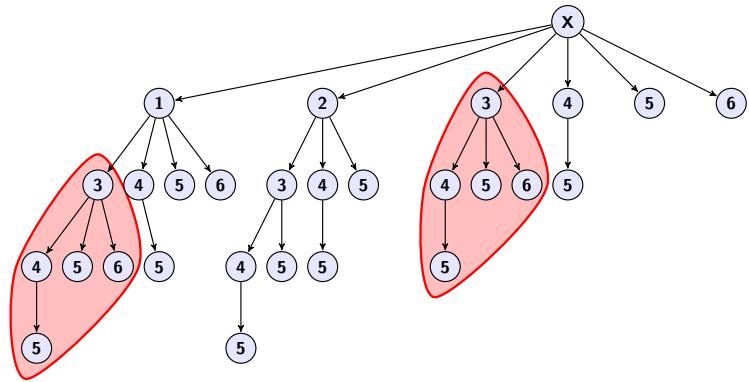

Answering simplex membership queries and other queries that only require traversing ST from root to leaves can be implemented in $\CoST$ exactly as in  ST \cite{SimplexTree}. Allowing upward traversal in ST is also possible (with additional pointers from children to parents), and this has been shown to improve the efficiency of some operations, such as face or coface retrieval. However, in $\CoST$, parents are not unique. To account for this, we mark the parents that were accessed, and use this to go back in the upward direction. This implies an additional storage of $\mathcal{O}(d\log n)$ while traversing, but a node (simplex) having many parents can assist to locate cofaces much faster.

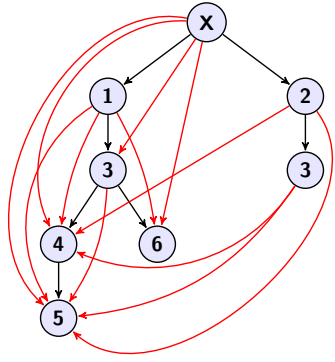
\begin{figure}[!h]
 \centering
 \resizebox{6cm}{!}{
 \begin{tikzpicture}[->,>=stealth',shorten >=0.5pt,auto,node distance=2cm,
  thick,main node/.style={circle,fill=blue!10,draw,font=\sffamily\large\bfseries}]
  \node[main node] (0) at (0,0) {X};
  \node[main node] (1) at (-2,-1.5) {1};
  \node[main node] (2) at (2,-1.5) {2};
  \node[main node] (13) at (-2,-3) {3};
  \node[main node] (23) at (2,-3) {3};
  \node[main node] (134) at (-3,-4.5) {4};
  \node[main node] (136) at (-1,-4.5) {6};
  \node[main node] (1345) at (-3,-6) {5};
      
  \path[every node/.style={font=\sffamily\small}]
    (0) edge node {} (1)
    (0) edge node {} (2)
    (1) edge node {} (13)
    (2) edge node {} (23)
    (13) edge node {} (134)
    (13) edge node {} (136)
    (134) edge node {} (1345)
    (0) edge [red] node {} (13)
    (0) edge [red, bend right = 60] node {} (134)
    (0) edge [red] node {} (136)
    (0) edge [red, bend right = 75] node {} (1345)
    (1) edge [red, bend right = 10] node {} (134)
    (1) edge [red, bend left = 10] node {} (136)
    (1) edge [red, bend right = 45] node {} (1345)
    (2) edge [red, bend left = 85] node {} (1345)
    (2) edge [red] node {} (134)
    (13) edge [red, bend left = 15] node {} (1345)
    (23) edge [red,bend left = 25] node {} (1345)    
    (23) edge [red, bend left = 40] node {} (134)  
     ;
     
\end{tikzpicture}}\caption{Compressed Simplex Tree of the Simplex Tree given in Figure~\ref{fig:SimplexTreeExample}. The extent of compression is demonstrated by the following: the edge $4-5$ which appears six times in the Simplex Tree of Figure~\ref{fig:SimplexTreeExample}, appears only once in the Compressed Simplex Tree}
\label{fig:CompressedSimplexTreeExample}
\end{figure}

Next, we will introduce an automaton perspective of the above compression and show how to  deduce the optimal compression algorithm for ST. We will also describe insertion and removal operations on $\CoST$ through the automaton perspective.

\subsection{Minimal Simplex Automaton}\label{S4}

A Deterministic Finite state Automaton (DFA) recognizing a language is defined by a set of states and labeled transitions between these states to detect if a given word is in a predefined language or not. $\ST$ can be seen as a DFA: let us define the set of $m$ states by $\mathcal{V}=\{\textrm{nodes of }\ST\}$. A transition from state $u$ to state $v$ is labeled by $a$ if and only if there is in $\ST$ an edge from $u$ to $v$, and $v$ contains the vertex $a$. We define the Simplex Automaton of $K$ (denoted by SA($K$)) as the automaton described above (cf. Figure~\ref{fig:SimplexAutomatonExample}).

\begin{figure}[!h]
\centering
\resizebox{12cm}{!}{
\begin{tikzpicture}[->,>=stealth',shorten >=0.5pt,auto,node distance=2cm,
  thick,main node/.style={circle,fill=blue!10,draw,font=\sffamily\large\bfseries}]

  \node[main node] (0) at (0,0) {};
  \node[main node] (1) at (-10,-2) {};
  \node[main node] (2) at (-6,-2) {};
  \node[main node] (3) at (-2,-2) {};
  \node[main node] (4) at (0,-2) {};
  \node[main node] (5) at (2,-2) {};
  \node[main node] (6) at (4,-2) {};
  \node[main node] (13) at (-11.5,-4) {};
  \node[main node] (14) at (-10.5,-4) {};
  \node[main node] (15) at (-9.5,-4) {};
  \node[main node] (16) at (-8.5,-4) {};
  \node[main node] (23) at (-7,-4) {};
  \node[main node] (24) at (-6,-4) {};
  \node[main node] (25) at (-5,-4) {};
  \node[main node] (34) at (-3,-4) {};
  \node[main node] (35) at (-2,-4) {};
  \node[main node] (36) at (-1,-4) {};
  \node[main node] (45) at (0,-4) {};
  \node[main node] (134) at (-13,-6) {};
  \node[main node] (135) at (-12,-6) {};
  \node[main node] (136) at (-11,-6) {};
  \node[main node] (145) at (-10,-6) {};
  \node[main node] (234) at (-8,-6) {};
  \node[main node] (235) at (-7,-6) {};
  \node[main node] (245) at (-6,-6) {};
  \node[main node] (345) at (-3,-6) {};
  \node[main node] (1345) at (-13,-8) {};
  \node[main node] (2345) at (-8,-8) {};
      
  \path[every node/.style={}]
    (0) edge [above] node {1} (1)
    (0) edge node {2} (2)
    (0) edge node {3} (3)
    (0) edge [] node {4} (4)
    (0) edge [below] node {5} (5)
    (0) edge node {6} (6)
    (1) edge [above] node [pos=0.8] {3} (13)
    (1) edge node [pos=0.6] {4} (14)
    (1) edge node [pos=0.8] {5} (15)
    (1) edge node {6} (16)
    (2) edge [above] node {3} (23)
    (2) edge node {4} (24)
    (2) edge node {5} (25)
    (3) edge [above] node {4} (34)
    (3) edge node {5} (35)
    (3) edge node {6} (36)
    (4) edge node {5} (45)
    (13) edge [above] node {4} (134)
    (13) edge node {5} (135)
    (13) edge node {6} (136)
    (14) edge node {5} (145)
    (23) edge [above] node {4} (234)
    (23) edge node {5} (235)
    (24) edge node {5} (245)
    (34) edge node {5} (345)
    (134) edge node {5} (1345)
    (234) edge node {5} (2345)
     ;
     
\end{tikzpicture}
}
\caption{Simplex Automaton of the simplicial complex in Figure~\ref{fig:SimplicialComplexExample}.}
\label{fig:SimplexAutomatonExample}
\end{figure}
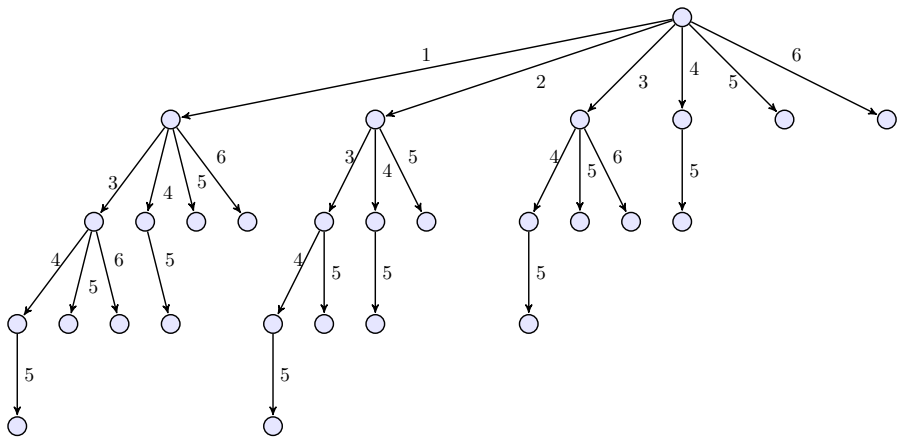

SA is basically  the same data structure as ST except that the labels are not put on the nodes but on the edges entering these nodes and thus, basic operations  in SA can be implemented as in ST. Also, by construction of SA, it is obvious that the number of states and transitions of SA are equal to the number of nodes and edges in ST respectively. 

It is known~\cite{AutomateMinimization2} that if a language $L$ is regular (accepted by a DFA) then,  $L$ has a unique minimal automaton.
DFA minimization is the task of transforming a given DFA into an equivalent DFA that has a minimum number of states. We represent the action of performing DFA minimization by $\mathcal{M}$. For any $K\in {\cal K}_\theta(n,k,d,m)$, let us define the
Minimal Simplex Automaton ($\mathcal{M}(\mathrm{SA})$) as the minimal
deterministic automaton which recognizes the language
$L_{K_\theta}$.
Compressing ST can be seen as DFA minimization since merging
identical subtrees corresponds to merging indistinguishable states in
the automaton. 
It is possible to get $\CoST$ from $\MSA$ by duplicating the states such that for each node, the labels of all its incoming edges are the same, and then by moving the labels from the edges to the next node.  
Also, 
it should be observed that the number of edges in $\CoST$ and the
number of transitions in $\MSA$ may not be the same.  The reason is that states in $\MSA$ having identical set of
outgoing paths can merge even when the incoming set of transitions are
different for each of these states, while such nodes in $\CoST$ would
not have merged. 

\begin{sloppypar}Algorithmic aspects of DFA minimization have been well studied. For instance,
Hopcroft's algorithm~\cite{HopcroftAlgorithm} minimizes an automaton
with $m$ transitions over an alphabet of size $n$ in
$\mathcal{O}(m\log m\log n)$ steps and needs at most
$\mathcal{O}(m\log n)$ space. This running time is shown
in~\cite{HopcroftAlgorithm} to be optimal over the set of regular
languages. Additionally, Revuz showed that acyclic automaton (which SA indeed is) can be minimized in linear time \cite{R92}. Also, in Appendix~\ref{Hopcroft} we describe an adapted Hopcroft's algorithm to optimally compress ST. In Figure~\ref{fig:SimplexMinimalAutomatonExample}, we
give the minimal automaton for the simplicial complex of
Figure~\ref{fig:SimplicialComplexExample}.\end{sloppypar}
\begin{figure}[!h]
\centering
\resizebox{7cm}{!}{
\begin{tikzpicture}[->,>=stealth',shorten >=0.5pt,auto,node distance=2cm,
  thick,main node/.style={circle,fill=blue!10,minimum size = 25pt,draw,font=\sffamily\large\bfseries}]
  \node[main node] (0) at (0,0) {};
  \node[main node] (1) at (-4,-1.5) {};
  \node[main node] (2) at (4,-1.5) {};
  \node[main node] (13) at (-2,-3) {};
  \node[main node] (23) at (2,-3) {};  
  \node[main node] (134) at (0,-4.5) {};
  \node[main node] (1345) at (0,-6) {};

  \path[every node/.style={}]
    (0) edge [above] node {1} (1)
    (0) edge node {2} (2)
    (0) edge [above] node {3} (13)
    (0) edge [left] node {4} (134)
    (0) edge [bend left = 23] node [pos = 0.3] {5,6} (1345)
    (1) edge node {3} (13)
    (1) edge [left, bend right = 30] node {4} (134)
    (1) edge [left, bend right = 40] node {5,6} (1345)
    (2) edge [above] node {3} (23)
    (2) edge [bend left = 30] node {4} (134)
    (2) edge [bend left = 40] node {5} (1345)
    (13) edge node {4} (134)
    (23) edge [left] node [pos = 0.2] {4} (134)    
    (13) edge [left, bend right = 15] node [pos = 0.7] {5,6} (1345)
    (134) [left] edge node {5} (1345)
     ;
     
\end{tikzpicture}
}
\caption{Minimal Simplex Automaton of the simplicial complex in Figure~\ref{fig:SimplicialComplexExample}.}
\label{fig:SimplexMinimalAutomatonExample}
\end{figure}
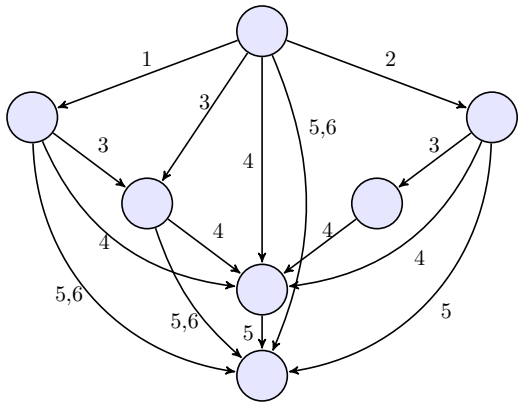

While there are delicate differences between
$\MSA$ and $\CoST$, we will see below that performing basic operations on $\MSA$ is not very different from the way it is done for $\CoST$.

\subsubsection{Operations on the Minimal Simplex Automaton}
\label{op_MSA}

The set of all paths originating from the root are the same in both ST and $\mathcal{M}(\mathrm{SA})$. All operations which involve only traversal along ST are performed with equal (if not better) efficiency in $\mathcal{M}(\mathrm{SA})$ as, for every such operation on ST, we start by traversing from the root. As an example, consider the operation of determining if a simplex $\sigma$ is in the complex. Let us adapt the algorithm described in \cite{SimplexTree} to $\mathcal{M}(\mathrm{SA})$. Note that there is a unique path from the initial state which identifies $\sigma$ in $\MSA$. If $\sigma=v_{\ell_0}-\cdots - v_{\ell_{d_\sigma}}$ then, from the initial state we go through $d_\sigma+1$ states by following the transitions $\ell_0,\ldots ,\ell_{d_\sigma}$ in that order. If at some point the requisite transition is not found then, declare that the simplex is not in the complex. Hence, performing all static operations i.e., all operations where we don't change the $\MSA$ in any way,  can be carried out in very much the same way in both  $\mathcal{M}(\mathrm{SA})$ and ST, although it might be more efficient for $\MSA$ as discussed earlier for $\CoST$ in subsection~\ref{S31}.

Addition and deletion of simplexes can be trickier in
$\mathcal{M}(\mathrm{SA})$ than in ST. We can always expand
$\mathcal{M}(\mathrm{SA})$ to SA, (locally) perform the operation and
recompress. If the nature of the operation is itself expensive
(i.e., worst-case $\Omega(m)$) then, the worst-case cost does not change, which is indeed the case for operations such as removal of cofaces, edge contraction and elementary collapses.

\subsubsection{Complexity Measure of Size  for the Minimal Simplex Automaton}
Our complexity measure in the paper would be minimizing SA to obtain
$\MSA$ with minimum number of states. We know from Myhill-Nerode
theorem that there is a unique minimal DFA. Thus, if there was an 
automaton  with less transitions than $\MSA$ then, it should 
have more states than $\MSA$. Let $A$ be an automaton with a minimal 
number of transitions and let us run 
Hopcroft's algorithm on $A$. Since, at no point in Hopcroft's 
algorithm, we increase the number of transitions,  the output has to be an automaton whose numbers  
of states and transitions are both minimal. It follows that the unique automaton output 
by Hopcroft's algorithm must have both a minimal number of states and a minimal number of transitions. This is proved more formally as Proposition 1 in \cite{MMR13}. 

In the rest of the  paper, \textbf{we denote by $\mathbf{\lvert \SA\rvert}$ and $\mathbf{\lvert\MSA\rvert}$, the number of states in $\mathbf{\SA}$ and 
$\mathbf{\MSA}$ respectively}. 
While we consider the number of states as a complexity measure of the 
size of 
$\SA$ and $\MSA$, we will still use 
the number of edges as a complexity measure of the size of  $\ST$ 
and $\CoST$ because the bounds obtained relating the number of edges in $\CoST$ and the number of nodes in $\CoST$ are not satisfactory.
The size of $\MSA$ will be discussed in detail in section~\ref{S6}, after introducing a new data structure in the next section. This is done to put the impact of compression in better perspective.

\section{Maximal Simplex Tree}\label{S5}
We define the {\em Maximal Simplex Tree} MxST$(K)$ 
as an induced subgraph of ST($ K$). All leaves in the Simplex Tree corresponding to maximal simplices and the nodes encountered  on the path from the root to these leaves are kept in the Maximal Simplex Tree and the remaining nodes are removed. 
MxST$(K)$ is constructed as follows. We start from an empty tree and then insert the words representing the maximal simplices of $K$. Specifically, when inserting the word $[\sigma] = [ \ell_0 ,\cdot\cdot\cdot, \ell_j ]$, we start from the root, and follow the path containing successively all labels $\ell_0 , \cdot\cdot\cdot , \ell_i$, where $[ \ell_0 ,\cdot\cdot\cdot, \ell_i ]$ denotes the longest prefix of $[\sigma]$ already stored in the Maximal Simplex Tree. We then append to the node representing $[ \ell_0 ,\cdot\cdot\cdot, \ell_i ]$ a path consisting of the nodes storing labels $\ell_{i+1} ,\cdot\cdot\cdot, \ell_j$.  
Figure~\ref{fig:MaximalSimplexTreeExample}  shows the MxST of the simplicial complex given in Figure~\ref{fig:SimplicialComplexExample}. In $\MxST(K)$, the leaves are in bijection with the maximal simplices of $K$.  Any path starting from the root provides the vertices of a simplex of $K$. However, in general, not all simplices in $K$ can be associated to a path from the root in $\MxST (K)$.
\begin{figure}[!h]
\centering
\resizebox{4cm}{!}{
 
\begin{tikzpicture}[->,>=stealth',shorten >=0.5pt,auto,node distance=2cm,
  thick,main node/.style={circle,fill=blue!10,draw,font=\sffamily\large\bfseries}]
  \node[main node] (0) at (0,0) {X};
  \node[main node] (1) at (-2,-1.5) {1};
  \node[main node] (2) at (2,-1.5) {2};
  \node[main node] (13) at (-2,-3) {3};
  \node[main node] (23) at (2,-3) {3};
  \node[main node] (134) at (-3,-4.5) {4};
  \node[main node] (136) at (-1,-4.5) {6};
  \node[main node] (234) at (2,-4.5) {4};
  \node[main node] (1345) at (-3,-6) {5};
  \node[main node] (2345) at (2,-6) {5};
      
  \path[every node/.style={font=\sffamily\small}]
    (0) edge node {} (1)
    (0) edge node {} (2)
    (1) edge node {} (13)
    (2) edge node {} (23)
    (13) edge node {} (134)
    (13) edge node {} (136)
    (23) edge node {} (234)
    (134) edge node {} (1345)
    (234) edge node {} (2345)
     ;

\end{tikzpicture}
}
\caption{Simplicial Complex of Figure~\ref{fig:SimplicialComplexExample} represented using Maximal Simplex Tree.}
  \label{fig:MaximalSimplexTreeExample}
\end{figure}
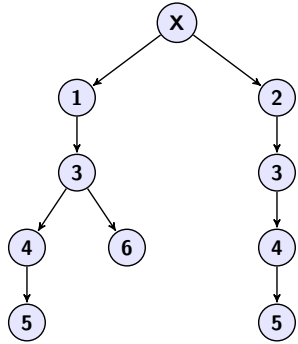

By the above construction of MxST, we add at most $d+1$ nodes per maximal simplex. Hence, MxST$(K)$ has at most $k(d+1)+1$ nodes and at most $k(d+1)$ edges (therefore requiring $\mathcal{O}(kd\log n)$ space). \textbf{We denote by $\mathbf{\lvert \MxST\rvert}$ the number of edges in $\mathbf{\MxST}$.} Since MxST is a factor of ST, the size of MxST is usually much smaller than the size of ST. Further, it always meets the lower bound of Theorem~\ref{Lowerbound}, making it a compact data structure. We discuss below the efficiency of MxST in answering queries.

\subsection{Operations on the Maximal Simplex Tree}

In \cite{SimplexTree} some important basic operations (with appropriate motivation) have been discussed for ST. We will bound now the cost of these operations using MxST. Note that any node in MxST$(K)$ has $\mathcal{O}(n)$ children and  we can search  for a particular child of a node in time $\mathcal{O}(\log n)$ (using red--black trees). We summarize in Table~\ref{tab:OperationsofMxST}, the asymptotic cost of some basic operations (details of this analysis is provided in Appendix~\ref{MxSTOperations}) and note that it is already better than ST for some operations. 

\begin{table}[!h]
\begin{center}
\begin{tabular}{|p{7.4cm}|p{2.9cm}|}\hline
\multicolumn{1}{|c|}{Operation}&\multicolumn{1}{|c|}{Cost}\\\hline
Identifying maximal cofaces of a simplex $\sigma$ / Determining membership of $\sigma$ &$\mathcal{O}(kd\log n)$  \\\hline
Insertion of a maximal simplex $\sigma$ & $\mathcal{O}(kd_{\sigma}\log n)$\\\hline
Removal of a face & $\mathcal{O}(kd\log n)$ \\\hline
Elementary Collapse & $\mathcal{O}(kdd_\sigma\log n)$ \\\hline
Edge Contraction & $\mathcal{O}(kd(k+\log n))$
\\\hline
\end{tabular}
\end{center}
\caption{Cost of performing basic operations on MxST.}
  \label{tab:OperationsofMxST}
\end{table}

Moreover, we can augment the structure of MxST without paying for a lot of extra memory space, so that the above operations can be performed more efficiently. This is explained in section~\ref{S7}. 

\section{Results on  Minimization of the Simplex Automaton}\label{S6}
In this section we will see some results, both theoretical and experimental on the minimization of SA.

\subsection{Bounds on the Number of States of the Minimal Simplex Automaton}

We observe below that the number of leaves in ST is large and grows linearly w.r.t.\ the number of nodes in ST. The proof follows by a simple induction argument on $n$.
\begin{lemma}
  \label{Intuitionof CST1}
If $K\in {\cal K}(n,k,d,m)$ then, at least half the nodes of $\ST(K)$ are leaves.
\end{lemma}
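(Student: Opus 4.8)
The plan is to induct on the number of vertices $n$ of $K$, exploiting the recursive self-similarity of the Simplex Tree: the subtree hanging below a depth-one node is itself the Simplex Tree of a smaller complex. Throughout I write $\lambda(T)$ for the number of leaves and $\iota(T)$ for the number of internal (non-leaf) nodes of a tree $T$, so that the claim is exactly $\lambda(\ST(K)) \ge \iota(\ST(K))$ (since $\lambda+\iota=m$).

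First I would set up the decomposition. The root of $\ST(K)$ has exactly $n$ children, one node $\{i\}$ for each vertex $i$, and these split $\ST(K)$ into the root together with the $n$ rooted subtrees $T_1,\dots,T_n$, where $T_i$ is the subtree below $\{i\}$. The key structural observation is that $T_i$ is precisely the Simplex Tree of the complex $L_i = \{A \subseteq \{i+1,\dots,n\} : A \cup \{i\} \in K\}$ on the vertex set $\{j>i : \{i,j\}\in K\}$; this is a simplicial complex because $K$ is downward closed, and it has at most $n-i$ vertices, hence strictly fewer than $n$. I would also check the easy but essential point that the leaf/internal status of each node of $T_i$ is the same whether the node is viewed inside $\ST(K)$ or inside $\ST(L_i)$, since it depends only on the node's number of children, which the two views share. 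This lets the induction hypothesis be applied verbatim to each $T_i$.

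Summing over the subtrees gives $\lambda(\ST(K)) = \sum_{i=1}^n \lambda(T_i)$ and $\iota(\ST(K)) = 1 + \sum_{i=1}^n \iota(T_i)$, the extra $1$ being the root. Applying the induction hypothesis $\lambda(T_i)\ge\iota(T_i)$ to every subtree only yields $\lambda(\ST(K)) \ge \iota(\ST(K)) - 1$, which is off by one: the root is an internal node that this naive count fails to pay for. The crucial step that closes the gap is to treat $T_n$ separately. Since $n$ is the largest label, the node $\{n\}$ can have no child (any child would carry a label $>n$), so $T_n$ is a single leaf with $\lambda(T_n)=1$ and $\iota(T_n)=0$, a surplus of exactly one leaf. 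Combining $\lambda(T_n)=\iota(T_n)+1$ with the induction hypothesis on $T_1,\dots,T_{n-1}$ recovers the missing unit and yields $\lambda(\ST(K)) \ge \iota(\ST(K))$. The base case ($n=0$, the single-node tree, or $n=1$) is immediate, and the subtrees $L_i$ with no vertices are handled by the same single-leaf observation.

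I expect the main obstacle to be precisely this off-by-one. The most natural inductions each lose track of the single internal root node: summing the subtrees drops it directly, while the alternative of deleting the top vertex $n$ and re-attaching its leaves is in fact worse, because re-attaching an $n$-labelled leaf onto a former leaf of the smaller tree turns that node internal and can destroy the inequality. Isolating $T_n$ as a guaranteed extra leaf is the clean way to absorb the root's cost, so I would organize the whole argument around that observation.
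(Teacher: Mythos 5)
Your proof is correct, and it is the same basic strategy as the paper's --- induction on the number of vertices, using the fact that the subtree hanging under a depth-one node of $\ST(K)$ is itself the Simplex Tree of a smaller complex (the link of that vertex) --- but the decomposition is genuinely different, and that changes where the work goes. The paper splits $\ST(K)$ into just two pieces: the subtree $\ST_1$ under node $1$, and everything else. The point of that split is that ``everything else'' (the root together with the subtrees under nodes $2,\dots,n$) is itself a complete Simplex Tree, namely $\ST$ of the restriction of $K$ to $\{2,\dots,n\}$, so the induction hypothesis applies to it as a whole and the cost of the root is absorbed automatically; no further idea is needed. You instead unroll the recursion completely into the root plus all $n$ link subtrees $T_1,\dots,T_n$, which orphans the root as an unpaid internal node and forces you to locate a compensating surplus leaf; your observation that $T_n$ is necessarily a single leaf (no label exceeds $n$, so node $n$ has no children) supplies exactly that unit and is the one ingredient your argument needs that the paper's does not. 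Both proofs are sound. The paper's two-way split is slightly more economical because the off-by-one never arises, while your version makes explicit a structural fact --- the largest-labelled vertex always contributes a guaranteed leaf --- that is only implicit in the paper's accounting, and your remark that re-attaching leaves after deleting the top vertex can turn former leaves internal correctly identifies why the other natural induction fails.
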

\begin{proof}
The proof is by induction on the number  $n$ of vertices. When $n=1$, we have two simplices (including the empty simplex) and one leaf. Now assume (induction hypothesis) that for all simplicial complexes on $i$ vertices the corresponding ST has at least half of its nodes as leaves. Consider a ST on the vertex set $\{1,2,...,i+1\}$ containing $m$ simplices. Consider the subtree under node 1 say $\ST_1$. $\ST_1$ represents a simplicial complex on the vertex set $\{2,3,...,i+1\}$ with node 1 acting as the root. Suppose $\ST_1$ has $m_1$ nodes. Then, by the induction hypothesis, $\ST_1$ has at least $\frac{m_1}{2}$ leaves. Now consider the rest of $\ST$ which can also be independently seen as a Simplex Tree $\ST_2$ on the vertex set $\{2,...,i+1\}$. Again, by the induction hypothesis, $\ST_2$ has at least $\frac{m-m_1}{2}$ leaves. Thus $\ST$ has at least $\frac{m}{2}$ leaves. 
\end{proof}

Differently from ST, $\mathcal{M}(\mathrm{SA})$ has only one leaf. The following lemma shows that $\mathcal{M}(\mathrm{SA})$ has at most half the number of nodes of ST plus one (follows directly from Lemma~\ref{Intuitionof CST1}).
\begin{lemma}\label{statesBound}
  For any $K\in {\cal K} (n,k,d,m)$, $\mathcal{M}(\mathrm{SA}(K))$ has at most $\frac{m}{2}+1$ states.
\end{lemma}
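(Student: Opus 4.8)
The plan is to combine the fact that $\ST(K)$ has exactly $m$ nodes with Lemma~\ref{Intuitionof CST1}, which guarantees that at least $m/2$ of these nodes are leaves. Since $\SA(K)$ has one state per node of $\ST(K)$, it has exactly $m$ states, of which at least $m/2$ correspond to leaves of $\ST$ and hence at most $m/2$ correspond to internal (non-leaf) nodes. The key structural remark is that a leaf of $\ST$ is a node with no children, so the corresponding state of $\SA$ is a final state with no outgoing transition.

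First I would observe that all leaf-states of $\SA$ are indistinguishable: from each of them the only accepted continuation is the empty word, so they all have exactly the same right language. Consequently, DFA minimization merges every one of these (at least $m/2$) leaf-states into a single state of $\MSA$. Since minimization never increases the number of states (it only identifies equivalence classes), the at most $m/2$ non-leaf states of $\SA$ contribute at most $m/2$ states to $\MSA$, and adding the single merged leaf-state yields $\lvert\MSA\rvert \le \tfrac{m}{2}+1$.

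The only step requiring any care is checking that every leaf of $\ST$ genuinely gives a state whose right language is exactly the singleton containing the empty word, so that these states all fall into one equivalence class; this is immediate once one notes that leaves are accepting and have no outgoing transitions. Everything else is the leaf/non-leaf split supplied by Lemma~\ref{Intuitionof CST1} together with the monotonicity of minimization, after which the arithmetic $m - \tfrac{m}{2} + 1 = \tfrac{m}{2}+1$ is routine. Thus the main (and essentially only) obstacle is the collapse-of-leaves observation, which is the reason the bound ``follows directly'' from the preceding lemma.
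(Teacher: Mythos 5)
Your proposal is correct and follows essentially the same route as the paper's own proof: use Lemma~\ref{Intuitionof CST1} to get at least $\frac{m}{2}$ leaf states of $\SA$, observe they are all equivalent (right language $\{\varepsilon\}$) and hence merge into one state under minimization, and conclude $m-\frac{m}{2}+1=\frac{m}{2}+1$. The only difference is that you spell out the indistinguishability of leaf states explicitly, which the paper leaves implicit.
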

\begin{proof}
  Since there are $m$ nodes in ST, SA has exactly $m$ states and at
  least $\frac{m}{2}$ of them are leaves (a state without outgoing
  transitions). The $\frac{m}{2}$ leaves can be merged. Consequently,
  the number of states of $\mathcal{M}(SA(K))$ is at most $m-\frac{m}{2}+1=\frac{m}{2}+1$.
\end{proof}

Similar to $\MSA$, we may define $\mathcal{M}(\mathrm{MxSA}(K))$ as the minimal DFA which recognizes only maximal simplices as words. Then, the following inequality follows:
\begin{lemma}\label{MxSAstates}
  For any \textbf{pure} simplicial complex $K\in {\cal K} (n,k,d,m)$,
  $\lvert\mathcal{M}(\mathrm{SA}(K))\rvert 
  \ge\lvert\mathcal{M}(\mathrm{MxSA}(K))\rvert$. 
\end{lemma}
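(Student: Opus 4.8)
The plan is to argue through the Myhill--Nerode characterization of minimal DFAs: the number of states of the minimal (acyclic) automaton for a language $L$ equals the number of distinct residual languages $u^{-1}L=\{w : uw\in L\}$ as $u$ ranges over input words, the single empty residual accounting for the unique dead state. Writing $L_{\mathrm{SA}}$ for the language of all simplices of $K$ and $L_{\mathrm{MxSA}}$ for the language of its maximal simplices, I would first record two structural facts. Since $K$ is closed under taking faces, $L_{\mathrm{SA}}$ is prefix-closed and consists of all face-words; and since $K$ is \emph{pure}, every maximal simplex has dimension exactly $d$, so $L_{\mathrm{MxSA}}$ is precisely the set of length-$(d+1)$ words of $L_{\mathrm{SA}}$. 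Consequently, for every word $u$ one has $u^{-1}L_{\mathrm{MxSA}}=\{\,w\in u^{-1}L_{\mathrm{SA}} : |w|=d+1-|u|\,\}$, i.e.\ the $\mathrm{MxSA}$-residual is obtained from the $\mathrm{SA}$-residual by keeping only its longest admissible slice.

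The heart of the argument is to restrict attention to the set $P$ of words that are prefixes of some maximal simplex. These are exactly the words with non-empty $\mathrm{MxSA}$-residual, and each of them, being a face-word, lies in $L_{\mathrm{SA}}$ and hence has non-empty $\mathrm{SA}$-residual (it contains the empty word). On $P$ I would compare the equivalence relation ``$u^{-1}L_{\mathrm{SA}}=v^{-1}L_{\mathrm{SA}}$'' with ``$u^{-1}L_{\mathrm{MxSA}}=v^{-1}L_{\mathrm{MxSA}}$'' and show the former refines the latter. The enabling observation is that for $u\in P$ the length $|u|$ is recoverable from the $\mathrm{SA}$-residual alone: because $u$ is a prefix of a $d$-dimensional simplex (purity), the residual $u^{-1}L_{\mathrm{SA}}$ contains a word of length $d+1-|u|$, while no simplex has more than $d+1$ vertices, so that length is also the maximum. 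Thus if $u,v\in P$ have equal $\mathrm{SA}$-residuals, their residuals have equal maximal word-length, forcing $|u|=|v|$; and then, since the $\mathrm{MxSA}$-residual is the top slice at the common threshold $d+1-|u|=d+1-|v|$, the two $\mathrm{MxSA}$-residuals coincide as well.

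With the refinement in hand the counting is immediate. The number of live states of $\mathcal{M}(\mathrm{MxSA}(K))$ equals the number of $\mathrm{MxSA}$-classes of $P$, which by refinement is at most the number of $\mathrm{SA}$-classes of $P$; and each such $\mathrm{SA}$-class is a genuine live state of $\mathcal{M}(\mathrm{SA}(K))$, so this is at most the total number of live states of $\mathcal{M}(\mathrm{SA}(K))$. Each minimal automaton carries exactly one dead state (the class of words with empty residual, which is non-trivial here since e.g.\ non-increasing words are always rejected), so adding $1$ to both sides preserves the inequality and yields $\lvert\mathcal{M}(\mathrm{SA}(K))\rvert\ge\lvert\mathcal{M}(\mathrm{MxSA}(K))\rvert$.

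The main obstacle I anticipate is resisting the tempting but \emph{false} claim that $\mathrm{SA}$-equivalence refines $\mathrm{MxSA}$-equivalence globally. For the full $d$-simplex on $\{1,2,3\}$, the faces with words ``$2$'' and ``$12$'' have the same $\mathrm{SA}$-residual $\{\epsilon,3\}$ yet different $\mathrm{MxSA}$-residuals ($\emptyset$ versus $\{3\}$), so the refinement genuinely fails for words outside $P$. The delicate point is therefore to confine the comparison to prefixes of maximal simplices and to invoke purity precisely where it is needed, namely to read $|u|$ off the maximal word-length of $u^{-1}L_{\mathrm{SA}}$; without purity this top length no longer determines $|u|$ and the refinement, and hence the whole scheme, collapses.
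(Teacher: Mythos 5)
Your proof is correct, and it reaches the inequality by a genuinely different route than the paper. The paper argues constructively: it assigns to each state of the simplex automaton a depth (the length of the longest path from the initial state), builds an auxiliary automaton $B$ on the \emph{same} state set whose transitions are exactly those that increase depth by one and whose accepting states are the depth-$(d+1)$ sinks, and then shows that $B$ recognizes precisely the maximal simplices --- purity enters to rule out a depth jump along the path of a maximal simplex, since such a jump would yield a simplex of dimension greater than $d$. The bound follows because the minimal DFA for the maximal-simplex language has no more states than $B$. You instead count Nerode residuals directly and prove that, on the set $P$ of prefixes of maximal-simplex words, $\mathrm{SA}$-equivalence refines $\mathrm{MxSA}$-equivalence, the key being that $\lvert u\rvert$ is recoverable as $d+1$ minus the maximal word length in $u^{-1}L_{\mathrm{SA}}$. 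The two arguments use purity at the same pressure point --- your recoverability of $\lvert u\rvert$ from the residual is the Myhill--Nerode shadow of the paper's claim that transitions along a maximal simplex raise depth by exactly one --- but your version avoids constructing an auxiliary automaton, and your explicit counterexample (the words $2$ and $12$ in the full triangle) pinpoints why the comparison must be confined to $P$, a subtlety the paper leaves implicit. Two minor remarks: the paper's automata are partial and carry no dead state (cf.\ its count of at most $\frac{m}{2}+1$ states), so your final ``add one dead state to each side'' step is unnecessary under that convention, though harmless; and the paper's proof as written builds $B$ on the states of $\mathrm{SA}(K)$ rather than of $\mathcal{M}(\mathrm{SA}(K))$, which strictly speaking only bounds $\lvert\mathcal{M}(\mathrm{MxSA}(K))\rvert$ by $\lvert\mathrm{SA}(K)\rvert$ --- your residual-counting formulation sidesteps that imprecision entirely.
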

\begin{proof}
 Let $K$ be a pure simplicial complex. Let $s_i$ be the initial state and $S_f$ be the set of the states of
  outdegree zero. To each state we associate its depth, which is the
  the length of the longest directed path from $s_i$ to this state.

  We define another automaton $B$. The states of $B$ are the states of
  ${\SA(K)}$. The state $s_i$ is still the initial state and, the new final states are 
    $\{s\in S_f\mid \textrm{depth}(s)=d+1\}$. Finally, there is a transition between
  states $u$ and $v$ labeled by $a$ if and only if this transition
  exists in ${\SA(K)}$ and if $\textrm{depth}(v)=\textrm{depth}(u)+1$.
  Let us prove that $B$ recognizes exactly the maximal simplices of
  $K$. 

  Let $w$ be a word recognized by $B$. Then $w$ is a word of length
  $d+1$ which was recognized by ${\SA(K)}$. Then, it corresponds to a simplex
  of $K$ of dimension $d$. It is a maximal simplex of $K$.

  On the other direction, let $\sigma$ be a maximal simplex of $K$. Hence
  the word $\sigma$ is accepted by ${\SA(K)}$ and of length $d+1$. If any transition which appears
  during this detection appears also in $B$ then, $\sigma$ is also accepted
  by $B$. Thus, let us assume for a contradiction that during the detection of the word
  $\sigma$, the simplex automaton ${\SA(K)}$ uses a transition from a state $u$ to a state
  $v$ such that $\textrm{depth}(v)\neq \textrm{depth}(u)+1$ (let us
  choose the first transition where it happens). By
  definition of the depth of a state, we get $\delta_1=\textrm{depth}(v) >
  \textrm{depth}(u)+1=\delta_2+1$. This means that there exists a word $w$ of length
  $\delta$ such that by reading $w$ from the initial state, we arrive
  into $v$. Then, $w,\sigma_s$ (where $\sigma_s$ is the suffix of $\sigma$ of length
  $d-\delta_2$) is also accepted by ${\SA(K)}$. Consequently $K$
  contains a simplex of dimension $d+\delta_1-\delta_2-1>d$ and we have reached a contradiction.
\end{proof}

In fact, one can prove that for a large class of simplices the
equality does not hold. For instance, consider ${\cal
  K^\prime}\subset {\cal K} (n,k,d,m)$ such that for any $K\in{\cal
  K^\prime}$ we have that there exists two maximal simplices which
have different first letters (i.e., when the simplices are treated as
words) but have the same letter at position $i$, for some $i$ 
that  is not the last position. For this subclass the equality does
not hold. Observe also that Lemma~\ref{MxSAstates} holds only for pure
simplicial complexes because, if all complexes were allowed then, we
will have complexes like in Example~\ref{notPure} where
$\lvert\mathcal{M}(\mathrm{SA}(K))\rvert 
  <\lvert\mathcal{M}(\mathrm{MxSA}(K))\rvert $. 
  
  \begin{example}\label{notPure}
  Consider the simplicial complex  on seven vertices given by the following maximal simplices: a 4-cell 1--2--3--6--7, two triangles 2--3--5 and 4--6--7 and an edge 4--5.
  \end{example}

\subsection{Conditions for Compression}
We would like to analyze two possible sources of compression in ST. A first type of compression may happen when a simplex $\sigma$ belongs to several maximal simplices and its vertices appear as the last vertices of those maximal simplices.
Then compression will factorize $\sigma$ so that it will appear only once as a common suffix of several words.  A second type of compression occurs when considering a single maximal simplex. Here too, ST stores many different words with common suffixes and compression will factorize these common suffixes.
  Intuitively, the first type captures compression solely in MxST 
(i.e., because of the input and the labeling on vertices we have 
defined) and the second source analyzes possible compression because 
of the rich structure of ST.   
Now, we will see a result which guarantees compression for pure simplicial complexes regardless of
the  labeling of the vertices:
\begin{lemma}\label{boundOnCompression}
For any \textbf{pure} simplicial complex $K\in {\cal K} (n,k,d,m)$, we have that $\lvert \MSA\rvert$ is always less than $\lvert \SA\rvert$ when $k<d$ and $d \geq 2$. 
\end{lemma}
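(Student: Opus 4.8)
The plan is to reduce the claim to counting leaves, since strict compression follows the moment two distinct states of $\SA$ turn out to be equivalent. Recall that $\lvert\SA\rvert=m$ and that in $\SA$ every state is accepting (every node of $\ST$ is a genuine face of $K$). Hence the right language of a state reached by a word $w$ is exactly the set of increasing suffixes $s$ for which $w\cdot s$ is again a face of $K$, and two states merge in $\MSA$ precisely when these suffix sets coincide. In particular every leaf of $\ST$ (a state with no outgoing transition) has right language $\{\epsilon\}$, so all leaves are mutually equivalent and collapse to a single state of $\MSA$.

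First I would invoke the structural estimates already available. Since $K$ is pure of dimension $d\ge 2$, it contains at least one maximal $d$-simplex, whose $2^{d+1}\ge 8$ subfaces (the empty face included) are pairwise distinct faces of $K$; therefore $m\ge 2^{d+1}>2$. By Lemma~\ref{Intuitionof CST1} at least $m/2\ge 4$ of the $m$ states are leaves, and by the previous paragraph they all merge, so Lemma~\ref{statesBound} gives $\lvert\MSA\rvert\le \tfrac{m}{2}+1$. Combining, $\lvert\MSA\rvert\le \tfrac{m}{2}+1<m=\lvert\SA\rvert$, which already proves strict inequality for every labeling. Note that this argument uses only purity and $d\ge 2$, so it holds a fortiori under the hypothesis $k<d$.

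What the hypothesis $k<d$ really buys, and what I would develop to match the ``second source of compression'' discussed above, is an \emph{interior} merge — two distinct non-leaf states that coincide — witnessing suffix sharing inside a single maximal simplex rather than the trivial collapse of leaves. The approach is to fix a maximal simplex $\sigma=\{v_0<\cdots<v_d\}$ and use the explicit description $R(w)=\bigcup_{\sigma_i\supseteq w}\{\text{increasing words over }\sigma_i\cap\{x:x>\mathrm{last}(w)\}\}$. For a face $w\subseteq\sigma$ whose only containing maximal simplex is $\sigma$ itself, this collapses to $R(w)=\{\text{words over }\sigma\cap\{x>\mathrm{last}(w)\}\}$, which depends on $w$ only through $\mathrm{last}(w)$; hence any two such faces sharing a last label (for instance $\{v_0,\dots,v_{d-1}\}$ and $\{v_1,\dots,v_{d-1}\}$, both ending in $v_{d-1}$ with right language $\{\epsilon,[v_d]\}$) would merge. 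The main obstacle is precisely controlling interference: another maximal simplex containing such a face could enlarge its right language and destroy the coincidence. Here is where $k<d$ should enter — because there are fewer than $d$ maximal simplices while $\sigma$ has $d+1$ facets, and each other maximal simplex can share at most one facet of $\sigma$ (two facets already span all of $\sigma$), a counting argument should force enough large faces of $\sigma$ to be contained in $\sigma$ alone, yielding an equivalent pair of interior states independently of the labeling. Turning this facet count into two equivalent faces with a common last label and a nonempty suffix set is the delicate step I would expect to occupy most of the work.
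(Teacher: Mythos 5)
Your second paragraph is a complete and correct proof of the literal statement, but it travels a completely different road from the paper, and the difference is worth dwelling on. You observe that a pure complex of dimension $d\ge 2$ has $m\ge 2^{d+1}>2$ simplices, so Lemma~\ref{statesBound} (the collapse of the $\ge m/2$ leaves guaranteed by Lemma~\ref{Intuitionof CST1}) already yields $\lvert\MSA\rvert\le \frac{m}{2}+1<m=\lvert\SA\rvert$; as you note, this never uses $k<d$, which exposes that the formal statement is weaker than what the lemma is meant to capture. The paper's proof ignores the leaf collapse entirely and instead exhibits two equivalent \emph{non-leaf} states inside a single maximal simplex $\sigma=v_1\cdots v_{d+1}$: it sets $\nu=\{v_1,\dots,v_{d-1}\}$, projects the maximal simplices containing $v_d$ onto $\nu$ to get $P_\nu$, uses $\lvert P_\nu\rvert\le k<d$ to find a codimension-one subset $b\subset\nu$ with $b\notin P_\nu$, and shows the states reached by $\nu\cup v_d$ and $b\cup v_d$ have identical right languages. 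That buys a strictly stronger conclusion — an interior merge beyond the universal leaf identification — which is the content that Example~\ref{NFAvsDFA} ($k=d$) shows is close to tight. Your third paragraph gropes toward exactly this (you even name essentially the paper's pair, a facet and a codimension-two face sharing a last label), but you leave the interference-control step open, and the counting you propose would not close it: counting \emph{facets} of $\sigma$ contained in no other maximal simplex only produces states whose right language is $\{\epsilon\}$ (a facet ending in the top vertex of $\sigma$ and lying in $\sigma$ alone is a leaf), so it degenerates back into the trivial leaf merge. The paper's fix is to run the pigeonhole one level deeper, on the $d-1$ codimension-one subsets of $\nu$ against the at most $k-1$ nontrivial elements of $P_\nu$, which is precisely where $k<d$ enters. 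So: correct for the statement as written, by a more elementary route that trivializes it; incomplete for the statement as intended.
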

\begin{proof}
Let $K$ be a pure simplicial complex. Let $\sigma=v_1,\ldots,v_d,v_{d+1}$ be a maximal simplex. Let us define
  $\nu =\{v_1,\ldots,v_{d-1}\}$ and, 
 $
    M=\{m\in K \mid m\textrm{ is
      maximal and }v_d\in m \}.
 $
  We also define
  $P_\nu\subseteq \mathcal{P}(\nu)$ as the projection of $M$ on to $\nu$, which is more formally written as:
  $$
    P_\nu=\{a\subseteq \nu \mid \exists m\in M, a=m\cap \nu\}.
  $$
  We notice that $\nu \in P_\nu$.
  Since $d > k\geq \lvert M\rvert \geq
  \lvert P_\nu\rvert$, it follows that there exists $b\subseteq \nu$
  such that $\lvert b\rvert =\lvert \nu\rvert -1$ and
  which is not in $P_\nu$. Let $s_\nu$ and $s_b$ be the states in ${\SA(K)}$
  reached by reading the words $\nu\cup v_d$ and $b \cup v_d$. As the language is
  closed by subwords and as $b\subseteq \nu$, any
  accepting word from the state $s_\nu$ is also an accepting word from
  the state $s_b$. Reciprocally, if $w$ is an accepting word from the
  state $s_b$ then, $b\cup v_d\cup w$ is a face of a maximal simplex
  $m$. The projection of $m$ on $\nu$ contains $b$ and by definition
  of $b$ is strictly larger. Hence $\nu \subseteq m$, and so, $\nu\cup
  v_d \cup w\in K$. Consequently the states $s_\nu$ and $s_b$ are
  equivalent and can be merged.
\end{proof}

In fact, the above result is close to tight:  in Example~\ref{NFAvsDFA},
we have a pure simplicial complex with  $k=d$ and $\lvert \ST\rvert=\lvert \mathcal{C}(\ST)\rvert$. We remark here that in cases of impossibility of 
compression we will analyze the compression of ST rather than the minimization of SA through out this section because analyzing ST provides better insight into the combinatorial structures which hinder compression.

Intuitively, it seems natural that if the given simplicial complex has a large number of maximal simplices then, regardless of the labeling we should be able to compress some pairs of nodes in MxST. However, Example~\ref{noExternalCompression} says otherwise.
\begin{example}\label{noExternalCompression}Consider the simplicial complex on $2n$ vertices of dimension $n/2$ defined by the set of maximal simplices given by: 
$$\left\{g(i)\cup \left\{g^r(i)+n\right\}\bigg| i\in \left\{1,2,\dots ,\binom{n}{n/2}\right\} ,r\in\left\{1,2,\dots ,n/2\right\}\right\}$$ 
where $g$ is a bijective map from $\{1,2,\dots ,\binom{n}{n/2}\}$ to
the set of all simplices on $n$ vertices of dimension $n/2 -1$ and $g^r$ corresponds to picking the $r^{\textrm{th}}$ vertex (in lexicographic order). 
\end{example} 
Here $k=\frac{n}{2}\binom{n}{n/2}\approx 2^{n-\frac{1}{2}}\sqrt{n/\pi}$ and there is no compression in MxST. Also note that $\lvert \CoMST\rvert<\lvert \MxST\rvert$ does not imply $\lvert \CoST\rvert<\lvert \ST\rvert$ as can be seen in Example~\ref{MxSTnotST}.

\begin{example}\label{MxSTnotST}
Consider the simplicial complex on seven vertices given by the maximal simplices: tetrahedron 1-2-4-6 and three triangles 2-4-5, 3-4-5 and 1-4-7.
\end{example}

In both Examples~\ref{noExternalCompression}~and~\ref{NFAvsDFA}, we saw simplicial complexes of large
dimension which cannot be compressed, but this is due to the way the vertices were labeled. Now, we state a lemma which says that there is always a labeling which ensures compression.
\begin{lemma}\label{AlwaysCompression}
If $K_\theta\in\mathcal{K}(n,k,d,m)$ with $d>1$ then, we can find a permutation $\pi$ on $\{1,2,\dots ,n\}$ such that $\lvert \mathcal{M}(\SA(K_{\pi\circ\theta}))\rvert<\lvert \SA(K_{\pi\circ\theta})\rvert$.
\end{lemma}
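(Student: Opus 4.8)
The plan is to exhibit, for a suitably chosen relabeling, two \emph{distinct} states of $\SA(K_{\pi\circ\theta})$ that have the same right-language (the set of increasing label-words $w$ such that, if $\tau$ is the simplex reaching the state, $\tau\cup w\in K$), and therefore get identified by minimization. Since $d>1$, the complex $K$ contains a $2$-simplex; I would fix one such triangle $T=\{p,q,r\}$, noting that this forces $n\ge 3$. I would then choose $\pi$ so that the composite labeling $\pi\circ\theta$ assigns the two largest labels to two vertices of $T$ -- say one vertex is sent to $n$ (call it $y$) and a second to $n-1$ (call it $x$) -- while the third vertex $z$ receives any label $\le n-2$. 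This is possible precisely because $T$ has three vertices and $n\ge 3$, and any relabeling of $V$ can be realized as $\pi\circ\theta$ for an appropriate permutation $\pi$.

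Next I would single out the two states $s_1$ and $s_2$ reached from the initial state by reading the words $[x]$ and $[z,x]$. Both are legal increasing words: $\{x\}\in K$ trivially, and $\{z,x\}\in K$ since it is a face of $T$; moreover each has last label $x=n-1$. The crux is that the only label exceeding $n-1$ is $n=y$, so the right-language of \emph{any} state whose last label is $n-1$ is contained in $\{\varepsilon,[n]\}$. I would then verify that both right-languages equal $\{\varepsilon,[n]\}$ exactly: $\{x\}\cup\{y\}=\{x,y\}\subseteq T$ lies in $K$, and $\{z,x\}\cup\{y\}=T$ lies in $K$, while no longer continuation exists because $n$ is the maximum label. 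Hence $s_1$ and $s_2$ have identical right-languages.

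By the Myhill--Nerode theorem, states with equal right-languages are merged in the minimal automaton, so $s_1$ and $s_2$ collapse to a single state. Since they are distinct states of $\SA(K_{\pi\circ\theta})$ (one is the vertex $\{x\}$, the other the edge $\{z,x\}$) and neither is the initial state, this yields $\lvert\mathcal{M}(\SA(K_{\pi\circ\theta}))\rvert\le\lvert\SA(K_{\pi\circ\theta})\rvert-1<\lvert\SA(K_{\pi\circ\theta})\rvert$, as required. The main obstacle, and the reason the relabeling is genuinely needed, is forcing the two right-languages to agree \emph{exactly} rather than merely overlap: under an arbitrary labeling two nodes sharing the extension $[n]$ could still differ on other continuations, so no merge is guaranteed. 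Pushing the triangle's vertices to the two top labels is exactly what eliminates every other possible continuation and compels equality; this is also where $d>1$ enters, since a triangle is the smallest structure providing two distinct internal nodes (a vertex and an incident edge) that both extend to a common larger simplex.
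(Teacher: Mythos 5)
Your proposal is correct and follows essentially the same route as the paper: relabel so that two vertices of a face of dimension $\ge 2$ receive the top labels $n-1$ and $n$, then exhibit two distinct states ending at label $n-1$ whose right-languages are both forced to equal $\{\varepsilon,[n]\}$ and hence merge under minimization. The paper uses a maximal $d$-simplex and merges the depth-$1$ state with the depth-$d$ state, whereas you use an arbitrary triangle and merge the depth-$1$ and depth-$2$ states; this is an immaterial variation, and your version spells out the right-language argument somewhat more explicitly than the paper's one-line justification.
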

\begin{proof}
Let $m=v_{\ell_0}\cdots v_{\ell_d}$ be a maximal simplex in $K_\theta$. We construct $\pi$ by swapping $\ell_{d-1}$ and $\ell_d$ with $n-1$ and $n$ respectively. In $\ST(K_{\pi\circ\theta})$, note that the node under root with label $n-1$ and the node corresponding to simplex $v_{\pi(\ell_0)}\cdots v_{\pi(\ell_{d-1})}$ are identical and thus can be merged.
\end{proof}


We would have liked to obtain better bounds for the size of $\mathcal{M}(\mathrm{SA})$ through conditions just based on $n,k,d$ and $m$, but sadly this is a hard combinatorial problem. Also, while there is always a good labeling, we show in section~\ref{S8} that it is NP-Hard to find it.

\subsection{Experiments}\label{Exp}
We define two parameters here, $\rho_{_{\ST}}$ and
$\rho_{_{\MxST}}$. The first one is given by the ratio of $|\ST|$
and $|\CoST|$ and the second by the ratio of $|\MxST|$ and
$|\CoMST|$. All experiments performed below record the extent of
compression of ST. Ideally, we would have liked to record the extent
of minimization of SA since $\MSA$ is more compact than $\CoST$. Unfortunately, this has not been possible due to the lack of available libraries able to handle very large automata. The results below for $\CoST$ are nonetheless positive, substantiating our claim that compression of ST leads to a compact data structure.   

\noindent\textbf{Data Set 1:} The set of points  were obtained through
sampling of a Klein bottle in $\mathbb{R}^5$ and constructing the Rips
Complex with parameter $r$ using libraries provided by the GUDHI project \cite{GUDHI} on input of various values for $r$. We record in Table~\ref{tab:DataSet1},  $|\CoST|$ and $|\CoMST|$ for the various complexes constructed.

\begin{table}[!h]
\begin{center}\resizebox{\linewidth}{!}{
\begin{tabular}{|c|c|c|c|c|c|c|c|c|c|c|c}\hline
No&$n$&$r$&$d$&$k$&$\lvert\ST\rvert=m-1$&$\lvert \MxST\rvert$&$\lvert\CoST\rvert$&$\rho_{_{\ST}}$&$\lvert\CoMST\rvert$&$\rho_{_{\MxST}}$
\\\hline
1&10,000&0.15&10&24,970&604,572&96,104&218,452&2.77&90,716&1.06
\\\hline
2&10,000&0.16&13&25,410&1,387,022&110,976&292,974&4.73&104,810&1.06
\\\hline
3&10,000&0.17&15&27,086&3,543,582&131,777&400,426&8.85&123,154&1.07
\\\hline
4&10,000&0.18&17&27,286&10,508,485&149,310&524,730&20.03&137,962&1.08
\\\hline
\end{tabular}}
\end{center}
\caption{Analysis of experiments on Data Set 1.}
\label{tab:DataSet1}
\end{table}

First, observe that $|\MxST|$ is considerably smaller than $|\ST|$. This is expected, as it is likely that $k$ is polynomially related to $n$ for Rips complexes. Also, while we observe insignificant compression in  MxST, $\rho_{_\ST}$ increases rapidly as $r$ is increased. This indicates that compression strongly exploits the 
combinatorial redundancy of ST (i.e., storing each simplex explicitly 
through a node)  
and 
works particularly well for the  Simplex Tree.  

\noindent\textbf{Data Set 2:}  All experiments conducted above
are for Rips complexes with  $\frac{d}{n}$  small. We now check the
extent of compression for  simplicial complexes with large
$\frac{d}{n}$. To this aim, we look at flag complexes generated using a random graph $G_{n,p}$ on $n$ vertices where a pair of vertices share an edge with probability $p$, and record in Table~\ref{tab:DataSet2},  $|\CoST|$ and $|\CoMST|$ for the various complexes constructed.

\begin{table}[!h]
\begin{center}\resizebox{\linewidth}{!}{
\begin{tabular}{|c|c|c|c|c|c|c|c|c|c|c|c|c}\hline
No&$n$&$p$&$d$&$k$&$\lvert\ST\rvert=m-1$&$\lvert \MxST\rvert$&$\lvert\CoST\rvert$&$\rho_{_{\ST}}$&$\lvert\CoMST\rvert$&$\rho_{_{\MxST}}$
\\\hline
1&25&0.8&17&77&315,369&587&467&537.3&121&4.85
\\\hline
2&30&0.75&18&83&4,438,558&869&627&7,079.0&134&6.49
\\\hline
3&35&0.7&17&181&3,841,590&1,592&779&4,931.4&245&6.50
\\\hline
4&40&0.6&19&204&9,471,219&1,940&896&10,570.6&276&7.03
\\\hline
5&50&0.5&20&306&25,784,503&2,628&1,163&22,170.7&397&6.62
\\\hline
\end{tabular}}
\end{center}
\caption{Analysis of experiments on Data Set 2.}
\label{tab:DataSet2}
\end{table}

Here we observe staggering values for $\rho_{_\ST}$ which increases as the simplicial complex grows larger. This is primarily because random simplicial complexes don't behave like pathological simplicial complexes (such as Examples~\ref{noExternalCompression}~and~\ref{NFAvsDFA}) which hinder compression. 

\section{Simplex Array List}\label{S7}
In this section, we build a new data structure which is a hybrid of ST and MxST.
The {\em Simplex Array List} $\CST(K)$ is a (rooted) directed acyclic graph on at most $k\left(\frac{d(d+1)}{2}+1\right)$ nodes with maximum out-degree $d$, which can be obtained by modifying MxST or  can be  constructed from the maximal simplices of $K$. We describe the construction of SAL below.

\subsection{Construction}
 We will first see how to obtain $\CST$ from MxST by performing three operations which we define below. 
\begin{enumerate}
\item \textbf{Unprefixing $(\mathcal{U})$:} Excluding the root and the leaves, for every node $v$ in $\MxST$ with outdegree $d_v$, duplicate it into $d_v$ nodes with outdegree 1, (one copy of $v$ for each of its children) by starting from the parents of the leaves and recursively moving up in the tree. 
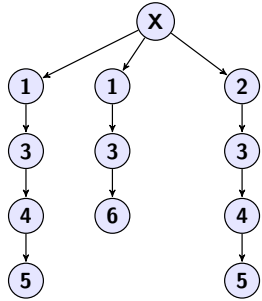
\begin{figure}[!h]
\centering
\resizebox{3.5cm}{!}{
 
\begin{tikzpicture}[->,>=stealth',shorten >=0.5pt,auto,node distance=2cm,
  thick,main node/.style={circle,fill=blue!10,draw,font=\sffamily\Large\bfseries}]
  \node[main node] (0) at (0,0) {X};

  \node[main node] (1y) at (-3,-1.5) {1}; 
\node[main node] (1x) at (-1,-1.5) {1};

  \node[main node] (2) at (2,-1.5) {2};
  \node[main node] (13x) at (-1,-3) {3};
  \node[main node] (13y) at (-3,-3) {3};  
  \node[main node] (23) at (2,-3) {3};
  \node[main node] (134) at (-3,-4.5) {4};
  \node[main node] (136) at (-1,-4.5) {6};
  \node[main node] (234) at (2,-4.5) {4};
  \node[main node] (1345) at (-3,-6) {5};
  \node[main node] (2345) at (2,-6) {5};

  \path[every node/.style={font=\sffamily\small}]    (1y) edge node {} (13y)     (0) edge node {} (1y)       ;
      
  \path[every node/.style={font=\sffamily\small}]
    (0) edge node {} (1x)

    (0) edge node {} (2)
(1x) edge node {} (13x)
    (2) edge node {} (23)
    (13y) edge node {} (134)
    (13x) edge node {} (136)
    (23) edge node {} (234)
    (134) edge node {} (1345)
    (234) edge node {} (2345)
     ;

\end{tikzpicture}
}\caption{Unprefixing the Maximal Simplex Tree of Figure~\ref{fig:MaximalSimplexTreeExample}.}
  \label{fig:UnprefixedMaximalSimplexTreeExample}
\end{figure}

\item \textbf{Transitive Closure $(\mathcal{T})$:} For every pair of nodes $(u,v)$ in $\mathcal{U}(\MxST)$ ($u$ not being the root), if there is a path from $u$ to $v$ then, add an edge from $u$ to $v$ in $\mathcal{T}(\mathcal{U}(\MxST))$ (if it doesn't already exist).

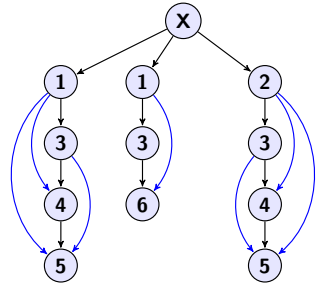
\begin{figure}[!h]
\centering
\resizebox{4.5cm}{!}{
 
\begin{tikzpicture}[->,>=stealth',shorten >=0.5pt,auto,node distance=2cm,
  thick,main node/.style={circle,fill=blue!10,draw,font=\sffamily\Large\bfseries}]
  \node[main node] (0) at (0,0) {X};

   \node[main node] (1y) at (-3,-1.5) {1}; 
\node[main node] (1x) at (-1,-1.5) {1};

  \node[main node] (2) at (2,-1.5) {2};
  \node[main node] (13x) at (-1,-3) {3};
  \node[main node] (13y) at (-3,-3) {3};  
  \node[main node] (23) at (2,-3) {3};
  \node[main node] (134) at (-3,-4.5) {4};
  \node[main node] (136) at (-1,-4.5) {6};
  \node[main node] (234) at (2,-4.5) {4};
  \node[main node] (1345) at (-3,-6) {5};
  \node[main node] (2345) at (2,-6) {5};

  \path[every node/.style={font=\sffamily\small}]    (1y) edge node {} (13y)     (0) edge node {} (1y)       ;
      
  \path[every node/.style={font=\sffamily\small}]
    (0) edge node {} (1x)

    (0) edge node {} (2)
(1x) edge node {} (13x)
    (2) edge node {} (23)
    (13y) edge node {} (134)
    (13x) edge node {} (136)
    (23) edge node {} (234)
    (134) edge node {} (1345)
    (234) edge node {} (2345)
     ;

 \path[every node/.style={font=\sffamily\small}]
    (1y) edge [blue, bend right=40] node {} (134)
    (1y) edge [blue, bend right=50] node {} (1345)    

;
  \path[every node/.style={font=\sffamily\small}]
    (1x) edge [blue, bend left=40] node {} (136)
    (13y) edge [blue, bend left=40] node {} (1345)    
    (2) edge [blue, bend left=40] node {} (234)
    (2) edge [blue, bend left=50] node {} (2345)    
    (23) edge [blue, bend right=40] node {} (2345)    
     ;

\end{tikzpicture}
}\caption{Transitive Close of the Unprefixed Maximal Simplex Tree of Figure~\ref{fig:UnprefixedMaximalSimplexTreeExample}.}
  \label{fig:TransitiveMaximalSimplexTreeExample}
\end{figure}

\item \textbf{Expanding Representation$(\mathcal{R})$:} For every node $v$ in $\mathcal{T}(\mathcal{U}(\MxST))$ with outdegree $d_v$, duplicate it into $d_v$ nodes with outdegree 1, i.e., one copy of $v$ for each of its children, by starting from the children of the root and recursively moving down to children of smallest label. As a demonstration, in Figure~\ref{fig:ExpandNodeMaximalSimplexTreeExample}, we show the result of expanding \emph{one} node in the $\mathcal{T}(\mathcal{U}(\MxST)$ of Figure~\ref{fig:TransitiveMaximalSimplexTreeExample}.

\begin{figure}[!h]
\centering
\resizebox{5.3cm}{!}{
 
\begin{tikzpicture}[->,>=stealth',shorten >=0.5pt,auto,node distance=2cm,
  thick,main node/.style={circle,fill=blue!10,draw,font=\sffamily\Large\bfseries}]
  \node[main node] (0) at (0,0) {X};

\node[main node] (1x) at (-1,-1.5) {1};

  \node[main node] (2) at (2,-1.5) {2};
  \node[main node] (13x) at (-1,-3) {3};
  \node[main node] (13y) at (-3,-3) {3};  
  \node[main node] (23) at (2,-3) {3};
  \node[main node] (134) at (-3,-4.5) {4};
  \node[main node] (136) at (-1,-4.5) {6};
  \node[main node] (234) at (2,-4.5) {4};
  \node[main node] (1345) at (-3,-6) {5};
  \node[main node] (2345) at (2,-6) {5};

  \path[every node/.style={font=\sffamily\small}]
    (0) edge node {} (1x)

    (0) edge node {} (2)
(1x) edge node {} (13x)
    (2) edge node {} (23)
    (13y) edge node {} (134)
    (13x) edge node {} (136)
    (23) edge node {} (234)
    (134) edge node {} (1345)
    (234) edge node {} (2345)
     ;

\node[main node] (1y4) at (-4,-1.5) {1};
\node[main node] (1y3) at (-2.5,-1.5) {1};
\node[main node] (1y5) at (-5.5,-1.5) {1};
\path[every node/.style={font=\sffamily\small}]
    (1y4) edge [blue, bend right=10] node {} (134)
    (1y5) edge [blue] node {} (1345)    
 (1y3) edge node {} (13y)     (0) edge node {} (1y3) (0) edge node {} (1y4)(0) edge [bend right=10] node {} (1y5)      ;
;

  \path[every node/.style={font=\sffamily\small}]
    (1x) edge [blue, bend left=40] node {} (136)
    (13y) edge [blue, bend left=40] node {} (1345)    
    (2) edge [blue, bend left=40] node {} (234)
    (2) edge [blue, bend left=50] node {} (2345)    
    (23) edge [blue, bend right=40] node {} (2345)    
     ;

\end{tikzpicture}
}\caption{Expanding \textbf{one} node in $\mathcal{T}(\mathcal{U}(\MxST)$ of Figure~\ref{fig:TransitiveMaximalSimplexTreeExample}.}
  \label{fig:ExpandNodeMaximalSimplexTreeExample}
\end{figure}
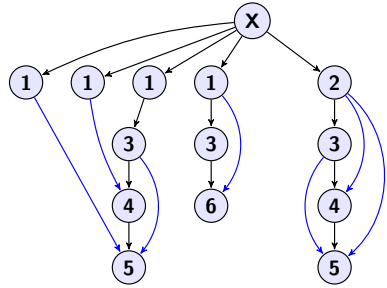
\end{enumerate}    

$\mathcal{R}(\mathcal{T}(\mathcal{U}(\MxST)))$ is the Simplex Array List. From the construction of $\CST$, it is clear that each node in $\CST$ uniquely represents an edge in the simplicial complex. Figure~\ref{fig:CompactSimplexTreeExample}  shows SAL representation of the simplicial complex given in Figure~\ref{fig:SimplicialComplexExample}.   

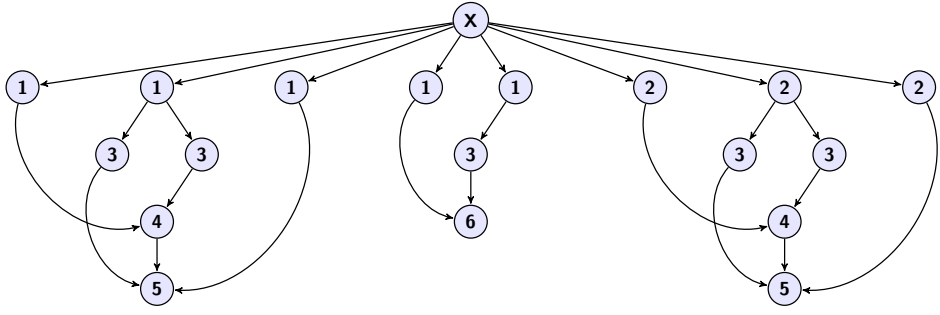
\begin{figure}[!h]
\centering
\resizebox{13cm}{!}{
 
\begin{tikzpicture}[->,>=stealth',shorten >=0.5pt,auto,node distance=2cm,
  thick,main node/.style={circle,fill=blue!10,draw,font=\sffamily\large\bfseries}]

       \node[main node] (1) at (16,0) {X};
  \node[main node] (2) at (6,-1.5) {1};
  \node[main node] (3) at (9,-1.5) {1};
  \node[main node] (45) at (12,-1.5) {1};      
  \node[main node] (5) at (15,-1.5) {1};  
  \node[main node] (6) at (17,-1.5) {1};    
  \node[main node] (7) at (20,-1.5) {2};
  \node[main node] (8) at (23,-1.5) {2};  
  \node[main node] (90) at (26,-1.5) {2};  
  \node[main node] (10) at (8,-3) {3};
  \node[main node] (11) at (10,-3) {3};  
  \node[main node] (12) at (16,-3) {3};  
  \node[main node] (13) at (22,-3) {3};
  \node[main node] (14) at (24,-3) {3};    
  \node[main node] (15) at (9,-4.5) {4};
  \node[main node] (16) at (16,-4.5) {6};
  \node[main node] (17) at (23,-4.5) {4};
  \node[main node] (18) at (9,-6) {5};
  \node[main node] (19) at (23,-6) {5};
      
  \path[every node/.style={font=\sffamily\small}]
    (1) edge node {} (2)
    (1) edge node {} (3)    
    (1) edge node {} (5)
    (1) edge node {} (6)
    (1) edge node {} (7)
    (1) edge node {} (8)
    (1) edge node {} (45)
    (1) edge node {} (90)    
    (2) edge [bend right = 60] node {} (15)
    (45) edge [bend left = 60] node {} (18)   
    (3) edge node {} (10)
    (3) edge node {} (11)
    (11) edge node {} (15)    
    (15) edge node {} (18)   
    (10) edge [bend right = 60] node {} (18)     
    (7) edge [bend right = 60] node {} (17)
    (90) edge [bend left = 60] node {} (19)   
    (8) edge node {} (13)
    (8) edge node {} (14)
    (14) edge node {} (17)    
    (17) edge node {} (19)   
    (13) edge [bend right = 60] node {} (19)         
    (5) edge [bend right = 60] node {} (16)     
    (6) edge node {} (12)    
    (12) edge node {} (16)               
    
     ;
     
\end{tikzpicture}
}
\caption{Simplicial Complex of Figure~\ref{fig:SimplicialComplexExample} represented using  $\mathcal{R}(\mathcal{T}(\mathcal{U}(\MxST)))$.}

  \label{fig:CompactSimplexTreeExample}

\end{figure}

We will now see an equivalent construction of $\CST$ from its maximal simplices and it is this construction we will use to perform operations. For a given maximal simplex $\sigma=v_{\ell_0} \cdot\cdot\cdot v_{\ell_j}$, associate a unique key between $1$ and $k$ generated using a hash function $\mathcal{H}$ and then introduce $\frac{j(j+1)}{2}+1$ new nodes in $\CST$. We build a set of $\frac{j(j+1)}{2}+1$ labels and assign uniquely a label to each node. The set of labels is defined as the union of the following two sets
 (cf.\ Figure~\ref{fig:CompactSimplexTreeExampleNumberLine} for an example):\\
$S_1=\{(\ell_{i},\ell_{i^\prime},\mathcal{H}(\sigma))\mid i\in\{0,1,\dots ,j-1\},i^\prime\in\{i+1,\dots ,j\}\}$\\
$S_2=\{(\ell_j,\varphi,\mathcal{H}(\sigma))\}$\\
where $\varphi$ denotes an empty label. We introduce an edge from node with label $(\ell_{p},\ell_{p^\prime},\mathcal{H}(\sigma))$ to node with label $(\ell_{q},\ell_{q^\prime},\mathcal{H}(\sigma))$ if and only if $p^\prime=q$. Additionally, we introduce an edge from every node with label $(\ell_{p},\ell_j,\mathcal{H}(\sigma))$ in $S_1$ to the node with label  $(\ell_j,\varphi,\mathcal{H}(\sigma))$ in $S_2$. Thus, in $\CST$ we represent a maximal $j$-simplex using a connected component containing $|S_1|+|S_2|=\frac{j(j+1)}{2}+1$ nodes and $\frac{j(j^2+5)}{6}$ directed edges. To perform basic operations efficiently, we embed $\CST$ on the number line such that for every $i\in\{1,2,\dots ,n\}$, we have an array $A_i$ of nodes which has labels of the form $(i,i^\prime,z)$ for some $z\in\{1,\dots ,k\}$ and $i^\prime\in\{i+1,\dots ,n,\varphi\}$. Sort each $A_i$ based on $i^\prime$ and in case of ties, sort them based on $z$.

The resultant graph obtained after removing the root in $\mathcal{R}(\mathcal{T}(\mathcal{U}(\MxST)))$ is the same as the one described in the previous paragraph. Labels (as described above) for the nodes in $\mathcal{R}(\mathcal{T}(\mathcal{U}(\MxST)))$ can be easily given by just looking at the vertex represented by the node, and its children. 

We remark here that  we use hash function $\mathcal{H}$ to generate keys for simplices because it is an efficient way to reuse keys (in case of multiple insertions and removals).

\begin{figure}[!h]
\centering
\resizebox{12cm}{!}{
 
\begin{tikzpicture}[->,>=stealth',shorten >=0.5pt,auto,node distance=2cm,
  thick,main node/.style={circle,fill=blue!10,draw,font=\sffamily\large\bfseries}]
  \node[main node] (113) at (1,1) {};
  \node[main node] (133) at (1,2) {};  
  \node[main node] (114) at (1,3) {};
  \node[main node] (115) at (1,4) {};    
  \node[main node] (136) at (1,5) {};    
  \node[main node] (223) at (3,4) {};  
  \node[main node] (224) at (3,5) {};
  \node[main node] (225) at (3,6) {};  
  \node[main node] (314) at (5,2) {};
  \node[main node] (324) at (5,3) {};  
  \node[main node] (315) at (5,4) {};  
  \node[main node] (325) at (5,5) {};
  \node[main node] (336) at (5,6) {};    
  \node[main node] (415) at (7,3) {};
  \node[main node] (425) at (7,4) {};
  \node[main node] (51p) at (9,3) {};
  \node[main node] (52p) at (9,4) {};
  \node[main node] (63p) at (11,4) {};
      
  \path[every node/.style={font=\sffamily}]
    (113) edge [red!50!yellow, bend left=15] (314)
    (113) edge [red!50!yellow] (315)
    (114) edge [red!50!yellow, bend right=75] (415)    
    (115) edge [red!50!yellow, bend right =65] (51p)
    (314) edge [red!50!yellow] (415)
    (315) edge [red!50!yellow] (51p)
    (415) edge [red!50!yellow] (51p)
    (223) edge [blue!40!white] (324)
    (223) edge [blue!40!white] (325)
    (224) edge [blue!40!white, bend left=65] (425)    
    (225) edge [blue!40!white, bend left =50] (52p)
    (324) edge [blue!40!white] (425)
    (325) edge [blue!40!white,bend left =25] (52p)
    (425) edge [blue!40!white] (52p)
    (133) edge [green!40!yellow, bend left=10] (336)    
    (136) edge [green!40!yellow,bend left =90] (63p)
    (336) edge [green!40!yellow, bend left =30] (63p)
                    
    
     ;
     \draw [->] (-1.5,0) -- (12,0);
     \fill[black] (-1,0) circle (0.1);
     \fill[black] (1,0) circle (0.1);     
     \fill[black] (3,0) circle (0.1);
     \fill[black] (5,0) circle (0.1);     
     \fill[black] (7,0) circle (0.1);
     \fill[black] (9,0) circle (0.1);     
     \fill[black] (11,0) circle (0.1);
     \node at (-1,-0.35) {0};
     \node at (1,-0.35) {1};
     \node at (3,-0.35) {2};
     \node at (5,-0.35) {3};
     \node at (7,-0.35) {4};
     \node at (9,-0.35) {5};
     \node at (11,-0.35) {6};
     \draw[red!70!black,dashed] (3.7,6.5) rectangle (2.3,3.3);
     \draw[red!70!black,dashed] (1.7,5.5) rectangle (0.3,0.3);     
     \draw[red!70!black,dashed] (4.3,1.3) rectangle (5.7,6.5);
     \draw[red!70!black,dashed] (6.3,2.3) rectangle (7.7,4.5);     
     \draw[red!70!black,dashed] (8.3,2.3) rectangle (9.7,4.5);
     \draw[red!70!black,dashed] (10.3,3.3) rectangle (11.7,4.5);     
	 \node at (1,0.65) {(1,3,1)};
     \node at (1,2.65) {(1,4,1)};          
     \node at (1,3.65) {(1,5,1)};
     \node at (1,1.65) {(1,3,3)};
     \node at (1,4.65) {(1,6,3)};          
	 \node at (3,3.65) {(2,3,2)};
     \node at (3,4.65) {(2,4,2)};          
     \node at (3,5.65) {(2,5,2)};
     \node at (5,1.65) {(3,4,1)};          
	 \node at (5,3.65) {(3,5,1)};
     \node at (5,2.65) {(3,4,2)};
     \node at (5,4.65) {(3,5,2)};          
     \node at (5,5.65) {(3,6,3)};
     \node at (7,2.65) {(4,5,1)};
     \node at (7,3.65) {(4,5,2)};          
     \node at (9,2.65) {(5,$\varphi$,1)};
     \node at (9,3.65) {(5,$\varphi$,2)};          
     \node at (11,3.65) {(6,$\varphi$,3)};
	\large{\node at (1, 5.75) {$\mathbf{A_1}$};     }
	\large{\node at (3, 6.75) {$\mathbf{A_2}$};     }
	\large{\node at (5, 6.75) {$\mathbf{A_3}$};     }
	\large{\node at (7, 4.75) {$\mathbf{A_4}$};     }
	\large{\node at (9, 4.75) {$\mathbf{A_5}$};     }
	\large{\node at (11, 4.75) {$\mathbf{A_6}$};     }		
				
\end{tikzpicture}
	}
\caption{Simplex Array List for complex in Figure~\ref{fig:SimplicialComplexExample} embedded on the number line.}

  \label{fig:CompactSimplexTreeExampleNumberLine}

\end{figure}
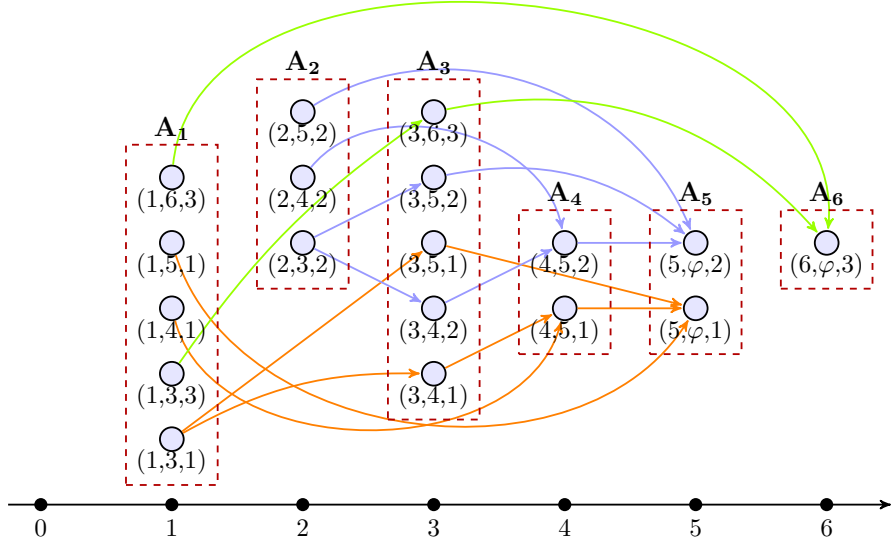

\subsection{Some Observations about the Simplex Array List}

$\CST(K)$ has at most $k\left(\frac{d(d+1)}{2}+1\right)$ nodes. Also, for each maximal simplex of dimension $d_\sigma$, the outdegree of any node in the connected component corresponding to the maximal simplex, is at most $d_\sigma$. Therefore, the total number of edges in $\CST(K)$ is at most $k\left(\frac{d^2(d+1)}{2}+d\right)$. Further, in each node we store the labels of two vertices (which requires $\log n$ bits) and a hashed value (which requires $\log k$ bits). Hence, the space required to store $\CST(K)$ is $\mathcal{O}(kd^2(d+\log n+\log k))$. \textbf{Also, unless otherwise stated $\lvert \CST\rvert$ refers to number of edges in $\CST$.} Since $\CST$  is constructed from $\mathcal{U}(\MxST)$, we have the following lemma:

\begin{lemma}\label{CSTinvariant}
The number of nodes and edges in SAL are both invariant over the labeling of the vertices in the simplicial complex.
\end{lemma}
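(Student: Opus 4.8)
The plan is to bypass the intermediate object $\mathcal{U}(\MxST)$ and argue directly from the explicit construction of $\CST$ out of the maximal simplices, because that construction already exhibits the node and edge counts as pure functions of the dimensions of the maximal simplices. First I would recall that for every maximal simplex $\sigma$ of dimension $d_\sigma$, the construction introduces a connected component with exactly $\frac{d_\sigma(d_\sigma+1)}{2}+1$ nodes (namely $|S_1|+|S_2|$) and $\frac{d_\sigma(d_\sigma^2+5)}{6}$ internal edges, and attaches this component to the root through the $d_\sigma$ nodes whose first coordinate is the smallest label $\ell_0$ of $\sigma$, i.e.\ the nodes carrying labels $(\ell_0,\ell_{i'},\mathcal{H}(\sigma))$ for $i'\in\{1,\dots,d_\sigma\}$. (One can sanity-check the root-edge count of $d_\sigma$ against Figure~\ref{fig:CompactSimplexTreeExample}, where the three maximal simplices of dimensions $3,3,2$ produce $3+3+2=8$ edges out of the root.) Thus each maximal simplex contributes $\frac{d_\sigma(d_\sigma+1)}{2}+1$ nodes and $d_\sigma+\frac{d_\sigma(d_\sigma^2+5)}{6}$ edges, and every one of these quantities is a function of $d_\sigma$ alone.

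The second step is to verify that these per-simplex contributions are genuinely additive, i.e.\ that distinct maximal simplices yield vertex-disjoint components that meet only at the root. This is exactly where the hash key $\mathcal{H}(\sigma)$ does the work: every node label carries $\mathcal{H}(\sigma)$ in its third coordinate, and $\mathcal{H}$ assigns pairwise distinct keys in $\{1,\dots,k\}$ to the $k$ maximal simplices, so no node—and hence no edge—is shared between two different components. Consequently the total number of nodes of $\CST(K)$ equals $1+\sum_\sigma\left(\frac{d_\sigma(d_\sigma+1)}{2}+1\right)$ and the total number of edges equals $\sum_\sigma\left(d_\sigma+\frac{d_\sigma(d_\sigma^2+5)}{6}\right)$, where both sums range over the maximal simplices $\sigma$ of $K$.

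It then remains to observe that a change of labeling is a permutation $\pi$ of $\{1,\dots,n\}$ that leaves the underlying abstract complex untouched: it induces a bijection on the set of maximal simplices which preserves cardinality, and hence preserves each $d_\sigma$. Therefore the multiset $\{d_\sigma\}_{\sigma\text{ maximal}}$ is invariant under relabeling, and since both totals above depend only on this multiset, they are invariant as well. I would close by remarking that the value of $\mathcal{H}(\sigma)$, and which vertex happens to be $\ell_0$, affect only the \emph{labels} written in the nodes and never the \emph{number} of nodes or edges, so neither the choice of hash nor the relabeling can change the count.

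The step I expect to require the most care is the additivity/disjointness claim of the second paragraph: one must make sure that two maximal simplices sharing a common face (for instance two tetrahedra glued along a triangle) do not force any node identification in $\CST$. The distinctness of $\mathcal{H}(\sigma)$ across maximal simplices is precisely what rules this out, and it is worth stating explicitly, since it is exactly the feature that makes $\CST$ labeling-invariant whereas a suffix-sharing structure such as $\CoMST$ is not. Everything else is bookkeeping on the closed-form expressions already derived during the construction.
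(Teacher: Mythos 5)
Your proof is correct. The paper in fact offers no written proof of this lemma at all: it is stated as an immediate consequence of the one\mbox{-}line remark that $\CST$ is obtained from $\mathcal{U}(\MxST)$, the point being that unprefixing replaces $\MxST$ by a disjoint union of paths, one per maximal simplex, of length $d_\sigma+1$; all prefix\mbox{-}sharing (the only labeling\mbox{-}dependent feature of $\MxST$) is destroyed at that stage, and the subsequent operations $\mathcal{T}$ and $\mathcal{R}$ act componentwise in a way that depends only on $d_\sigma$. Your argument reaches the same conclusion by a slightly different route: you work from the explicit hash\mbox{-}key construction, exhibit the exact node and edge counts of each connected component as closed\mbox{-}form functions of $d_\sigma$, and use the distinctness of the keys $\mathcal{H}(\sigma)$ to justify additivity. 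What your version buys is that the disjointness claim --- the one place where something could conceivably go wrong, e.g.\ for two maximal simplices sharing a face --- is made explicit and verified rather than left implicit in the definition of $\mathcal{U}$; it also yields exact formulas, $1+\sum_\sigma\bigl(\tfrac{d_\sigma(d_\sigma+1)}{2}+1\bigr)$ nodes and $\sum_\sigma\bigl(d_\sigma+\tfrac{d_\sigma(d_\sigma^2+5)}{6}\bigr)$ edges, which are stronger than mere invariance. One small caution: the paper's description of the number\mbox{-}line construction omits the root, so your claim that the root attaches to exactly the $d_\sigma$ nodes with first coordinate $\ell_0$ is an inference from $\mathcal{R}(\mathcal{T}(\mathcal{U}(\MxST)))$ (consistent with Figure~\ref{fig:CompactSimplexTreeExample}) rather than something stated in the construction; this does not affect the invariance conclusion, since however the root edges are counted, their number per component is determined by $d_\sigma$ alone.
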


Intuitively, SAL is representing $K$ by storing all the edges of $K$ explicitly as nodes in $\CST(K)$ and the edges in $\CST(K)$ are used to capture the incidence relations between simplices. More precisely, a path of length $j$ in $\CST(K)$ corresponds to a unique $j$-simplex in $K$. We now see that, differently from MxST, the simplices of $K$ are all associated with paths in $\CST(K)$. We say a path $p$ is associated to a simplex $\sigma$ if the sequence of numbers obtained by looking at the corresponding nodes which are embedded on the number line along $p$ are exactly the labels of the vertices of $\sigma$ in lexicographic order.
\begin{lemma}\label{LemBij}
Any path in $\CST(K)$ is associated to a simplex of  $K$ and  any simplex of $K$ is associated to at least one such path.
\end{lemma}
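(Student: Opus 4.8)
The plan is to read off, from the combinatorial description of $\CST(K)$, exactly what a directed edge and a directed path encode, and then verify the two directions of the biconditional separately. The starting point is the observation that every node carries a label of the form $(a,b,z)$ (or $(a,\varphi,z)$ for the terminal nodes in $S_2$), that a node is embedded on the number line at position equal to its first coordinate $a$, and that an edge of $\CST$ joins $(a,b,z)$ to $(a',b',z')$ precisely when $z=z'$ and $b=a'$. Two structural facts fall out immediately and will be recorded first: (i) since the third coordinate is preserved along every edge, any directed path lives inside the single connected component attached to one maximal simplex $\sigma$ with $\mathcal{H}(\sigma)=z$; and (ii) since in $S_1$ we always have $a<b$, and an edge forces the first coordinate of the successor to equal the second coordinate of the predecessor, the first coordinates strictly increase along any path. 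Fact (ii) guarantees the ``lexicographic order'' clause in the definition of \emph{associated} for free.

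For the forward direction I would take an arbitrary directed path $N_0\to N_1\to\cdots\to N_t$ and let $a_0<a_1<\cdots<a_t$ be the sequence of first coordinates read off the embedding. By fact (i) all the $a_i$ are vertices of one maximal simplex $\sigma$, so $\{a_0,\ldots,a_t\}\subseteq\mathrm{vert}(\sigma)$; because a simplicial complex is closed under taking subsets (condition~2 of the definition of $K$), this vertex set is a face of $\sigma$ and hence a simplex of $K$ to which the path is associated. The only point needing care here is that the terminal node $(\ell_j,\varphi,z)$ is a sink (it has no outgoing edge, since no node has first coordinate $\varphi$), so it can occur only as $N_t$ and never interrupts the increasing chain.

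For the converse I would start from a simplex $\tau=\{u_0<\cdots<u_t\}\in K$, choose a maximal simplex $\sigma\supseteq\tau$ (one exists since every simplex is a face of some maximal simplex) and set $z=\mathcal{H}(\sigma)$. Each consecutive pair $u_i,u_{i+1}$ is a pair of vertices of $\sigma$, so the node $(u_i,u_{i+1},z)$ lies in $S_1$, and the edge $(u_i,u_{i+1},z)\to(u_{i+1},u_{i+2},z)$ exists by the edge rule. The chain $(u_0,u_1,z)\to\cdots\to(u_{t-1},u_t,z)$ reads off only $u_0,\ldots,u_{t-1}$, so I must append one more node whose first coordinate is $u_t$ in order to read $u_t$ as well: the terminal node $(\ell_j,\varphi,z)$ reached via the special $S_1\to S_2$ edge when $u_t=\ell_j=\mathrm{last}(\sigma)$, and otherwise any node $(u_t,w,z)\in S_1$ with $w>u_t$. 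The degenerate case $t=0$ (a single vertex) is handled by the one-node path consisting of any node with first coordinate $u_0$. This path reads exactly $u_0,\ldots,u_t$, so it is associated to $\tau$.

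I expect the backward direction to be the main obstacle, and specifically the ``$+1$ node'' bookkeeping: the naive chain of $t$ edges stops one vertex short, and one has to argue---using the split of the label set into $S_1$ and $S_2$ and the extra edges into the terminal node---that a legal last node with first coordinate $u_t$ always exists, whether or not $u_t$ is the last vertex of the chosen maximal coface. A secondary point worth stating explicitly (to justify the ``at least one'' wording rather than ``exactly one'') is that $\tau$ may have several maximal cofaces and several admissible terminating nodes, so the associated path is in general not unique.
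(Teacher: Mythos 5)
Your proposal is correct and follows essentially the same route as the paper's proof: the forward direction uses the fact that the hash value confines any path to the connected component of a single maximal simplex, so the vertices read form a face of it, and the backward direction exhibits the explicit path $(u_0,u_1,z)\to\cdots\to(u_{t-1},u_t,z)$ followed by a terminating node with first coordinate $u_t$. Your extra care about the strictly increasing first coordinates and the $S_1$ versus $S_2$ case split for the last node only makes explicit what the paper leaves implicit.
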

\begin{proof}
For any path in $\CST(K$), it belongs to a connected component of $\CST(K$). In particular, there exists a maximal simplex $m$ such that all nodes in this component are of the form $(a,b,\mathcal{H}(m))$ for  some $a,b$ such that $v_a$ is a vertex of $m$. Consequently, all vertices read during this path belong to $m$, it means that the corresponding simplex is a face of $m$, so it belongs to $K$.

On the other hand, if $\sigma=v_{\ell_1}\cdots v_{\ell_r}$ is a simplex of $K$, it belongs to some maximal simplex $m=v_{\ell_1^\prime}\cdots v_{\ell_{d_m}^\prime}$ where we have $\{\ell_1,\ell_2,\dots ,\ell_r\}\subseteq\{\ell_1^\prime,\dots ,\ell_{d_m}^\prime\}$. In $\CST(K$), we look at the connected component associated with $m$. While we can associate any node with label $(a,b,z)$ uniquely to edge $a-b$, we can also uniquely identify to vertex $a$. We associate $\sigma$ to the path:
$$(\ell_1,\ell_2,\mathcal{H}(m))-(\ell_2,\ell_3,\mathcal{H}(m))-\cdots -(\ell_r,x,\mathcal{H}(m)),$$
for some $x\in\{\ell_r+1,\ldots ,n,\varphi\}$. The existence of the path is confirmed because of the way we introduced edges in the connected component. 
\end{proof}

Observe that several paths can provide the same simplex since a simplex may appear in several maximal simplices. Hence, the vertices of a given simplex cannot be accessed in a deterministic way. The previous lemma together with this observation implies that  $\CST$ is a non-deterministic finite automaton (NFA). NFA are a natural generalization of DFA. The size of a NFA is smaller than that of a DFA detecting the same language, but the operations on NFA take  in general more time. We demonstrate the above fact using Example~\ref{NFAvsDFA}.
\begin{example}\label{NFAvsDFA}
Let $K\in \mathcal{K}(2k+1,k,k,m)$ be defined on the vertices $\{1,\ldots,2k+1\}$ and the set of maximal simplices be given by $\{(\{1,\ldots,k+1\}\setminus\{i\})\cup\{k+1+i\}\mid1\leq i\leq k\}$. 
\end{example}

Thus $\CST(K)$ has $\frac{k^2(k+1)}{2}+k$ nodes while  $\mathcal{M}(\mathrm{SA}(K))$
has at least $2^k$ states (all states reached after reading the
words $s\subseteq \{1,\ldots,k\}$ are pairwise distinct). Moreover, this motivates the need for considering $\CST$ over $\MSA$, as the gap in their sizes can be exponential.

Building $\CST(K)$ can be seen as partially compressing the Simplex Tree ST$({\sigma})$ associated to each maximal simplex $\sigma$ (where $\sigma$ and its subfaces are seen as a subcomplex). Compressing  ST$({\sigma})$ will lead to a subtree which is exactly the same as the transitive closure of MxST($\sigma$). Therefore, collecting all $\CoST({\sigma})$ for all maximal simplices $\sigma$ and merging the roots is the same as $\mathcal{T}(\mathcal{U}(\MxST(K)))$. Now applying $\mathcal{R}$ on $\mathcal{T}(\mathcal{U}(\MxST(K)))$ can be seen as an act of uncompression. We apply $\mathcal{R}$ once to ensure that for every node, all its children represent the same vertex and thus belong to the same $A_i$. If $\mathcal{R}$ is applied multiple times then, it is equivalent to duplicating nodes (seen as an act of uncompression) to get all children of a node closer together inside $A_i$. Next, we discuss below how to perform operations in $\CST$ at least as efficiently as in ST.

\subsection{Operations on the Simplex Array List}\label{s63} 
Let us now analyze the cost of performing basic operations on $\CST$ (the motivation behind these operations are well described in \cite{SimplexTree}). Denote by $\Gamma_j(\sigma,\tau)$ the number of maximal simplices that contain a $j$-simplex $\tau$ which is in $\sigma$. Define $\Gamma_j(\sigma)=\underset{\tau}{\max\ }\Gamma_j(\sigma,\tau)$ and $\Gamma_j=\underset{\sigma\in K}{\max\ }\Gamma_j(\sigma)$. It is easy to see that $k\ge \Gamma_0\ge\Gamma_1\ge \cdots \ge \Gamma_d = 1$. In the case of $\CST$, we are interested in the value of $\Gamma_1$ which we use to estimate the worst-case cost of basic operations in $\CST$.  

\noindent\textbf{Membership of Simplex.} To determine membership of $\sigma=v_{\ell_0} \cdot\cdot\cdot v_{\ell_{d_\sigma}}$ in $K$, first determine the contiguous subarrays of $A_{\ell_0},\dots ,A_{\ell_{d_\sigma}}$, say $B_{\ell_0},\dots ,B_{\ell_{d_\sigma}}$ such that every $B_{\ell_i}$ contains all nodes with labels of the form $(\ell_i,\ell_{i+1},z)$, for some $z$ ($B_{\ell_i}$'s indeed form a contiguous subarray because of the way elements in $A_{\ell_i}$ were sorted). We emphasize here that we determine each $B_{\ell_i}$ only by its starting and ending location in $A_{\ell_i}$ and do not explicitly read the contents of each element in $B_{\ell_i}$. Thus, if $P$ is a projection function such that $P((\ell_i,\ell_{i+1},z))=z$ then, we see each $P(B_{\ell_i})$ as a subset of $\{1,\dots ,k\}$ because the only part of the label that distinguishes two elements in $B_{\ell_i}$  is the hash value of the maximal simplex. Now we have $\sigma\in K$ if and only if $\underset{0\le i\le d_\sigma}{\cap}P(B_{\ell_i})\neq\emptyset$. This is because if $\sigma\in K$ then, from Lemma~\ref{LemBij}  there should exist a path corresponding to this simplex which would imply $\underset{i}{\cap}P(B_{\ell_i})\neq\emptyset$, and if $\tau\in \underset{i}{\cap}P(B_{\ell_i})$ then, $\sigma$ is a face of $\tau$. Computing the intersection can be done in $\mathcal{O}(\gamma d_\sigma \log \zeta)$ time, where $\gamma=\underset{i}{\min} \lvert B_{\ell_i}\rvert$ and $\zeta=\underset{i}{\max}\lvert A_{\ell_i}\rvert$. Computing the subarrays can be done in $\mathcal{O}(d_\sigma\log \zeta)$ time. Thus, the total running time is $\mathcal{O}(d_{\sigma}(\gamma\log \zeta+\log\zeta))=\mathcal{O}(d_\sigma \Gamma_1\log(\Gamma_0 d))=\mathcal{O}(d_\sigma \Gamma_1\log(kd))$.

For example, consider the $\CST$ of figure~\ref{fig:CompactSimplexTreeExampleNumberLine} and we have to check the membership of $\sigma=2-3-5$ in the complex of figure~\ref{fig:SimplicialComplexExample}. Then, we have $B_2=\{(2,3,2)\},B_3=\{(3,5,1),(3,5,2)\}$, and $B_5=\{(5,\varphi ,1),(5,\varphi ,2)\}$. We see each $P(B_i)$ as a subset of $\{1,2,3\}$ as follows: $P(B_2)=\{2\},P(B_3)=\{1,2\}$, and $P(B_5)=\{1,2\}$. Clearly $\underset{i}{\cap} P(B_i)=\{2\}\neq \emptyset$, and $\sigma$ is indeed a face of the second maximal simplex $2-3-4-5$.

\noindent\textbf{Insertion.} Suppose we want to insert a maximal simplex $\sigma$ then, building a connected component takes time $\mathcal{O}(d_\sigma^3)$. Updating the arrays $A_i$ takes time $\mathcal{O}(d_\sigma^2\log \zeta)$. Next, we have to check if there exists maximal simplices in $K$ which are now faces of $\sigma$, and remove them. We consider every edge $\sigma_\Delta$ in $\sigma$ and compute $Z_\Delta$ the set of all maximal simplices which contain $\sigma_\Delta$ (which can be done in time $\mathcal{O}( d_{\sigma}^3\Gamma_1 \log (\Gamma_0 d))$). Then, we compute $\underset{\sigma_\Delta\in\sigma}{\cup}Z_\Delta$ whose size is at most $d_\sigma^2\Gamma_1$ and check if any of these maximal simplices are faces in $\sigma$ (can be done in $\mathcal{O}(d_\sigma^3\Gamma_1)$ time). To remove all such faces of $\sigma$ which were previously maximal takes time at most $\mathcal{O}(d_\sigma^4\Gamma_1)$.  Therefore, total time for insertion is $\mathcal{O}(d_\sigma^3\Gamma_1 (d_\sigma+\log (\Gamma_0 d)))=\mathcal{O}(d_\sigma^3\Gamma_1 (d_\sigma+\log (kd)))$.

\noindent\textbf{Removal.} To remove a face $\sigma$, obtain the maximal simplices which contain it (can be done through a membership query in $\mathcal{O}( d_{\sigma}\Gamma_1  \log (\Gamma_0 d))$ time). There are at most $\Gamma_{d_\sigma}$ maximal simplices containing $\sigma$.
Next, remove the above maximal simplices and then insert the facets of the above maximal simplices which do not contain $\sigma$. More precisely, for each of the above obtained maximal simplices make $d_\sigma$ copies of the corresponding connected component, and in the $i^{\textrm{th}}$ copy delete all nodes with label $(\sigma_i,x,y)$ for some $x,y$, and where $\sigma_i$ denotes the label of the $i^{\textrm{th}}$ vertex of $\sigma$. Note that one has to first check if the facets of the maximal simplices not containing $\sigma$ are indeed maximal before inserting them. 
Thus, the total running time is $\mathcal{O}(\Gamma_{d_\sigma}( d_\sigma (d^3\log \Gamma_0 d + d \Gamma_1 \log \Gamma_0 d) )
)=\mathcal{O}(\Gamma_{d_\sigma}dd_\sigma(d^2 +  \Gamma_1 )\log k d)$.

\noindent\textbf{Elementary Collapse.}  A simplex $\tau$ is collapsible through one of its faces $\sigma$,
if $\tau$ is the only coface of $\sigma$. Such
a pair $(\sigma,\tau)$ is called a free pair, and removing both faces of a free pair is an elementary
collapse. Given a pair of simplices $(\sigma,\tau)$, to check if it is a free pair is done by obtaining the list of all maximal simplices which contain $\sigma$, through the membership query (costs $\mathcal{O}(d_\sigma \Gamma_1\log(\Gamma_0 d))$ time) and then checking if $\tau$ is the only member in that list. If yes, remove $\tau$ and insert the facets (except $\sigma$) which are now maximal after removing $\tau$. This takes time $\mathcal{O}(d_\sigma^4+d_\sigma^2\Gamma_1\log (\Gamma_0 d))$. Thus, the total running time is $\mathcal{O}(d_\sigma^2 (d_\sigma^2+\Gamma_1\log(\Gamma_0 d)))=\mathcal{O}(d_\sigma^2 (d_\sigma^2+\Gamma_1\log(kd)))$.

\noindent\textbf{Edge Contraction.} Here we cannot do better than entirely rebuilding the parts of $\CST$ corresponding to maximal simplices which contain the vertices in the edge to be contracted and therefore the cost of the operation is $\mathcal{O}(\Gamma_0 ( d ( \Gamma_1\log (kd) + d^2 ) ))$, as the number of maximal simplices a vertex may be part of is at most $\Gamma_0$ (by definition) and checking if a simplex remains maximal after the edge contraction can be done through the membership query. 

We summarize in Table~\ref{tab:OperationsonSAL} the asymptotic cost of basic operations discussed above and compare it with ST and MxST, through which the efficiency of $\CST$ is established.
\begin{table}[!h]
\begin{center}
\begin{tabular}{|p{4.5cm}|p{2.5cm}|p{2.8cm}|p{4.5cm}|}\hline
\multicolumn{1}{|c|}{}&\multicolumn{1}{|c|}{ST}&\multicolumn{1}{|c|}{$\MxST$}&\multicolumn{1}{|c|}{$\CST$}\\\hline
Storage\tablefootnote{We would like to recapitulate here the lower bound from Theorem~\ref{Lowerbound} of $\Omega(kd\log n)$.}&$\mathcal{O}(k2^d\log n)$\tablefootnote{The space needed to represent ST is $\Theta(m\log n)$ which is written as $\mathcal{O}(k2^d\log n)$ to help in comparison.}&$\mathcal{O}(kd\log n)$&$\mathcal{O}(kd^2(d+\log n+\log k))$\\ \hline
Membership of a simplex $\sigma$&$\mathcal{O}(d_{\sigma}\log n)$ &$\mathcal{O}(kd\log n)$& $\mathcal{O}( d_{\sigma}\Gamma_1  \log (kd))$ \\\hline
Insertion of a simplex $\sigma$ &$\mathcal{O}(2^{d_{\sigma}}d_{\sigma}\log n)$ &$\mathcal{O}(kd\log n)$& $\mathcal{O}(d_\sigma^3\Gamma_1 (d_\sigma+ \log (kd)))$\\\hline
Removal of a face & $\mathcal{O}(m\log n)$ &$\mathcal{O}(kd\log n)$&$\mathcal{O}(\Gamma_{d_\sigma}dd_\sigma(d^2 +  \Gamma_1 )\log k d)$\\\hline
Elementary Collapse & $\mathcal{O}(2^{d_{\sigma}}\log n)$ &$\mathcal{O}(kdd_\sigma\log n)$&$\mathcal{O}(d_\sigma^2 (d_\sigma^2+ \Gamma_1\log(k d)))$\\\hline
Edge Contraction & $\mathcal{O}(md)$ &$\mathcal{O}(kd(k+\log n))$&$\mathcal{O}(\Gamma_0 ( d ( \Gamma_1\log (kd) + d^2 ) ))$ \\\hline
\end{tabular}
\end{center}
\caption{Cost of performing basic operations on SAL in comparison with ST and MxST.}
\label{tab:OperationsonSAL}
\end{table}


\noindent\textbf{Performance of SAL.} Plainly, if the number of maximal simplices is small (i.e., can be considered as a constant), $\CST$ and $\MxST$ are very efficient data structures and this is indeed the case for a large class of complexes encountered in practice as discussed in section~\ref{S2}. 

Remarkably, even if $k$ is not small but $d$ is small then, $\CST$ is a compact data structure as given by the lower bound in Theorem~\ref{Lowerbound}. This is because $\mathcal{O}(kd^2(d+\log n+\log k))$ bits are sufficient to represent $\CST$ and the lower bound is met when $d$ is fixed  (as it translates to needing $\mathcal{O}(k\log n)$ bits to represent $\CST$). Also, it is worth noting here that $\Gamma_0$ is usually a small fraction of $k$ and since $\Gamma_1$ is at most $\Gamma_0$, the above operations are performed considerably faster than in MxST where almost always the only way to perform operations is to traverse the entire tree. Indeed SAL was intended to be efficient in this regard as even if $k$ is not small the construction of SAL replaces the dependence on $k$ by a dependence on a more local parameter $\Gamma_1$  that reflects some ``local complexity'' of the simplicial complex. As a simple demonstration, we estimated $\Gamma_0,\Gamma_1,\Gamma_2,$ and $\Gamma_3$ for the simplicial complexes of Data Set 1 (see section~\ref{Exp}). These values are recorded in Table~\ref{tab:Gamma}.

\begin{table}[!h]
\begin{center}\resizebox{13cm}{!}{
\begin{tabular}{|c|c|c|c|c|c|c|c|c|c|c|c}\hline
No&$n$&$r$&$d$&$k$&$m$&$\Gamma_0$&$\Gamma_1$&$\Gamma_2$&$\Gamma_3$&$\lvert \CST\rvert$
\\\hline
1&10,000&0.15&10&24,970&604,573&62&53&47&37&424,440
\\\hline
2&10,000&0.16&13&25,410&1,387,023&71&61&55&48&623,238
\\\hline
3&10,000&0.17&15&27,086&3,543,583&90&67&61&51&968,766
\\\hline
4&10,000&0.18&17&27,286&10,508,486&115&91&68&54&1,412,310
\\\hline
\end{tabular}}
\end{center}
\caption{Values of $\Gamma_0$, $\Gamma_1$, $\Gamma_2$, and $\Gamma_3$ for the simplicial complexes generated from Data Set 1.}
\label{tab:Gamma}
\end{table}

It is interesting to note that the size of SAL is larger than the size 
of $\CoST$ but much smaller than the size of $\ST$. This is expected, as SAL promises to perform most basic operations more efficiently than ST while compromising slightly on size. Further our intuition, as described previously, was that $\Gamma_0$ should be much smaller than $k$, which is supported by the above results. Also, we note that 
for larger simplicial complexes such as complexes No 3 and 4, there is
a noticeable gap between $\Gamma_0$ and $\Gamma_1$. Since complexity of basic operations using SAL is parametrized by $\Gamma_1$ (and not $\Gamma_0$), the above results support our claim that SAL is an efficient data structure.

\noindent\textbf{Local Sensitivity of Simplex Array List.} It is worth 	noting that while the cost of basic operations are bounded using 
$\Gamma_1$, we could use local parameters such as $\gamma$ and 
$Z_\Delta$ (see previous paragraphs on Membership of Simplex and 
Insertion for definition) to get a better estimate on the cost of 
these operations. $\gamma$ captures local information about a simplex 
$\sigma$ sharing an edge with other maximal simplices of the 
complex. More precisely, it is the minimum, over all edges of 
$\sigma$, of the largest number of maximal simplices that contain the 
edge. If $\sigma$ has an edge which is contained in a few maximal 
simplices then, $\gamma$ is  small. $Z_\Delta$ captures another local 
property of a simplex $\sigma$ -- the set of all maximal simplices 
that contain the edge $\sigma_\Delta$. Therefore, 
SAL is sensitive to the local structure of the complex. 

\subsection{A Sequence of  Representations for Simplicial Complexes}

We can use the operation $\mathcal{R}$ to generate a sequence of data
structures, each more powerful than the previous ones (but also
bulkier). More formally, consider the sequence of data structures
$\langle \Lambda\rangle$, where $\Lambda_{-1}=\MxST$,
$\Lambda_{0}=\mathcal{U}(\MxST)$ and $\Lambda_i =
\mathcal{R}^i(\mathcal{T}(\mathcal{U}(\MxST)))$, for all
$i\in\mathbb{N}$. Note that $\Lambda_1=\CST$. Further, for all $i\in\mathbb{N}$, we will refer to the data structure $\Lambda_i$ by the name $i\mhyphen \CST$ (we will continue to refer to $1\mhyphen\CST$ as $\CST$).

Further, we see that in the $i^{\textrm{th}}$ element of the sequence,
every node which is not a leaf (sink) in the data structure corresponds to a unique $i$-simplex
in the simplicial complex. Also for all $i\mhyphen\CST$, $i\in\mathbb{N}$,
we have that it is a NFA recognizing all the simplices in the
complex. As we move along the sequence, the size of the data structure
blows up by a factor of $d$ at each step. But in return, we gain efficiency in
searching for simplices as the membership query depends on
$\Gamma_i$ which decreases as $i$ increases.

In section~\ref{0-SAL}, we will see how to construct 0-SAL (i.e., $\mathcal{U}(\MxST)$) from the maximal
simplices of a simplicial complex and in section~\ref{2-SAL}, we will see
how to construct $\mathcal{R}(\mathcal{R}(\mathcal{T}(\mathcal{U}(\MxST))))=2\mhyphen\CST$ from
the maximal simplices of a simplicial complex, and this would help to demonstrate the
construction of data structures which appear later in the sequence $\langle \Lambda\rangle$.

\subsubsection{Unprefixed Maximal Simplex Tree}\label{0-SAL}

The \emph{Unprefixed Maximal Simplex Tree} $\mathcal{U}(\MxST)$ or 0-SAL is a directed acyclic graph which can be obtained by modifying MxST or  can be  constructed from the maximal simplices of $K$. We will describe the latter here. We initially have $n$ empty arrays $A_1,\ldots ,A_n$ and for every maximal simplex $\sigma=v_{\ell_0} \cdot\cdot\cdot v_{\ell_j}$, associate a unique key between $1$ and $k$ generated using a hash function $\mathcal{H}$ and insert $\mathcal{H}(\sigma)$ in the arrays $A_{\ell_0},\ldots ,A_{\ell_j}$. Thus determining membership of a simplex again reduces to computing set intersection and insertion and removal of simplices are performed as in MxST but here we are equipped with a  faster search operation (i.e., more efficient membership query). We summarize in Table~\ref{tab:OperationsonUMxST} the asymptotic cost of basic operations for 0-SAL and compare it with $\MxST$. 

\begin{table}[!h]
\begin{center}
\begin{tabular}{|p{4.5cm}|p{4.25cm}|p{4cm}|}\hline
\multicolumn{1}{|c|}{Operation}&\multicolumn{1}{|c|}{Cost for $\MxST$}&\multicolumn{1}{|c|}{Cost for 0-SAL}\\\hline
Membership of a simplex $\sigma$& $\mathcal{O}( kd\log n)$ &$\mathcal{O}( d_{\sigma}\Gamma_0  \log k)$\\\hline
Insertion of a simplex $\sigma$ & $\mathcal{O}(kd\log n)$& $\mathcal{O}(d_\sigma\Gamma_0( d_\sigma+ \log k))$\\\hline
Removal of a face &$\mathcal{O}(kd\log n)$&$\mathcal{O}(d_{\sigma}d\Gamma_0\Gamma_{d_{\sigma}}\log k)$\\\hline
Elementary Collapse &$\mathcal{O}(kdd_\sigma\log n)$&$\mathcal{O}(d_\sigma^2 \Gamma_0\log k)$\\\hline
Edge Contraction &$\mathcal{O}(kd(k+\log n))$ &$\mathcal{O}(\Gamma_0^2d\log k)$\\\hline
\end{tabular}
\end{center}
\caption{Cost of performing basic operations on 0-SAL which is compared with cost of such operations on MxST.}
\label{tab:OperationsonUMxST}
\end{table}

1-$\CST$ has two advantages over 0-SAL. First, 1-$\CST$
stores all simplices through its paths (Lemma~\ref{LemBij}) and this
may help in storing filtrations. In addition, it is likely that
$\Gamma_1$ is significantly smaller than $\Gamma_0$. 

\subsubsection{2$\mhyphen$Simplex Array List}\label{2-SAL}
The {\em 2$\mhyphen$Simplex Array List} $\TSC(K)$ is a directed acyclic graph which can be obtained by modifying MxST (i.e., $\TSC=\mathcal{R}(\mathcal{R}(\mathcal{T}(\mathcal{U}(\MxST))))$) or  can be  constructed from the maximal simplices of $K$. We will describe the latter here. For a given maximal simplex $\sigma=v_{\ell_0} \cdot\cdot\cdot v_{\ell_j}$, associate a unique key between $1$ and $k$ generated using a hash function $\mathcal{H}$ and then introduce $\frac{j(j^2+5)}{6}+1$ new nodes in $\TSC$. We build a set of $\frac{j(j^2+5)}{6}+1$ labels and assign uniquely a label to each node. The set of labels is defined as the union of the following three sets
 (cf.\ Figure~\ref{fig:3CompactSimplexTreeExampleNumberLine} for an example):\\
$S_1=\{(\ell_{i_1},(\ell_{i_2},\ell_{i_3}),\mathcal{H}(\sigma))\mid i_1\in\{0,1,\dots ,j-2\},i_2\in\{i_1+1,\dots ,j-1\},i_3\in\{i_2+1,\dots ,j\}\}$\\
$S_2=\{(\ell_i,(\ell_j,\varphi),\mathcal{H}(\sigma))\mid i\in\{0,1,\dots ,j-1\}\}$\\
$S_3=\{(\ell_j,(\varphi,\varphi),\mathcal{H}(\sigma))\}$
\begin{sloppypar}\noindent where $\varphi$ denotes an empty label. Now introduce an edge between two nodes with labels $(\ell_{i_1},(\ell_{i_2},\ell_{i_3}),\mathcal{H}(\sigma))$ and $(\ell_{i_4},(\ell_{i_5},\ell_{i_6}),\mathcal{H}(\sigma))$ if and only if $i_2=i_4$ and $i_3=i_5$. Next, introduce an edge between two nodes with labels $(\ell_{i_1},(\ell_{i_2},\ell_{i_3}),\mathcal{H}(\sigma))$ and $(\ell_{i_4} ,(\ell_{j},\varphi),\mathcal{H}(\sigma))$ if and only if $i_2=i_4$ and $i_3=j$. Finally, introduce an edge from every node with label in $S_2$ to the node with label in $S_3$. Thus, in $\TSC$ we represent a maximal $j$-simplex using a connected component containing $|S_1|+|S_2|+|S_3|=\frac{j(j^2+5)}{6}+1$ nodes. To perform basic operations efficiently, embed $\TSC$ on the number line such that for every $i\in\{1,2,\dots ,n\}$ on the number line we have an array $A_i$ of nodes which has labels of the form $(i,(a,b),z)$ for some $z\in\{1,\dots ,k\}$ and $a,b\in\{i+1,\dots ,n,\varphi\}$. Sort each $A_i$ based on $a$ and in case of ties, sort them based on $b$ and in case of further ties sort them based on $z$.\end{sloppypar}

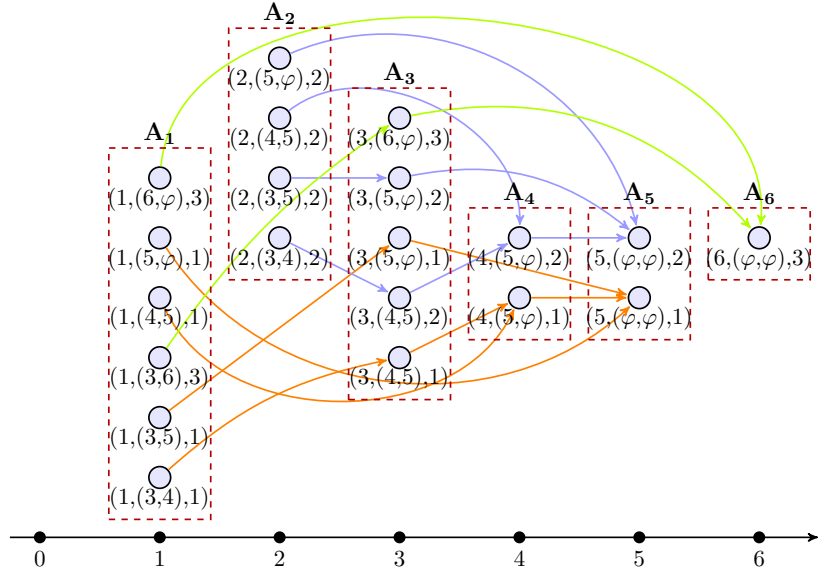
\begin{figure}[!h]
\centering
\resizebox{11cm}{!}{
 
\begin{tikzpicture}[->,>=stealth',shorten >=0.5pt,auto,node distance=2cm,
  thick,main node/.style={circle,fill=blue!10,draw,font=\sffamily\large\bfseries}]
  \node[main node] (1134) at (1,1) {};
  \node[main node] (1135) at (1,2) {};
  \node[main node] (1336) at (1,3) {};  
  \node[main node] (1145) at (1,4) {};
  \node[main node] (115p) at (1,5) {};    
  \node[main node] (136p) at (1,6) {};    
  \node[main node] (2234) at (3,5) {};
  \node[main node] (2235) at (3,6) {};  
  \node[main node] (2245) at (3,7) {};
  \node[main node] (225p) at (3,8) {};  
  \node[main node] (3145) at (5,3) {};
  \node[main node] (3245) at (5,4) {};  
  \node[main node] (315p) at (5,5) {};  
  \node[main node] (325p) at (5,6) {};
  \node[main node] (336p) at (5,7) {};    
  \node[main node] (415p) at (7,4) {};
  \node[main node] (425p) at (7,5) {};
  \node[main node] (51pp) at (9,4) {};
  \node[main node] (52pp) at (9,5) {};
  \node[main node] (63pp) at (11,5) {};
      
  \path[every node/.style={font=\sffamily}]
    (1134) edge [red!50!yellow, bend left=15] (3145)
    (1135) edge [red!50!yellow] (315p)
    (1145) edge [red!50!yellow, bend right=65] (415p)    
    (115p) edge [red!50!yellow, bend right =50] (51pp)
    (3145) edge [red!50!yellow] (415p)
    (315p) edge [red!50!yellow] (51pp)
    (415p) edge [red!50!yellow] (51pp)
    (2234) edge [blue!40!white] (3245)
    (2235) edge [blue!40!white] (325p)
    (2245) edge [blue!40!white, bend left=65] (425p)    
    (225p) edge [blue!40!white, bend left =50] (52pp)
    (3245) edge [blue!40!white] (425p)
    (325p) edge [blue!40!white,bend left =25] (52pp)
    (425p) edge [blue!40!white] (52pp)
    (1336) edge [green!30!yellow, bend left=10] (336p)    
    (136p) edge [green!30!yellow,bend left =90] (63pp)
    (336p) edge [green!30!yellow, bend left =30] (63pp)
                    
    
     ;
     \draw [->] (-1.5,0) -- (12,0);
     \fill[black] (-1,0) circle (0.1);
     \fill[black] (1,0) circle (0.1);     
     \fill[black] (3,0) circle (0.1);
     \fill[black] (5,0) circle (0.1);     
     \fill[black] (7,0) circle (0.1);
     \fill[black] (9,0) circle (0.1);     
     \fill[black] (11,0) circle (0.1);
     \node at (-1,-0.35) {0};
     \node at (1,-0.35) {1};
     \node at (3,-0.35) {2};
     \node at (5,-0.35) {3};
     \node at (7,-0.35) {4};
     \node at (9,-0.35) {5};
     \node at (11,-0.35) {6};
     \draw[red!70!black,dashed] (3.85,8.5) rectangle (2.15,4.3);
     \draw[red!70!black,dashed] (1.85,6.5) rectangle (0.15,0.3);     
     \draw[red!70!black,dashed] (4.15,2.3) rectangle (5.85,7.5);
     \draw[red!70!black,dashed] (6.15,3.3) rectangle (7.85,5.5);     
     \draw[red!70!black,dashed] (8.15,3.3) rectangle (9.85,5.5);
     \draw[red!70!black,dashed] (10.15,4.3) rectangle (11.85,5.5);     
	 \node at (1,0.65) {(1,(3,4),1)};
     \node at (1,1.65) {(1,(3,5),1)};
     \node at (1,3.65) {(1,(4,5),1)};          
     \node at (1,4.65) {(1,(5,$\varphi$),1)};
     \node at (1,2.65) {(1,(3,6),3)};
     \node at (1,5.65) {(1,(6,$\varphi$),3)};          
	 \node at (3,4.65) {(2,(3,4),2)};
     \node at (3,5.65) {(2,(3,5),2)};
     \node at (3,6.65) {(2,(4,5),2)};          
     \node at (3,7.65) {(2,(5,$\varphi$),2)};
     \node at (5,2.65) {(3,(4,5),1)};          
	 \node at (5,4.65) {(3,(5,$\varphi$),1)};
     \node at (5,3.65) {(3,(4,5),2)};
     \node at (5,5.65) {(3,(5,$\varphi$),2)};          
     \node at (5,6.65) {(3,(6,$\varphi$),3)};
     \node at (7,3.65) {(4,(5,$\varphi$),1)};
     \node at (7,4.65) {(4,(5,$\varphi$),2)};          
     \node at (9,3.65) {(5,($\varphi$,$\varphi$),1)};
     \node at (9,4.65) {(5,($\varphi$,$\varphi$),2)};          
     \node at (11,4.65) {(6,($\varphi$,$\varphi$),3)};
	\large{\node at (1, 6.75) {$\mathbf{A_1}$};     }
	\large{\node at (3, 8.75) {$\mathbf{A_2}$};     }
	\large{\node at (5, 7.75) {$\mathbf{A_3}$};     }
	\large{\node at (7, 5.75) {$\mathbf{A_4}$};     }
	\large{\node at (9, 5.75) {$\mathbf{A_5}$};     }
	\large{\node at (11, 5.75) {$\mathbf{A_6}$};     }		
				
\end{tikzpicture}
	}
\caption{2 - Simplex Array List for complex in Figure~\ref{fig:SimplicialComplexExample} embedded on the number line.}

  \label{fig:3CompactSimplexTreeExampleNumberLine}

\end{figure}

$\TSC(K)$ has at most $k\left(\frac{d(d^2+5)}{6}+1\right)$ nodes and,  for each maximal simplex of dimension $d_\sigma$, the outdegree of any node in the connected component is at most $d_\sigma$. Therefore, the total number of edges in $\TSC(K)$ is at most $k\left(\frac{d^2(d^2+5)}{6}+d\right)$. Hence, the space required to store $\TSC(K)$ is $\mathcal{O}(kd^3(d+\log n+\log k))$. 

We summarize in Table~\ref{tab:OperationsonTSC} the asymptotic cost of basic operations for 2 - SAL and compare it with $\CST$. All the basic operations are performed similar to the way we did in $\CST$ except that here we have more structure and thus searching becomes more efficient ($\Gamma_1$ is replaced by the smaller $\Gamma_2$) but pay extra in size because we have to maintain the additional structure.
\begin{table}[!h]
\begin{center}
\begin{tabular}{|p{5.5cm}|p{4.75cm}|p{4,75cm}|}\hline
\multicolumn{1}{|c|}{Operation}&\multicolumn{1}{|c|}{Cost for $\CST$}&\multicolumn{1}{|c|}{Cost for $\TSC$}\\\hline
Membership of a simplex $\sigma$&$\mathcal{O}( d_{\sigma}(\Gamma_1 \log (kd)))$  & $\mathcal{O}( d_{\sigma}\Gamma_2  \log (kd^2))$ \\\hline
Insertion of a maximal simplex $\sigma$ &$\mathcal{O}(d_\sigma^3(\Gamma_1 d_\sigma+ \log (kd)))$& $\mathcal{O}(d_\sigma^3\Gamma_2( d_\sigma^3+ \log (kd^2)))$\\\hline
Removal of a face &$\mathcal{O}(d_{\sigma}d^3(\Gamma_1+\log(kd)))$&$\mathcal{O}(\Gamma_{d_\sigma}dd_\sigma(d^3 +  \Gamma_2 )\log (k d^2))$\\\hline
Elementary Collapse & $\mathcal{O}(d_\sigma^2 (d_\sigma^2+ \Gamma_1+\log(kd^2)))$ &$\mathcal{O}(d_\sigma^2 (d_\sigma^3+ \Gamma_2\log(kd^2)))$\\\hline
Edge Contraction &$\mathcal{O}(\Gamma_0 ( d ( \Gamma_1\log (kd) + d^2 ) ))$  &$\mathcal{O}(\Gamma_0 ( d ( \Gamma_2\log (kd^2) + d^3 ) ))$ \\\hline
\end{tabular}
\end{center}
\caption{Cost of performing basic operations on $\TSC$ which is compared with cost of such operations on SAL.}
\label{tab:OperationsonTSC}
\end{table}

\subsubsection{The Bottom Line}
A natural question to resolve is which element in
$\langle\Lambda\rangle$ should one pick for representing a simplicial
complex. This indeed depends on the nature of data and the type of
complex. For instance, consider the case of a Rips complex whose
vertex set is a  good  sample of a smooth manifold of low intrinsic dimension. Then, we would settle for either $\Lambda_0$ or $\Lambda_1$ as we expect $\Gamma_0$ itself to be quite low. However if the simplicial complex is of high dimension and if there are central simplices on which most maximal simplices are built on then, it might be better to look at elements higher up in the sequence as $\Gamma$ might be quite high in the beginning and can collapse quickly after some point.

\section{Labeling Dependency}\label{S8}
In this section, we discuss how the labeling of the vertices affects the size of the data structures discussed in this paper. In particular, the size of both ST and SAL are invariant over the labeling of vertices in a simplicial complex (see Lemma~\ref{CSTinvariant}). However, this is not the case with MxST and $\mathcal{M}(\SA)$. To see this, consider a simplicial complex which contains a  maximal triangle and a maximal tetrahedron, sharing an edge. We could label the triangle and tetrahedron as 1--2--3 and 1--2--4--5, or as 1--3--4 and 2--3--4--5 respectively. Note that the two labelings give two $\mathcal{M}(\mathrm{SA})$ (and MxST) of different sizes.

In all of the hardness results we will see, the reduction will be from
vertex cover for some specific graphs, where graphs are considered as
1-dimensional simplicial complexes. More formally, given a graph $G$ on vertex set $V$ and edge set $E$, the associated 1-dimensional simplicial complex is defined on the vertex set $V$ which admits all edges in $E$ as 1-dimensional simplices. When there is no confusion, we will refer to this simplicial complex also as $G$.

We observe that MxST$(G)$ is of height 2. In layer 0 we have the root, and in layer 2 we have all the leaves (we may have some leaves in layer 1 if there are vertices in $G$ of degree zero). Since the number of leaves in the MxST is equal to the number of maximal simplices in $G$ which is in turn equal to the number of edges in $G$ (plus the number of zero-degree vertices), regardless of the ordering on vertices in $G$, the number of nodes in layer 0 and 2 are fixed to 1 and $\lvert E\rvert$ respectively. Further note that vertices on layer 1 form a vertex cover of $G$. 

\sloppypar{We will similarly analyze $\mathcal{C}(\ST(G))$. $\ST(G)$ can be obtained from $\MxST(G)$ by introducing nodes on layer 1 (and edges from the root to these nodes) such that all vertices appear in nodes of layer 1 of $\ST(G)$. This means that in $\ST(G)$, we have the root in layer 0, $\lvert V\rvert$ nodes on layer 1, and $\lvert E\rvert$ nodes on layer 2. We recapitulate here that the size of $\ST$ is invariant over the ordering of vertices (as can be observed above -- $\lvert\ST(G)\rvert=\lvert V\rvert+\lvert E\rvert$). More importantly, for 1 - dimensional simplicial complexes we have that $\lvert \CoST\rvert=\lvert \ST\rvert$. Therefore to analyze the impact of labeling on the size of $\CoST$, we will have to move to higher dimensional complexes (which we indeed do).}

\subsection{Optimal Labeling for the Maximal Simplex Tree}
The size of MxST is very sensitive to the labeling of the vertices. For instance, in Example~\ref{Algo5Optimal}, by reversing the labeling of vertices, we increase the size of MxST($K$) by a factor of order $k$. 
\begin{example}\label{Algo5Optimal}
Let $K\in {\cal K} (d+k,k,d,m)$ whose set of maximal simplices is $\{1,2,\dots ,d,d+i|1\le i\le k\}$.
\end{example}

First, we formalize the label ordering problem on MxST: Given an integer $\alpha$ and a simplicial complex $K_{\theta}\in {\cal K}_\theta(n,k,d,m)$, does there exist a permutation $\pi$ of $1,2,\dots ,n$ such that $\lvert \MxST(K_{\pi \circ \theta})\rvert \le \alpha$? Let us refer to this problem as \MxSTMin$(K_\theta,\alpha)$ and from Theorem 17 of \cite{BM14}, we know that it is NP-Complete even for 1-dimensional complexes. 
\begin{theorem}[\cite{BM14}]\label{MxSTLabel}
\MxSTMin\ is NP-Complete.
\end{theorem}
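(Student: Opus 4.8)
The plan is to establish membership in NP and then NP-hardness by a reduction from the decision version of \textsc{Minimum Vertex Cover}. Membership is immediate: a permutation $\pi$ is a polynomial-size certificate, since from $\pi$ one can build $\MxST(K_{\pi\circ\theta})$ in time polynomial in $k$ and $d$ (by inserting the $k$ words representing the maximal simplices, exactly as in the constructive definition of $\MxST$), count its edges, and compare the total with $\alpha$. The substance is therefore the hardness direction, which I would carry out entirely within the class of $1$-dimensional complexes, i.e.\ graphs $G=(V,E)$, exploiting the structural observations already recorded in this section (that $\MxST(G)$ has height $2$, and that its layer-$1$ vertices form a vertex cover of $G$).

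The key structural identity I would establish first is that, for any labeling $\pi$ of a graph $G$ without isolated vertices,
$$\lvert\MxST(G_\pi)\rvert \;=\; \lvert E\rvert \;+\; (\text{number of nodes on layer }1),$$
because $\MxST(G)$ has exactly $\lvert E\rvert$ leaves (one per edge) on layer $2$, connected by $\lvert E\rvert$ edges to layer $1$, together with one edge from the root to each layer-$1$ node. Next I would characterize the layer-$1$ set combinatorially: orienting each edge from its smaller to its larger label induces an acyclic orientation of $G$, and a vertex lies on layer $1$ precisely when it is the smaller endpoint of some edge, i.e.\ when it is \emph{not} a sink of this orientation. Since the set of sinks is always independent (two adjacent sinks would contradict the orientation of the edge joining them), the layer-$1$ set is always a vertex cover. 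This yields the easy direction: any $\pi$ with $\lvert\MxST(G_\pi)\rvert\le\lvert E\rvert+t$ exhibits a vertex cover of size at most $t$.

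For the converse I would prove that every vertex cover is realizable by some labeling. Given a vertex cover $C$ with $\lvert C\rvert\le t$, the set $I=V\setminus C$ is independent; ordering the vertices of $C$ first (in any internal order) and those of $I$ last forces every vertex of $I$ to have all its neighbours labelled smaller, hence to be a sink. Consequently the layer-$1$ set is contained in $C$, giving $\lvert\MxST(G_\pi)\rvert\le\lvert E\rvert+\lvert C\rvert\le\lvert E\rvert+t$. Setting $\alpha=\lvert E\rvert+t$ then makes the statement ``$G$ has a vertex cover of size at most $t$'' equivalent to ``there exists $\pi$ with $\lvert\MxST(G_\pi)\rvert\le\alpha$'', which completes the reduction and, since the construction of the instance is clearly polynomial, establishes NP-hardness.

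The main obstacle is exactly this converse realizability step: one must argue that minimizing the number of layer-$1$ nodes over all labelings coincides with the minimum vertex cover, and the clean way to see this is the sink/independent-set correspondence above, which gives $\min_\pi\lvert\MxST(G_\pi)\rvert=\lvert E\rvert+\tau(G)$ where $\tau(G)$ is the vertex cover number. A minor technical point to dispatch is the treatment of isolated vertices of $G$: such a vertex is itself a maximal simplex and would appear as an extra leaf on layer $1$, perturbing the leaf count. I would simply restrict the reduction to vertex-cover instances with no isolated vertices, a restriction under which \textsc{Minimum Vertex Cover} remains NP-complete and under which the leaf count is exactly $\lvert E\rvert$, keeping the identity above intact.
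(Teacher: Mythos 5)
Your proof is correct and follows essentially the same route as the paper: both reduce from \textsc{Vertex Cover} on $1$-dimensional complexes using the fact that $\lvert\MxST(G_\pi)\rvert=\lvert E\rvert+(\text{number of layer-}1\text{ nodes})$ and that the layer-$1$ vertices form a vertex cover. The only difference is one of detail: the paper merely asserts the equivalence between small vertex covers and good labelings, whereas you make both directions explicit via the smaller-to-larger edge orientation (layer-$1$ vertices are exactly the non-sinks, and placing a cover $C$ before the independent set $V\setminus C$ forces the layer-$1$ set into $C$), which is a welcome filling-in of the argument rather than a different approach.
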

\begin{proof}
Clearly \MxSTMin\ is in NP. We will now show that it is NP-Hard as well. Given a connected graph $G$ on vertex set $V$ (with $\theta$ being a labeling of the vertices from $1$ to $\lvert V\rvert$) and edge set $E$, we define $K_{\theta}$ to be the 1-dimensional simplicial complex associated to $G$. Since $G$ is connected, we have that all the leaves in $\MxST(K_\theta)$ appear in layer 2. Thus finding a vertex cover for $G$ of size at most $\alpha$ is equivalent to finding an ordering $\pi$ on the vertices in $K_{\theta}$ (to determine which of these should appear in nodes in layer 1 of the MxST) such that $\lvert\MxST(K_{\pi \circ \theta})\rvert\le \alpha +\lvert E\rvert$. Since vertex cover problem is NP-Hard \cite{K72}, we have that \MxSTMin\ is also NP-Hard. Thus \MxSTMin\ is NP-Complete.
\end{proof}

Intuitively, finding a good labeling for $\mathcal{C}(\MxST)$ seems
harder than for MxST. More formally, given an integer $\alpha$ and a
simplicial complex $K_{\theta}\in {\cal K}_\theta(n,k,d,m)$, does
there exist a permutation $\pi$ of $1,2,\dots ,n$ such that $\lvert
\mathcal{C}(\MxST(K_{\pi \circ \theta}))\rvert \le \alpha$? Let us
refer to this problem as
\CMxSTMin$(K_\theta,\alpha)$. Corollary~\ref{CMxSTLabel} easily
follows  from Theorem~\ref{MxSTLabel} as, for any 
1-dimensional simplicial complex and any fixed labeling, 
$\lvert\mathcal{C}(\MxST)\rvert = \lvert \MxST\rvert$.

\begin{corollary}\label{CMxSTLabel}
\CMxSTMin\ is NP-Complete.
\end{corollary}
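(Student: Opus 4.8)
The plan is to reuse the reduction behind Theorem~\ref{MxSTLabel} almost verbatim, exploiting the fact stated just before the corollary: on $1$-dimensional complexes the compression $\mathcal{C}$ leaves the number of edges unchanged, so $\lvert\mathcal{C}(\MxST)\rvert=\lvert\MxST\rvert$. First I would dispatch membership in NP. A permutation $\pi$ of $\{1,\dots,n\}$ is a certificate of polynomial size; given $\pi$ one builds $\MxST(K_{\pi\circ\theta})$, applies $\mathcal{C}$ (which is polynomial, e.g.\ via the acyclic-automaton minimization of Section~\ref{S4}), and counts the edges of the result, so checking $\lvert\mathcal{C}(\MxST(K_{\pi\circ\theta}))\rvert\le\alpha$ is done in polynomial time. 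Hence \CMxSTMin\ is in NP.

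For NP-hardness I would take exactly the reduction from vertex cover used in the proof of Theorem~\ref{MxSTLabel}: given a connected graph $G=(V,E)$ with labeling $\theta$, let $K_\theta$ be its associated $1$-dimensional complex. The one extra ingredient to justify is the identity $\lvert\mathcal{C}(\MxST(K_{\pi\circ\theta}))\rvert=\lvert\MxST(K_{\pi\circ\theta})\rvert$ for every $\pi$. This I would argue from the shallow, height-$2$ structure of $\MxST(G)$: the root in layer $0$, layer-$1$ nodes carrying the smaller endpoints of the edges (which form a vertex cover of $G$), and the $\lvert E\rvert$ leaves in layer $2$. Compression can merge two subtrees only when their roots carry the same label and have identical child-subtrees; since each vertex label occurs at most once among the layer-$1$ nodes, no two layer-$1$ subtrees ever merge, so all $\lvert E\rvert$ leaf edges and all root-to-layer-$1$ edges survive. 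The only identifications possible are among leaves carrying a common larger endpoint, and such a merge redirects several parent pointers to a single shared leaf: it reduces the node count but leaves every edge intact. Therefore the edge count is invariant under $\mathcal{C}$.

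With this identity in hand, the equivalence established in Theorem~\ref{MxSTLabel} transfers directly: $G$ has a vertex cover of size at most $\alpha$ if and only if there is a permutation $\pi$ with $\lvert\MxST(K_{\pi\circ\theta})\rvert\le\alpha+\lvert E\rvert$, and by the invariance this holds if and only if $\lvert\mathcal{C}(\MxST(K_{\pi\circ\theta}))\rvert\le\alpha+\lvert E\rvert$ for some $\pi$, i.e.\ if and only if $(K_\theta,\alpha+\lvert E\rvert)$ is a yes-instance of \CMxSTMin. Since vertex cover is NP-hard~\cite{K72}, this gives NP-hardness, and together with membership in NP, NP-completeness.

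The main obstacle I anticipate is being precise about the invariance $\lvert\mathcal{C}(\MxST)\rvert=\lvert\MxST\rvert$: one must keep in mind that $\lvert\cdot\rvert$ counts \emph{edges} and not nodes, since the node count genuinely drops when equal leaves are merged. The whole argument hinges on checking that in a height-$2$ tree every leaf merge only redirects parent pointers without deleting an edge, and that distinct layer-$1$ labels forbid the single type of merge that could remove edges; once this is made explicit, the rest is a mechanical transfer of the Theorem~\ref{MxSTLabel} reduction.
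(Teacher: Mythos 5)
Your proposal is correct and follows essentially the same route as the paper: both reduce (directly or via \MxSTMin) from vertex cover on $1$-dimensional complexes and rest on the observation that $\lvert\mathcal{C}(\MxST)\rvert=\lvert\MxST\rvert$ for such complexes, which the paper asserts and you justify in more detail via the height-$2$ structure (leaf merges only redirect parent edges, and distinct layer-$1$ labels prevent any edge-deleting merge). The extra care you take to distinguish the edge count from the node count is a welcome elaboration of the step the paper leaves implicit.
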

\begin{proof}
It is clear that \CMxSTMin\ is in NP.
We know from the proof of Theorem~\ref{MxSTLabel} that it is NP-Hard
to decide \MxSTMin\ even for pure simplicial complexes of dimension
1. We will thus show a reduction from \MxSTMin\ for such simplicial
complexes to \CMxSTMin. For any pure simplicial complex $K$ of
dimension 1 under any labelling $\theta$ of its vertices, the size of MxST and $\CoMST$ for $K_\theta$ are the same. Thus any solution for an instance of \MxSTMin\ is also a solution for the same instance for \CMxSTMin\ and vice versa.
\end{proof}

\subsection{Optimal Labeling for the  Minimal Simplex Automaton}

To prove that finding a good
labeling for $\MSA$ is hard, we use  a  reduction from
another instance of
vertex cover for a special class of graphs. We say a graph $G$ is square-free if, for every two vertices $u,v$ in $G$, the number of common neighbors in $G$ is at most 1.
\begin{lemma}\label{PoorHard}
Vertex Cover problem on square-free graphs is NP-Hard.
\end{lemma}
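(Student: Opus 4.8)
The plan is to reduce from the general \textsc{Vertex Cover} problem, which is NP-hard~\cite{K72}, using a subdivision gadget that destroys all $4$-cycles while keeping the optimum cover size under tight control. Given an instance $(G,k)$ with $G=(V,E)$, $|V|=n$ and $|E|=m$, I would build a graph $G'$ by replacing every edge $e=uv$ of $G$ with a path $u - a_e - b_e - v$ of length three (subdividing each edge twice, introducing two fresh degree-two vertices $a_e,b_e$ per edge). The claim to establish is that $G'$ is square-free and that $G$ has a vertex cover of size at most $k$ if and only if $G'$ has one of size at most $k+m$; since $G'$ is constructible in polynomial time, this proves the lemma. Note that a single subdivision would make $G'$ bipartite, hence its cover computable in polynomial time, so subdividing twice (an \emph{odd} path length that preserves odd cycles) is essential.

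First I would verify that $G'$ is square-free. Since $G'$ is a subdivision of $G$ it is homeomorphic to $G$, so its cycles are exactly the images of the cycles of $G$, a cycle of length $c$ in $G$ becoming one of length $3c$ in $G'$. As $G$ is simple its girth is at least $3$, whence the girth of $G'$ is at least $9>4$; in particular $G'$ contains no $C_4$, i.e.\ no two vertices share two common neighbours, which is exactly the square-free condition.

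The heart of the argument will be the identity $\alpha(G')=\alpha(G)+m$ for the independence numbers, from which $\lvert \mathrm{VC}(G')\rvert=(n+2m)-\alpha(G')=\lvert \mathrm{VC}(G)\rvert+m$ follows at once. For the lower bound $\alpha(G')\ge\alpha(G)+m$, I would take a maximum independent set $I$ of $G$ and, for each edge $e=uv$, add exactly one of $a_e,b_e$, namely the one whose original endpoint lies outside $I$ (at least one does, since $I$ is independent); these additions are pairwise non-adjacent and avoid $I$. The reverse inequality is the delicate step and the main obstacle: starting from a maximum independent set $J$ of $G'$, the set $J_V=J\cap V$ need \emph{not} be independent in $G$, because the two endpoints of an edge $e$ sit at distance three in $G'$ and may both lie in $J$. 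I would control this by letting $B\subseteq E$ be the set of such ``bad'' edges; each bad edge forces $a_e,b_e\notin J$, so the subdivision vertices contribute at most $m-|B|$ to $|J|$, while the elementary bound $\alpha(G)\ge\alpha(G[J_V])\ge |J_V|-|B|$ (a graph with $|B|$ edges becomes edgeless after deleting at most $|B|$ vertices) yields $|J|\le |J_V|+(m-|B|)\le \alpha(G)+m$.

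Finally, I would assemble the reduction: the map $(G,k)\mapsto(G',k+m)$ is polynomial-time computable, $G'$ is square-free, and by the identity above $G$ admits a vertex cover of size at most $k$ exactly when $G'$ admits one of size at most $k+m$. Combined with the NP-hardness of \textsc{Vertex Cover}~\cite{K72}, this establishes that the problem remains NP-hard when restricted to square-free graphs.
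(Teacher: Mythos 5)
Your proposal is correct, but it takes a genuinely different route from the paper. The paper's proof is a two-line observation about existing reductions: it notes that the standard chain SAT $\to$ 3-SAT $\to$ Vertex Cover from Garey and Johnson already outputs a square-free graph, provided one uses the version of 3-SAT in which every clause has exactly three distinct variables. You instead give a self-contained reduction from general Vertex Cover via the classical double-subdivision gadget, replacing each edge $uv$ by a path $u - a_e - b_e - v$ and proving the identity $\alpha(G')=\alpha(G)+m$ (equivalently, that the minimum cover grows by exactly $m$). Your verification is sound on all the points that matter: the degree-two subdivision vertices force every cycle of $G'$ to be a tripled cycle of $G$, so the girth is at least $9$ and no two vertices share two common neighbours; the lower bound on $\alpha(G')$ is handled by picking, per edge, the subdivision vertex whose neighbour in $V$ lies outside the independent set; and the delicate upper bound is correctly closed by charging each ``bad'' edge (both endpoints in $J$) against the lost subdivision vertex via $\alpha(G)\ge |J_V|-|B|$. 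What each approach buys: the paper's argument is essentially free once one trusts the cited constructions, but it requires the reader to re-inspect the internals of a textbook reduction; yours is longer but self-contained, and it actually proves something stronger --- NP-hardness of Vertex Cover on graphs of girth at least $9$ --- of which square-freeness is a special case. Your parenthetical remark that a single subdivision would yield a bipartite graph (where Vertex Cover is polynomial by K\H{o}nig's theorem), so that two subdivisions are genuinely needed, is also correct and worth keeping.
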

\begin{proof}
We observe that in the reduction from SAT to 3-SAT (Theorem 3.1. of \cite{vertexCoverCubic}), every clause has exactly three distinct variables\footnote{$x$ and $\neg x$ are considered as the same variable.}. Next, we observe that if every clause has exactly three distinct variables then, in the reduction  from 3-SAT to Vertex Cover (Theorem 3.3. of \cite{vertexCoverCubic}), the graph constructed does not have a cycle of length four.
\end{proof}

We will now formalize the decision problem for $\MSA$. Given an integer $\alpha$ and $K_{\theta}\in {\cal K}_\theta(n,k,d,m)$, does there exist a permutation $\pi$ of $1,2,\dots ,n$ such that $\lvert \MSA(K_{\pi \circ \theta})\rvert \le \alpha$? Let us refer to this problem as \MSAMin$(K_\theta,\alpha)$.  We work with  square-free graphs because by using such graphs for building simplicial complexes, we would overcome scenarios in which we have two states in $\MSA$ with identical set of outgoing transitions but have different sets of incoming transitions. 
\begin{theorem}\label{MSALabel}
\MSAMin\ is NP-Complete.
\end{theorem}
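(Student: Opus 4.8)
The plan is to first note that $\MSAMin$ lies in NP and then to establish NP-hardness by reducing from Vertex Cover on square-free graphs, which is NP-hard by Lemma~\ref{PoorHard}. For membership in NP, a candidate permutation $\pi$ is the certificate: since $\SA(K_{\pi\circ\theta})$ is acyclic it can be minimized in linear time \cite{R92}, so one builds $\MSA(K_{\pi\circ\theta})$ and tests $\lvert\MSA(K_{\pi\circ\theta})\rvert\le\alpha$ in polynomial time. For the reduction, given a square-free graph $G=(V,E)$ together with a target $\beta$ for its minimum vertex cover $\tau(G)$, I would take $K(G)$ to be the $1$-dimensional complex associated to $G$ and set $\alpha=\beta+2$. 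The structural point is that in $\ST(G)$ every simplex is the empty face, a vertex (depth $1$), or an edge (depth $2$), and a permutation $\pi$ simply reorders the vertices, i.e.\ chooses an acyclic orientation of $G$ (orient each edge from the smaller to the larger label). Writing $N^+(u)=\{w:\{u,w\}\in E,\ \pi(w)>\pi(u)\}$ for the forward neighborhood, the right-language of the depth-$1$ state of $u$ is $\{\epsilon\}\cup\{[w]:w\in N^+(u)\}$, so $\MSA(G_{\pi\circ\theta})$ depends only on the orientation.

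Reading off the minimal automaton, all depth-$2$ states, and all depth-$1$ states with $N^+(u)=\emptyset$, have right-language $\{\epsilon\}$ and collapse into a single sink; the root is distinct from every other state since its right-language contains length-$2$ words; and two depth-$1$ states merge exactly when their forward neighborhoods coincide. Hence $\lvert\MSA(G_{\pi\circ\theta})\rvert=2+D$, where $D$ is the number of distinct nonempty forward neighborhoods. For the forward direction, given a minimum vertex cover $C$ I would orient every edge towards its endpoint outside $C$ (ordering the vertices of $C$ arbitrarily but acyclically and placing $V\setminus C$ after them); then each vertex of $V\setminus C$ is a sink, only the $\lvert C\rvert$ vertices of $C$ can carry nonempty forward neighborhoods, so $D\le\lvert C\rvert$ and $\lvert\MSA\rvert\le 2+\lvert C\rvert$.

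The core of the argument, and the place where square-freeness is indispensable, is the converse: from any orientation realizing $D$ distinct nonempty forward neighborhoods one must recover a vertex cover of size at most $D$. Square-freeness guarantees that no two vertices can share a forward neighborhood of size $\ge 2$ (that would give them two common neighbors, i.e.\ a $4$-cycle); therefore the out-degree-$\ge 2$ vertices, say $a$ of them, have pairwise distinct forward neighborhoods, while the out-degree-$1$ vertices contribute only through their $t$ distinct targets, so $D=a+t$. Taking $S$ to be the union of the out-degree-$\ge 2$ vertices and the set of targets of out-degree-$1$ vertices gives $\lvert S\rvert\le a+t=D$, and $S$ is a vertex cover: an edge oriented $u\to v$ is covered by $u$ when $u$ has out-degree $\ge 2$, and otherwise $v$ is the unique target of the out-degree-$1$ vertex $u$, so $v\in S$. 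Combining both directions yields $\min_\pi\lvert\MSA(G_{\pi\circ\theta})\rvert=2+\tau(G)$, so a labeling with $\lvert\MSA\rvert\le\beta+2$ exists if and only if $\tau(G)\le\beta$, which completes the reduction.

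The main obstacle is precisely this converse extraction. Without the square-free hypothesis, arbitrarily many depth-$1$ states could merge, driving $\lvert\MSA\rvert$ well below $2+\tau(G)$ and destroying the correspondence; this is the phenomenon of two states having identical outgoing transitions but different incoming ones. The technical crux is thus to verify that square-freeness confines all such merging to the harmless out-degree-$1$ case and that the resulting count $D=a+t$ is tight, so that $\min_\pi\lvert\MSA\rvert$ reproduces $\tau(G)$ exactly rather than merely bounding it.
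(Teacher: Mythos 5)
Your proof is correct and follows essentially the same route as the paper: NP membership via polynomial-time minimization of the (acyclic) Simplex Automaton, and NP-hardness by reduction from Vertex Cover on square-free graphs (Lemma~\ref{PoorHard}), with the identical forward direction of placing a vertex cover first in the ordering. Your converse---establishing $\lvert\MSA\rvert=2+D$ and extracting a cover from the out-degree-$\ge 2$ vertices together with the targets of the out-degree-$1$ vertices, with square-freeness guaranteeing that merged depth-$1$ states have out-degree at most one---is a more explicit rendering of the paper's \emph{rewiring} argument, which re-orients each merged class $\{u_1,\dots,u_p\}\to v$ into $v\to\{u_1,\dots,u_p\}$ and then reads the cover off the initial state's outgoing transitions.
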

\begin{proof}
It is clear that \MSAMin\ is in NP. We will now show that it is
NP-Hard as well. Given a square-free graph $G$ on vertex set $V$
and edge set $E$,  and a labeling $\theta$ from $1$ to $\lvert
V\rvert$  on the vertices, we define  $K_{\theta}$ to be the 1-dimensional simplicial complexes associated to $G$. In the following paragraphs we will prove that $G$ has a vertex cover of size at most $\alpha$  if and only if there exists an ordering $\pi$ on the vertices of $K_{\theta}$ such that $\lvert\mathcal{M}(\SA(K_{\pi \circ \theta}))\rvert\le \alpha +2$.

First, we prove the forward direction. We define the height of a state
as the length of the longest sequence of transitions from the initial
state to that state. We define the height of an automaton as the
maximum over the height of all states in the automaton. Thus
$\mathcal{M}(\SA(K_{\theta}))$ is of height 2. At height 0 we have the
initial state, and at height 2 we have a single state. Transitions
from states in height 1 to height 2 correspond to the edges of $G$
regardless of the ordering on the vertices in $K_{\theta}$. Note that vertices at height 1 form a vertex cover of $G$. Thus if $G$ has a vertex cover of size at most $\alpha$ then, we can construct an ordering $\pi$ on the vertices of $K_{\theta}$ (by allowing all vertices appearing in the vertex cover to appear before the remaining vertices) such that $\lvert\mathcal{M}(\SA(K_{\pi \circ \theta}))\rvert\le \alpha +2$.

Now, we prove the reverse direction. Suppose there exists an ordering
$\pi$ on the vertices of $K_\theta$ such that
$\lvert\mathcal{M}(\SA(K_{\pi \circ \theta}))\rvert\le \alpha +2$. If
there is a state $s$ in $\mathcal{M}(\SA(K_{\pi \circ \theta}))$ at
height 1 which has more than one incoming transition then, it cannot
have more than one outgoing transition because $G$ is square-free. 
In this case, the outgoing transition of $s$ is rewired to be the incoming transition from the initial state and the incoming transitions are rewired to be the outgoing transitions from $s$ to the single
state at height 2. Once this swapping (i.e., rewiring) is done for each state at height 1, we have ensured that the number of incoming transitions for such states is one (because the number of outgoing transitions from these states before rewiring was one).
Additionally, we note that by performing the above rewiring in $\mathcal{M}(\SA(K_{\pi \circ \theta}))$ we have not introduced (or removed) any new states, and thus the size of $\mathcal{M}(\SA(K_{\pi \circ \theta}))$ has not changed (and it is computing the same language as before). Therefore, choosing the set of all labels of outgoing
transitions from the initial state to states at height 1 gives a
subset of the vertex set of size at most $\alpha$ and does indeed form
a vertex cover of $G$. From Lemma~\ref{PoorHard} we know that vertex
cover is NP-Hard for square-free graphs, which implies that \MSAMin\ is also NP-Hard. Thus \MSAMin\ is NP-Complete.
\end{proof}

We believe that to obtain reasonably small sized MxST and $\MSA$ one has to label vertices in decreasing order of $k_v\overset{\textrm{def}}{=}$ number of maximal simplices containing vertex $v$. Thus, in practice one may use this heuristic to find a good labeling.

\subsection{Optimal Labeling for the Compressed Simplex Tree}
We had observed in Section~\ref{S4} the delicate relationship between the sizes of $\MSA$ and $\CoST$. Additionally, we have that the complexity measure of sizes for $\MSA$ and $\CoST$ are not coherent and thus, we will not be able to use Theorem~\ref{MSALabel} here to prove hardness result.
On the other hand, we cannot (trivially) extend the reduction from \MxSTMin\ to
that of finding good labelings for $\CoST$ because, as stated earlier, for 1-dimensional simplicial complexes we have $\lvert \CoST\rvert=\lvert \ST\rvert$. Instead we append the simplicial complex (associated to) $G$ and construct 2-dimensional complexes which we then use to
prove hardness result for finding good labelings for $\CoST$.

 Let us first formalize the decision problem. Given an integer $\alpha$ and a
simplicial complex $K_{\theta}\in {\cal K}_\theta(n,k,d,m)$, does
there exist a permutation $\pi$ of $1,2,\dots ,n$ such that $\lvert
\mathcal{C}(\ST(K_{\pi \circ \theta}))\rvert \le \alpha$? Let us refer
to this problem as \CSTMin$(K_\theta,\alpha)$. To prove the hardness
result for \CSTMin, we provide a reduction from a special instance of
vertex cover problem which will be shown to be NP-hard. Specifically, we
restrict vertex cover problem to  a special class of graphs we call as the King-Maker
graphs. A graph $G$ is called a King-Maker graph if there exists two vertices $u,v$ in $G$ such that $u$ is connected to all vertices in $G$ (and thus we shall fondly refer to this vertex as the king) and $v$ is of degree 1.
\begin{lemma}\label{KMHard}
Vertex Cover problem on King-Maker graphs is NP-Hard.
\end{lemma}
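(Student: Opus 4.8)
The plan is to reduce from the ordinary Vertex Cover problem on general graphs, which is NP-hard \cite{K72}. Given an instance consisting of a graph $H$ and an integer $\alpha$, I would build a King-Maker graph $G$ as follows: keep all vertices and edges of $H$, add a new vertex $u$ and join it to every vertex of $H$, and finally add one more vertex $v$ whose single incident edge is $uv$. By construction $u$ is adjacent to every other vertex of $G$ (all the vertices of $H$ and also $v$), so $u$ is a \emph{king}, and $v$ has degree exactly $1$; hence $G$ is a King-Maker graph. The construction clearly runs in polynomial time, producing a graph on $|V(H)|+2$ vertices.

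The core of the argument is the identity $\mathrm{vc}(G)=\mathrm{vc}(H)+1$, where $\mathrm{vc}(\cdot)$ denotes the minimum vertex cover size. For the lower bound, I would observe that the only vertices incident to edges of $H$ are the vertices of $H$ themselves (neither $u$ nor $v$ is an endpoint of an $H$-edge), so for any vertex cover $C$ of $G$ the set $C\cap V(H)$ must already cover $H$, giving $|C\cap V(H)|\ge \mathrm{vc}(H)$; moreover the pendant edge $uv$ forces at least one of $u,v$ into $C$, and since $\{u,v\}$ is disjoint from $V(H)$ this contributes one further vertex, so $|C|\ge \mathrm{vc}(H)+1$. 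For the matching upper bound I would exhibit the cover $\{u\}\cup S$, where $S$ is a minimum vertex cover of $H$: the king $u$ covers every edge $uw$ for $w\in V(H)$ as well as $uv$, while $S$ covers all edges internal to $H$, so this is a vertex cover of size $\mathrm{vc}(H)+1$.

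With the identity in hand, $H$ admits a vertex cover of size at most $\alpha$ if and only if $G$ admits one of size at most $\alpha+1$, so the reduction is correct and Vertex Cover on King-Maker graphs is NP-hard. The only point requiring any care is the lower bound: one must resist reasoning through the abundant edges $uw$ (which would merely force either $u$ or all of $V(H)$ into the cover) and instead base the counting solely on the edges of $H$ together with the pendant edge $uv$, which is precisely what makes the extra ``$+1$'' tight. The universal vertex $u$ is needed only to realize the king property, and because it enters every optimal cover at the fixed cost of one, it leaves the hardness of the underlying Vertex Cover instance intact.
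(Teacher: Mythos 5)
Your proposal is correct and uses exactly the same reduction as the paper: add a universal vertex $u$ and a pendant vertex $v$ attached to $u$, and observe that the minimum vertex cover size increases by exactly one. The paper merely asserts the equivalence of cover sizes, whereas you spell out the lower-bound argument (that the $+1$ comes from the pendant edge $uv$, whose endpoints are disjoint from the original vertex set), but this is the same proof.
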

\begin{proof}
Given a graph $G$ on vertex set $V$ and edge set $E$, we build a
King-Maker graph $G^\prime$ by adding two new vertices $u$ and $v$, and
adding an edge between every vertex in $G$ and $u$, and an edge between
$u$ and $v$. If $G$ has a vertex cover of size $\alpha$ then,
$G^\prime$ has a vertex cover of size $\alpha+1$ and vice
versa.
\end{proof}

We are now equipped to prove NP-Hardness of \CSTMin\ and follow a reduction similar to that described in proof of Theorem~\ref{MxSTLabel} and the reduction from King-Maker graphs, helps us ensure that (i) there exists a vertex cover of smallest size which contains the king, and (ii) every vertex of the graph (except the king) appears in nodes of layer 2 of $\CoST$. Also, we append the simplicial complex (associated to) $G$ by introducing a new vertex and extend $G$ through insertion of triangles. 

\begin{theorem}\label{CSTLabel}
\CSTMin\ is NP-Complete.
\end{theorem}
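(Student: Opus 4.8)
The plan is to establish membership in NP and then NP-hardness by reduction from Vertex Cover on King-Maker graphs, whose hardness is Lemma~\ref{KMHard}. Membership in NP is immediate: given a candidate permutation $\pi$, we build $\ST(K_{\pi\circ\theta})$, minimize the associated acyclic automaton in linear time (Revuz's algorithm, cited earlier), read off $\CoST$, count its edges, and check the count is at most $\alpha$. For the reduction, given a King-Maker graph $G'$ on vertex set $V$ with $\lvert V\rvert=N$, edge set $E$, king $u$, and pendant vertex $v$, I would build a $2$-dimensional complex $K$ by introducing a single new apex vertex $w$ and adding, for every edge $\{a,b\}\in E$, the triangle $\{a,b,w\}$; the maximal simplices of $K$ are exactly these $\lvert E\rvert$ triangles. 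This is the promised recipe of appending $G'$ with a new vertex and extending it through triangles. I would set the target to $\alpha'=2N+\lvert E\rvert+\alpha$ and prove that $G'$ has a vertex cover of size at most $\alpha$ if and only if some $\pi$ achieves $\lvert\CoST(K_{\pi\circ\theta})\rvert\le\alpha'$.

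The core is an exact accounting of the edges that survive compression under the \emph{canonical} labeling that assigns the king $u$ the smallest label and the apex $w$ the largest. This forces a clean layered picture: since $u$ is universal and smallest, every other graph vertex $c$ is the larger endpoint of $\{u,c\}$ and hence occurs at layer~$2$ as an internal node of shape ``$c\to w$'' (this is property (ii)), while $u$ only ever occurs at layer~$1$. As in the $1$-dimensional case, where $\lvert\CoST\rvert=\lvert\ST\rvert$, the merging of leaves does \emph{not} change the edge total; the only edge-reducing merges are (a) identification of the layer-$2$ internal nodes ``$c\to w$'' sharing a common label $c$, and (b) the cross-layer identification of a layer-$1$ vertex node $x$ with a layer-$2$ node ``$x\to w$'', which occurs exactly when $x$ is a \emph{local maximum} of $G'$ (all its $G'$-neighbours carry smaller labels) and is non-isolated. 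Carrying out this bookkeeping, I expect the identity $\lvert\CoST(K_{\pi\circ\theta})\rvert=3N+\lvert E\rvert-\ell$, where $\ell$ is the number of non-king local maxima. The surviving (un-merged) layer-$1$ vertex nodes are precisely $V\setminus L$, where $L$ is the set of local maxima, and this mirrors the role of layer~$1$ in Theorem~\ref{MxSTLabel}.

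Two adjacent vertices cannot both be local maxima, so $L$ is an independent set; conversely, any independent set $I$ excluding $u$ can be realized as the local maxima by giving $I$ the largest graph-labels (still below $w$), so that each $x\in I$ has all $G'$-neighbours smaller. Hence $\max_\pi\ell$ equals the maximum independent set size $N-\tau(G')$, where $\tau(G')$ is the vertex-cover number, and $V\setminus L$ is a minimum vertex cover. The forward direction is then routine: from a vertex cover $C$ of size $\le\alpha$ containing the king (available by property (i)), take $I=V\setminus C$, apply the canonical labeling with $I$ placed highest, and obtain $\lvert\CoST\rvert=3N+\lvert E\rvert-\ell\le 3N+\lvert E\rvert-(N-\alpha)=\alpha'$.

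The main obstacle is the reverse direction, namely showing that no labeling can beat the canonical value $2N+\lvert E\rvert+\tau(G')$, so that $\lvert\CoST\rvert\le\alpha'$ genuinely forces $\tau(G')\le\alpha$. For an arbitrary $\pi$ the king and apex need not sit at the extremes, and then the clean identity above can fail, so I must prove a matching lower bound $\lvert\CoST(K_{\pi\circ\theta})\rvert\ge 2N+\lvert E\rvert+\tau(G')$ for \emph{every} $\pi$. This is exactly where the King-Maker hypotheses are used: the universality of $u$ together with the degree-$1$ vertex $v$ pins down the admissible positions of $u$ and $w$ in any edge-minimizing labeling, and property (i)---that a minimum vertex cover contains the king---guarantees that the optimal independent set of local maxima can always be taken to exclude $u$, matching the canonical ordering. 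The delicate part of the eventual write-up will be verifying that each cross-layer merge of type (b) is counted once and only once and that leaf merges contribute nothing to the edge total, so that the value of $\lvert\CoST\rvert$ is controlled uniformly over all permutations and not merely for the extremal one; this makes the reduction go through and yields that \CSTMin\ is NP-Complete.
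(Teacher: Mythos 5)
Your construction, target value, and forward direction coincide with the paper's: the same apex-plus-triangles complex built on a King-Maker graph, the same bound $2\lvert V\rvert+\lvert E\rvert+\alpha$, and essentially the same count (your identity $3N+\lvert E\rvert-\ell$ with $\ell=N-\lvert V'\rvert$ agrees with the paper's $2\lvert V\rvert+\lvert E\rvert+\lvert V'\rvert$). The local-maxima/independent-set bookkeeping is a slightly different but equivalent way of organizing the layer-2 count.

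However, there is a genuine gap, and you have located it yourself without closing it: the reverse direction. You state that you ``must prove a matching lower bound $\lvert\CoST\rvert\ge 2N+\lvert E\rvert+\tau(G')$ for every $\pi$'' and that the King-Maker hypotheses ``pin down the admissible positions of $u$ and $w$,'' but no argument is given; everything is deferred to ``the eventual write-up.'' This is precisely the content of the paper's Claim~\ref{OrderingStructure}, which is the technical heart of the proof: for an arbitrary permutation $\pi^+$ in which the apex $\v$ does not carry the last label, let $L$ and $R$ be the graph vertices labeled before and after $\v$, and let $\Delta$ be the number of $G$-edges between $L$ and $R$; then moving $\v$ to the last position changes the size of $\CoST$ by exactly $-\Delta\le 0$. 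Proving this requires tracking where copies of the node labeled by $\v$ appear (layer 2 for edges inside $R$, layer 3 for edges across $L$--$R$, layer 4 for edges inside $L$) and, crucially, using the universality of the king to argue that the edges from layer-2 nodes into the unique layer-4 copy of $\v$ already exist, so that eliminating the layer-3 copies saves exactly $\Delta$ edges. Without this exchange argument your reduction is incomplete: an adversarial labeling that interleaves the apex among the graph vertices could a priori achieve a smaller $\CoST$ than any ``canonical'' labeling, and then a small $\CoST$ would not certify a small vertex cover. A secondary, smaller omission is the normalization that the king may be assumed to carry label $1$ (so that layer 2 has at most $\alpha$ nodes); the paper also has to justify this, though it is easier. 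I would accept the membership-in-NP part and the forward direction as written, but the hardness direction needs the missing claim proved before the theorem is established.
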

\begin{proof}
  It is clear that \CSTMin\ is in NP. We will now show that it is
  NP-Hard as well. Let $G$ be a King-Maker graph on vertex set $V$ and
  edge set $E$, and $\theta$ be a labeling of the vertices from
  $1$ to $\lvert V\rvert$. We define $K_{\theta}$ to be the 1-dimensional simplicial complex associated to $G$. We then introduce a new vertex $\v$ in $K_{\theta}$ with label $\lvert V\rvert+1$ and form
  maximal triangles, each consisting of $\v$ together with a maximal
  edge of $K_\theta$. We denote this appended $K_\theta$ by $K_\theta^+$. Thus, we still have the same number of maximal
  simplices but the dimension has increased by 1 and
  $\mathcal{C}(\ST(K_{\theta}^+)$ is of height 3 (see Figure~\ref{fig:construction} for a demonstration). The rest of the proof will focus on proving the following claim: $G$ has a vertex cover of size at most $\alpha$ if and only if there exists a permutation $\pi^+$ of the vertices of $K_\theta^+$ such that $\lvert\mathcal{C}(\ST(K_{\pi^+\circ\theta}^+)\rvert$ is at most $2\lvert V\rvert +\lvert E\rvert +\alpha$.
  
  \begin{figure}[!h]
\centering
\resizebox{\linewidth}{!}{
\begin{tikzpicture}[-,>=stealth',shorten >=0.5pt,auto,node distance=2cm,
  thick,main node/.style={circle,draw,font=\sffamily\large\bfseries,minimum size=1pt}]

\node at (-5.8,3.8) {\huge $G$};
\node at (1.2,3.8) {\huge $K^\prime$};
\node at (8,3.8) {\huge $\mathcal{C}(\ST(K^\prime))$};

 \node[main node] (1) at (-5.8,1.8) {1};
  \node[main node] (2) at (-7,0) {2};
   \node[main node] (3) at (-8,2) {3};
    \node[main node] (4) at (-4.1,2.6) {4};
  \path[every node/.style={font=\sffamily\small}]
    (1) edge node {} (2)
    (1) edge node {} (3)
    (1) edge node {} (4)    
    (2) edge node {} (3);

\draw [fill=blue!12] (0,0)--(2.7,-0.8)--(-1,2)--cycle;
\draw [fill=blue!12] (-1,2)--(2.7,-0.8)--(1.2,1.8)--cycle;
\draw [fill=blue!10] (0,0)--(2.7,-0.8)--(1.2,1.8)--cycle;
\draw [fill=blue!8] (2.9,2.6)--(2.7,-0.8)--(1.2,1.8)--cycle;

\fill[black] (0,0) circle (0.1);
\fill[black] (-1,2) circle (0.1);
\fill[black] (1.2,1.8) circle (0.1);
\fill[black] (2.9,2.6) circle (0.1);
\fill[black] (2.7,-0.8) circle (0.1);

\draw [ultra thick] (0,0) -- (-1,2) -- (1.2,1.8) -- (2.9,2.6);
\draw [ultra thick] (0,0) -- (1.2,1.8);
\draw [ultra thick] (0,0) -- (2.7,-0.8) -- (1.2,1.8);
\draw [ultra thick] (2.7,-0.8) -- (2.9,2.6);
\draw [dashed, ultra thick] (2.7,-0.8) -- (-1,2);
\draw [ultra thick] (0.55,0.83) -- (-1,2);

\node at (-1.1,2.35) {3};
\node at (1.25,2.15) {1};
\node at (0,-0.35) {2};
\node at (3,2.95) {4};
\node at (2.8,-1.15) {5};

\draw [blue!40!white, dashed] (5.7,1.95) rectangle (11,1.05);

  \node[main node] (0) at (8,2.5) {X};
  \node[main node] (1) at (7,1.5) {1};
  \node[main node] (2) at (10,1.5) {2};
  \node[main node] (12) at (6,0.5) {2};
  \node[main node] (13) at (7,0) {3};
  \node[main node] (14) at (8,0.5) {4};
  \node[main node] (125) at (6,-1) {5};
      
  \path[->,every node/.style={font=\sffamily\small}]
    (0) edge node {} (1)
    (0) edge node {} (2)
    (1) edge node {} (12)
    (1) edge node {} (13)
    (1) edge node {} (14)        
    (2) edge [red, bend left = 30] node {} (13)
    (12) edge node {} (125)
    (0) edge [red, bend right = 30] node {} (12)
    (0) edge [red] node {} (13)
    (0) edge [red] node {} (14)
    (0) edge [red, bend right = 60] node {} (125)                
    (1) edge [red] node {} (125)                
    (2) edge [red, bend left = 30] node {} (125)                        
    (13) edge [red] node {} (125)                
    (14) edge [red, bend left = 20] node {} (125)                

     ;
\node at (12.5,1.75) {Vertex Cover};
\node at (12.5,1.25) {of $G$};




     
\end{tikzpicture}
}
\caption{Demonstration of the construction of $K^\prime$ through an example.}
\label{fig:construction}
\end{figure}
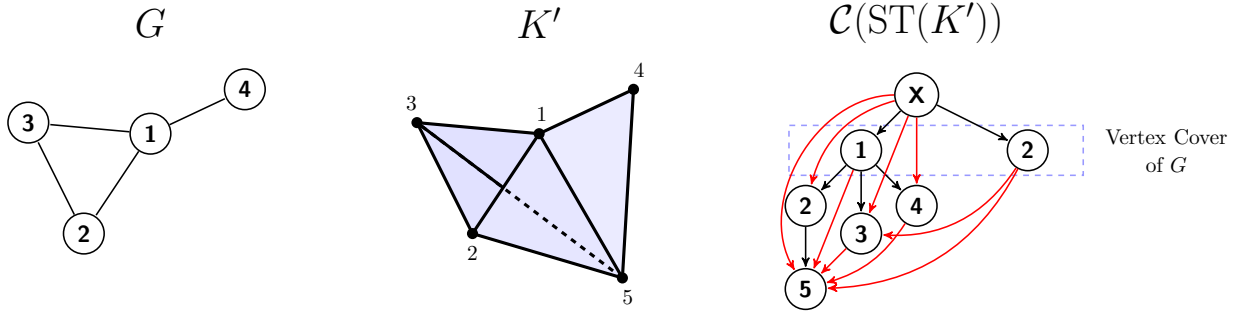
  
  Consider a vertex cover $V^\prime$ of
  $G$ of size at most $\alpha$. We construct a
  permutation $\pi$ of the labels of the vertices of $K_\theta$ so that all
  vertices in $V^\prime$ appear before any other vertex in
  $V$. Further, we assume that the king is in $V^\prime$. This is
  no loss of generality since, if a vertex cover excludes the king, we
  can always find another vertex cover of the same size which includes
  the king. Further, we set
  the label of the king under $\pi$ to $1$. In
  $\mathcal{C}(\ST(K_{\pi\circ\theta}))$, we distinguish three
  layers. Layer 1 consists of the root, the vertices
  of $V'$ appear in layer 2, and all vertices but the king appear in
  layer 3. There are $\lvert V^\prime\rvert$ edges from layer 1 to layer 2, $\lvert E\rvert$ edges from layer 2 to layer 3
  and $\lvert V\rvert - \lvert V^\prime\rvert$ edges from layer 1 to
  layer 3. Now, we extend $\pi$ to construct a permutation $\pi^+$ of
  labels of the vertices of $K_\theta^+$ by setting $\pi(\lvert
  V\rvert +1)=\lvert V\rvert +1$ and keeping the same labels as in
  $\pi$ for the remaining vertices.  We can see $\mathcal{C}(\ST(K_{\pi^+\circ\theta}^+))$
as  adding  one new node and some edges to
$\mathcal{C}(\ST(K_{\pi\circ\theta}))$.  Node with label $\lvert
V\rvert +1$ is put on layer 4 and we introduce an edge from all
vertices in layers 1, 2 and 3  to this node in layer 4. 
Thus we have
$\lvert\mathcal{C}(\ST(K_{\pi^+\circ\theta}^+))\rvert = \lvert\mathcal{C}(\ST(K_{\pi\circ\theta}))\rvert + 1 + \lvert V^\prime\rvert + (\lvert V\rvert -1) = 2\lvert V\rvert +\lvert E \rvert +\lvert V^\prime\rvert\le 2\lvert
V\rvert + \lvert E\rvert + \alpha $. \\


To prove the other direction of the equivalence, we claim that suppose the vertices of $K_\theta^+$ can be labeled under a permutation $\pi^+$ such that the size of $\CoST$ is at most $2\lvert
V\rvert + \lvert E\rvert + \alpha$ then, we can find a vertex cover of
$G$ of size at most $\alpha$. In order to find this vertex cover, we
observe that the structure of an optimal $\CoST$ under $\pi^+$ will
have only one node in layer 4 and this will contain the vertex
$\v$, i.e., 
the new  vertex introduced in $K_\theta$ with label $|V|+1$.
This is
formally stated below as Claim~\ref{OrderingStructure}. Thus, the
structure of $\CoST$ is the one we had previously encountered and can
extract the vertex cover of $G$ by looking at the nodes on layer 2
(layer 2 can have at most $\alpha$ nodes since we will observe later
that in an optimal $\CoST$, the king will appear with label 1 under
$\pi^+$). We will now see that under any permutation if the label of $\v$ is not in the last position then, we can find a new permutation which leads to $\CoST$ of smaller size with the label of $\v$ being at the last position.

\begin{claim}\label{OrderingStructure}
Given a permutation $\pi^+$ such that $\pi^+(\lvert V\rvert +1)\neq \lvert V\rvert +1$, we can construct a new permutation $\rho^+$ which is obtained by moving $\lvert  V\rvert+1$ to the last position such that\footnote{Here we see $\pi^+$ and $\rho^+$ as strings of length $\lvert  V\rvert+1$ obtained by their action on $1,2,\dots ,\lvert V\rvert +1$, and by 'moving' we mean to remove an element from its current position and insert it in the new position.}: $$\lvert\mathcal{C}(\ST(K_{\pi^+\circ\theta}^+))\rvert\ge \lvert\mathcal{C}(\ST(K_{\rho^+\circ\theta}^+))\rvert.$$ 
\end{claim}

Thus, if we have a permutation $\pi^+$ such that
$\lvert\mathcal{C}(\ST(K_{\pi^+\circ\theta}^+))\rvert<2\lvert
V\rvert + \lvert E\rvert + \alpha $, we may assume that $\pi^+(\lvert
V\rvert+1)=\lvert V\rvert+1$ and we can select the vertex cover of $G$
by considering the vertices on layer 2 of
$\mathcal{C}(\ST(K_{\pi^+\circ\theta}^+))$. Next, we argue that in
$\pi^+$, we may assume $\pi^+(1)=1$ (i.e., the king gets label 1 under $\pi^+$), since otherwise, we can
always move it to the first position and it does not affect the size
of $\CoST$. It is now easy to see that the size of layer 2 is less than $\alpha$ as the king is on layer 2.

We know from Lemma~\ref{KMHard} that the vertex cover problem for King-Maker graphs is NP-Hard, and this implies that \CSTMin\ is also NP-Hard. Thus,  we have that \CSTMin\ is NP-Complete.\end{proof}

\begin{proof}[Proof of Claim~\ref{OrderingStructure}]
Let $\mathcal{C}(\ST(K_{\pi^+\circ\theta}^+))\setminus \v$ denote the graph obtained if we delete all the nodes (and all entering or exiting edges incident on these nodes) containing the label of $\v$ in $\mathcal{C}(\ST(K_{\pi^+\circ\theta}^+))$.
Note that
$\mathcal{C}(\ST(K_{\pi^+\circ\theta}^+))\setminus \v$ is exactly the same as $\mathcal{C}(\ST(K_{\pi\circ\theta}))$
whose size was invariant over $\pi$ (because $K$ is a 1-dimensional
simplicial complex). Therefore, it suffices to show that the sum of the
number of edges entering or leaving all nodes containing label of $\v$ in
$\mathcal{C}(\ST(K_{\pi^+\circ\theta}^+))$ is at least
the sum of the number of edges entering or leaving all nodes containing the label of 
$\v$ in
$\mathcal{C}(\ST(K_{\rho^+\circ\theta}^+))$. Let $L$ (resp., $R$)
denote the set of all vertices in $G$ whose label under
$\pi\circ\theta$ appear before  (resp., after) $\pi^+(\theta(\v))$. 
Let $\Delta$ be the number of edges between $L$ and $R$ in $G$. Then,
we claim that
$\lvert\mathcal{C}(\ST(K_{\pi^+\circ\theta}^+))\rvert-
\lvert\mathcal{C}(\ST(K_{\rho^+\circ\theta}^+))\rvert=\Delta \ge
0$. To see why the claim is true, let us try to analyze the position
of the nodes containing label of $\v$ in
$\mathcal{C}(\ST(K_{\pi^+\circ\theta}^+))$. Write $L_v$ for the label
of vertex $\v$, i.e., $L_v=\pi^+(|V|+1)$. If there is an edge between two nodes in $R$ then, we
will have a node with label  $L_v$ in layer 2. If there is an edge
between two nodes in $L$ then, we will have a node with label  $L_v$
in layer 4. Finally, if there is an edge between a node in $R$ and a
node in $L$ then, we  will have a node with label $L_v$ in layer 3. By
moving the position of $\v$ to the last position, we are getting rid
of the appearance of nodes with label $L_v$ from layer 2 and
3. Now, the king is either in $L$ or in $R$, and in both cases, we know that there are edges from the king to every other node in $L\cup R$. 
Therefore,  the disappearance of nodes in layer 3, reduces by
$\Delta$, the number of edges because all edges between nodes in  $L$ in layer 2 and node containing label $L_v$ in layer 4, and all edges between nodes in $R$ in layer 2 and node containing label $L_v$ in layer 4 were already existing in $\mathcal{C}(\ST(K_{\pi^+\circ\theta}^+))$.
\end{proof}

\section{Discussion and Conclusion}
In this paper, we introduced a compression technique for the Simplex
Tree without compromising on functionality. Additionally, we have proposed two new
data structures for simplicial complexes -- the Maximal Simplex Tree and
the Simplex Array List. We observed that the
Minimal Simplex Automaton is generally
smaller than the Simplex Automaton. Further, we showed that the Maximal Simplex Tree  is compact and
that the Simplex Array List is efficient (and compact when $d$ is fixed). This is summarized in Table~\ref{tab:OperationsonSAL}.

The transitive closure of MxST may have a node, with as many as $kd$ outgoing edges to
neighbors containing the same label. $\CST$ reduces the number of
outgoing edges to such neighbors with the same label from $kd$ to $d$, making it much more powerful. In short, it reduces the non-determinism of their equivalent automaton representation. Also, most complexes observed in practice have $k$ to be a low degree polynomial in $n$. Example~\ref{NFAvsDFA} and Lemma~\ref{boundOnCompression} both deal with complexes where $k$ is small. Further, all hardness results in section~\ref{S8} are for complexes of dimension at most 2. Thus, complexes where either $k$ or $d$ is small are interesting to study and for these cases, $\CST$ is very efficient.

Marc Glisse and Sivaprasad implemented SAL \cite{MarcSivaprasad} for Data Set 1 on some values of $r$, and then performed insertion and removal of random simplices, and contracted randomly chosen edges. They made the following observations: 
\begin{itemize}
\item In all their experiments ($n\approx 10000,r\in [0.2,0.5]$), size of 1-SAL was significantly smaller than size of ST, and in most cases ST would run out of memory.
\item They found that 1-SAL outperformed 0-SAL in low dimensions. However, 0-SAL performed better than 1-SAL in higher dimensions.
\item They modified 1-SAL by storing fewer edges and this was noted to save a factor of two in size (in dimension 30), over the 1-SAL proposed in this paper.
\end{itemize}
Therefore, it would be worth exploring for which class of simplicial complexes, $i-$SAL is the best data structure in the SAL family (for every $i\in\mathbb{N}$). 

Trie Compression, like that of $\mathcal{M}(\mathrm{SA})$, are efficient techniques when the trie is assumed to be static. However, over the last decade, this has been extended using Dynamic Minimization - the process of maintaining an automaton minimal when insertions or deletions are performed. This has been well studied  in \cite{DynamicMinimization1} and \cite{DynamicMinimization2}, and extended to  acyclic automata in \cite{DynamicMinimization3} which would be of particular interest to us. Interestingly, it appears that in all of the works above, the finiteness of the language plays no special role and, for the specific case of SA, results may be made sharper.

Another direction, is to look at approximate data structures for simplicial complexes, i.e., we store almost all the simplices (introducing an error) and gain efficiency in compression (i.e.,\ little storage).
This is a well explored topic in automata theory called hyperminimization \cite{Hyperminimization} and since our language is finite, $k-$minimization~\cite{kMin} and cover automata~\cite{cover} might give efficient approximate data structures by hyperminimizing SA. We motivate this with the help of building complexes from a random sample of a large data set. By sampling we are bound to lose information and, instead of taking a random sampling we can look at constructing hyperminimized SA over the entire data set dynamically. It will be interesting to know if the power of randomness can overcome `smart' approximations.

Theorem~\ref{MSALabel} can be generalized to give the following hardness result: Given a word $w$ on alphabet set $\Sigma$ and lexicographic ordering $\theta$ on $\Sigma$, let $\mathcal{L}_\theta$ be an operation which removes all duplicate letters in the word and rearranges the letters of the word in lexicographic order given by $\theta$. Given a language $L$ on alphabet set $\Sigma$, we define $\mathcal{L}_\theta(L)=\{\mathcal{L}_\theta(w)\mid w\in L\}$. Therefore with these definitions, we give the following general result:
\begin{theorem}\label{AutomataLabel}
Given a finite language $L$ (explicitly through the words in $L$) on alphabet set $\Sigma$ and an integer $x$, it is NP-Hard to decide if there exists an ordering $\theta$ on $\Sigma$ such that the size of the smallest DFA recognizing $\mathcal{L}_\theta(L)$ is less than $x$.
\end{theorem}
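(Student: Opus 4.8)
The plan is to recognize that \MSAMin\ is itself a special case of this general ordering problem, so that the result falls out of Theorem~\ref{MSALabel} with essentially no extra work. The key observation is that the minimal automaton $\MSA(K_\theta)$ is precisely the minimal DFA of a finite language obtained from a fixed collection of words by deduplicating and sorting each word according to the chosen vertex order; but that normalization is exactly the operation $\mathcal{L}_\theta$. Thus a simplicial complex, with its labeling left free, is just one kind of input $(L,\theta)$ to the present problem.

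First I would take the NP-hard instances built in the proof of Theorem~\ref{MSALabel}: a square-free graph $G=(V,E)$, regarded as a $1$-dimensional complex whose simplices are the vertices and edges of $G$. I set $\Sigma=V$ and define the finite language $L$ to contain the one-letter word $v$ for every $v\in V$ together with, for each edge $\{u,v\}\in E$, one arbitrarily chosen two-letter word on the letters $u$ and $v$. For any ordering $\theta$ of $\Sigma$, the language $\mathcal{L}_\theta(L)$ then consists of the singletons and of each edge written as its $\theta$-sorted pair of endpoints, i.e.\ exactly the simplex-language whose minimal DFA is $\MSA$ of the corresponding complex.

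The point requiring care is the precise matching of the two optimization variables, and this is where I expect the only real friction. In our problem we reorder a fixed alphabet, whereas in \MSAMin\ one relabels the vertices by integers and then sorts numerically. These agree up to a bijective renaming of the alphabet: given $\theta$, let $\pi$ send each vertex to its rank under $\theta$, so that $\mathcal{L}_\theta(L)$ and the simplex-language of $K_{\pi\circ\theta}$ differ only by the letter substitution $v\mapsto\pi(v)$. Renaming transition labels bijectively is an automaton isomorphism, so the two minimal DFAs have the same number of states. Hence an ordering $\theta$ for which the minimal DFA of $\mathcal{L}_\theta(L)$ has at most $\alpha+2$ states exists if and only if such a labeling exists for $\MSA$, which by the proof of Theorem~\ref{MSALabel} occurs exactly when $G$ has a vertex cover of size at most $\alpha$.

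Finally, Lemma~\ref{PoorHard} furnishes NP-hardness of vertex cover on square-free graphs, so this polynomial-time reduction establishes that deciding whether some ordering $\theta$ makes the minimal DFA of $\mathcal{L}_\theta(L)$ smaller than $x$ is NP-hard; the strict inequality in the statement is obtained by taking $x=\alpha+3$. The remaining checks are routine: $L$ is clearly computable from $G$ in polynomial time, and for a fixed $\theta$ the language $\mathcal{L}_\theta(L)$ and its minimal DFA are polynomial-time constructible, which would in fact also place the problem in NP although only hardness is asserted.
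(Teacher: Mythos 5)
Your proof is correct and takes exactly the route the paper intends: the paper gives no explicit proof of Theorem~\ref{AutomataLabel}, merely asserting it as a generalization of Theorem~\ref{MSALabel}, and your reduction (encoding the vertices and edges of a square-free graph as one- and two-letter words, noting that $\mathcal{L}_\theta$ then reproduces the simplex language up to a size-preserving bijective relabeling of the alphabet, and setting $x=\alpha+3$) is precisely that generalization with the details filled in.
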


Theorem~\ref{CSTLabel} and \ref{AutomataLabel} provide a new dimension to the hardness results obtained by Comer and Sethi in \cite{CS76}. It would be worth exploring this direction further. Also, it would be interesting to find approximation algorithms for \MSAMin. 

Performance of $i$-SAL depends on the value of $\Gamma_i$ - are there any interesting bounds on $\Gamma_i$ for some subset of nice simplicial complexes of $\mathcal{K}(n,k,d,m)$? Finally, proving better bounds on extent of compression remains an open problem and may be geometric constraints will eliminate pathological examples which hinder in proving good bounds on compression.

\section{Acknowledgement}
We would like to thank Eylon Yogev for helping with carrying out some experiments.  We would like to thank Rajesh Chitnis for pointing out a short proof of Lemma~\ref{PoorHard}. We would like to thank Fran\c{c}ois Godi for pointing out a mistake in the  analysis of the cost of the edge contraction operation for the Simplex Array List as it appeared in \cite{BKT15}. We would like to thank Marc Glisse for several comments on earlier versions of this paper and also for pointing out a tightening of the cost of the edge contraction operation for the Simplex Array List. We would like to thank Marc Glisse and Sivaprasad S.\ for implementing SAL and sharing their results with us. Finally, we would like to thank Dorian Mazauric for pointing out Theorem~\ref{MxSTLabel}.

\bibliographystyle{customurlbst/alphaurlpp}

\begin{thebibliography}{50}
\bibitem[AJ88]{AutomataCompression}A.W. Appel and G.J. Jacobson: The world's fastest scrabble program, \emph{In Communications of the ACM}, Vol 31, 1988.
\bibitem[ALS12]{DataStructure3}D. Attali, A. Lieutier, and D. Salinas: Efficient data structure for representing and simplifying simplicial complexes in high dimensions. \emph{In International Journal of Computational Geometry and Applications, 22(4)}, pages 279-303, 2012.
\bibitem[AN93]{TrieSpeed1}A. Andersson and S. Nilsson: Improved Behaviour of Tries by Adaptive Branching, \emph{In Information Processing Letters}, Vol 46, pages 295--300, 1993.
\bibitem[AZS99]{TrieSpeed2}A. Acharya, H. Zhu, and K. Shen: Adaptive Algorithms for Cache-efficient Trie Search, \emph{In Workshop on Algorithm Engineering and Experimentation ALENEX 99, Baltimore}, 1999.
\bibitem[BB97]{pseudomanifold}L.J. Billera and A. Bj\"{o}rner: Face numbers of polytopes on complexes. \emph{In Handbook of Discrete and Computational Geometry}, CRC Press, pages 291--310, 1997.
\bibitem[BGS09]{kMin}A. Badr, V. Geffert, and I. Shipman: Hyper-minimizing minimized deterministic finite state automata. \emph{In RAIRO Theoretical Informatics and Applications}, 43(1), pages 69--94, 2009.
\bibitem[BKT15]{BKT15} J-D.\ Boissonnat, Karthik C.\ S., and S.\ Tavenas: Building Efficient and Compact Data Structures for Simplicial Complexes, \emph{ In Symposium on Computational Geometry}, pages 642--656, 2015.
\bibitem[BM14]{SimplexTree}J-D. Boissonnat and C. Maria: The Simplex Tree: An Efficient Data Structure for General Simplicial Complexes. \emph{In Algorithmica 70(3)}, pages 406-427, 2014.
\bibitem[BM16]{BM14}J.-D. Boissonnat and D. Mazauric: On the complexity of the representation of simplicial complexes by trees, \emph{In Theoretical Computer Science, 617,} pages 28--44, 2016.
\bibitem[BS97]{StringSearch}J. L. Bentley and R. Sedgewick: Fast algorithms for sorting and searching strings. \emph{In Proceedings of the eighth annual ACM-SIAM Symposium on Discrete Algorithms}, pages 360$-$369, 1997.
\bibitem[CF02]{DynamicMinimization1}R. Carrasco and M. Forcada: Incremental construction and maintenance of minimal finite-state automata. \emph{In Computational Linguistics}, Volume 28, 2002.
\bibitem[CS76]{CS76} D. Comer and R. Sethi: Complexity of Trie Index Construction, \emph{In Proceedings of Foundations of Computer Science}, pages 197-207, 1976.
\bibitem[CSY01]{cover}C. C\^{a}mpeanu, N. S\^{a}ntean, and S. Yu: Minimal cover-automata for finite languages. \emph{In Theoretical Computer Science} 267(1--2), pages 3--16, 2001.
\bibitem[DMWW00]{DynamicMinimization3}J. Daciuk, S. Mihov, B. Watson, and R. Watson: Incremental construction of minimal acyclic finite-state automata. \emph{In Computational Linguistics}, Volume 26, pages 3--16, 2000. 
\bibitem[ELS10]{ELS10}D. Eppstein, M. L\"offler, and D. Strash: Listing All Maximal Cliques in Sparse Graphs in Near-Optimal Time, \emph{In International Symposium on Algorithms and Computation (1)}, pages 403-414, 2010.
\bibitem[G80]{G80}M. Golumbic: Algorithmic Graph Theory and Perfect Graphs. \emph{In Academic Press}, 2004.
\bibitem[GJ79]{vertexCoverCubic}M. R. Garey and D. S. Johnson: Computers and Intractability: a guide to the theory of NP-completeness. \emph{W. H. Freeman publishers}, 1979. 
\bibitem[GKS13]{GKS13}M. Grohe, S. Kreutzer, and S. Siebertz: Characterisations of Nowhere Dense Graphs, \emph{In IARCS Annual Conference on Foundations of Software Technology and Theoretical Computer Science}, pages 21-40, 2013.
\bibitem[GUDHI]{GUDHI}\url{https://project.inria.fr/gudhi/}
\bibitem[H71]{HopcroftAlgorithm}J. Hopcroft: An n log n algorithm for minimizing states in a finite automaton. \emph{In Theory of machines and computations}, pages 189--196, 1971.
\bibitem[K72]{K72}R. M. Karp: Reducibility Among Combinatorial Problems. \emph{In R. E. Miller and J. W. Thatcher (editors). Complexity of Computer Computations}, pages 85--103, 1972.
\bibitem[M11]{Hyperminimization}A. Maletti: Notes on hyper-minimization. \emph{In Proceedings 13th International Conference Automata and Formal Languages}, pages 34--49, 2011.
\bibitem[MMR13]{MMR13}E. Maia, N. Moreira, and Rogério Reis: Incomplete Transition Complexity of Some Basic Operations, \emph{In International Conference on Current Trends
in Theory and Practice of Computer Science}, pages 319--331, 2013.
\bibitem[MS15]{MarcSivaprasad}M. Glisse and S. Sivaprasad: Private Communication.
\bibitem[N58]{AutomateMinimization2}A. Nerode: Linear Automaton Transformations, \emph{In Proceedings of the American Mathematical Society}, Volume 9, pages 541$-$544, 1958.
\bibitem[R92]{R92}D. Revuz: Minimisation of acyclic deterministic automata in linear time, \emph{In Theoretical Computer Science, Volume 92, Issue 1}, pages 181--189, 1992.
\bibitem[SFK03]{DynamicMinimization2}K. Sgarbas, N. Fakotakis, and G. Kokkinakis: Optimal insertion in deterministic DAWGs. \emph{In Theoretical Computer Science}, pages 103--117, 2003.
\bibitem[TR86]{patternMatching}J. Teuhola and T. Raita: Text compression using prediction, \emph{In Proceedings of ACM Conference on Research and Development in Information Retrieval}, pages 97--102, 1986.
\bibitem[Y92]{Y92}D.\ Yellin: Algorithms for Subset Testing and Finding Maximal Sets, \emph{In SODA}, pages 386--392, 1992.
\end{thebibliography}

\clearpage

\appendix

\section{Adapted Hopcroft's Algorithm}\label{Hopcroft}
We precisely describe here Hopcroft's algorithm adapted to the compression of  ST to provide as output $\CoST$. The idea is to write the Simplex Tree as the Simplex Automaton: each vertex of the tree becomes a state in the automaton. Then, one just needs to reduce the Simplex Automaton using Hopcroft's algorithm and finally obtain the compressed Simplex Tree by partitioning the states of the Minimal Simplex Automaton. We see in Figure~\ref{fig:CompressedSimplexTreeExample}, the compressed Simplex Tree of the simplicial complex described in Figure~\ref{fig:SimplicialComplexExample}.
\begin{algorithm}[!h]
\caption{Hopcroft's Algorithm for compression of Simplex Tree}
\begin{algorithmic}[1]
\renewcommand{\algorithmicrequire}{\textbf{Input:}}
\renewcommand{\algorithmicensure}{\textbf{Output:}}
\REQUIRE A Simplex Tree $\mathcal{T}$. Let $L$ be the set of leaves,
$N$ be the set of internal nodes and $V$ be the set of the vertices of the simplicial complex.
\ENSURE A compressed Simplex Tree $\mathcal{T}^\prime$.
\STATE $\mathcal{S} \leftarrow \emptyset $
\STATE $\mathcal{P}\leftarrow \{L,N\}$
\FOR {$v\in V$}
\STATE Add $\left(L,v\right)$ in $\mathcal{S}$.
\ENDFOR
\WHILE {$\mathcal{S}\neq\emptyset$}
\STATE Pop one element $(C,v)$ in $\mathcal{S}$.
\FOR {each $B\in\mathcal{P}$ such that there exists $n_1$ and $n_2$ in
  $B$ such that there is an edge in $\mathcal{T}$ from $n_1$ to a node of $C$
  labelled by $v$, and it is not the case for $n_2$ }
\STATE $B^\prime \leftarrow \{n\in B | \textrm{there is an edge from
}n\textrm{ to a node in }C\textrm{ labelled by }v\}$
\STATE $B^{\prime\prime} \leftarrow \{n\in B | \textrm{there is no
  edge from }n\textrm{ to a node in }C\textrm{ labelled by }v\}$
\STATE Replace $B$ by $B^\prime$ and $B^{\prime\prime}$ in $\mathcal{P}$.
\FOR {$w\in V$}
\IF {$(B,w)\in \mathcal{S}$}
\STATE Replace $(B,w)$ by $(B^\prime,w)$ and $(B^{\prime\prime},w)$ in $\mathcal{S}$.
\ELSE
\STATE $D \leftarrow \begin{cases}B^\prime\textrm{ if
  }|B^\prime|\leq|B^{\prime\prime}| \\B^{\prime\prime}\textrm{ otherwise.}\end{cases}$
\STATE Add $(D,w)$ in $\mathcal{S}$.
\ENDIF
\ENDFOR
\ENDFOR
\ENDWHILE
\STATE Now we just have to build the compressed Simplex Tree $\mathcal{T}^\prime$.
\STATE Let $S$ be the part in $\mathcal{P}$ which contains the root of $\mathcal{T}$.
\STATE $\mathcal{T}^\prime \leftarrow$ initial state: $(S,\star)$ labeled $\star$.
\STATE $\mathcal{V}\leftarrow (S,\star)$
\WHILE {$\mathcal{V}\neq\emptyset$}
\STATE Pop an element $(B,w)$ in $\mathcal{V}$.
\FOR {$v\in V$}
\STATE Let $b$ be an element in $B$. \textbackslash* It will be a representative of $B$.
\algstore{hopcroft}
\end{algorithmic}
\end{algorithm}

\begin{algorithm}[!h]
\begin{algorithmic}[1]
\algrestore{hopcroft}
\STATE Let $a$ be (if it exists) the node in $\mathcal{T}$ reached from $b$ by reading $v$.
\STATE Let $A$ be the part in $\mathcal{P}$ which contains $a$.
\IF {$(A,v)$ is not already in $\mathcal{T}^\prime$}
\STATE Add a vertex $(A,v)$ in $\mathcal{T}^\prime$.
\STATE Push $(A,v)$ in $\mathcal{V}$.
\ENDIF
\STATE Add an edge from $(B,w)$ to $(A,v)$ in $\mathcal{T}^\prime$.
\ENDFOR
\ENDWHILE
\end{algorithmic}
\end{algorithm}


\section{Operations on the Maximal Simplex Tree}\label{MxSTOperations}
We provide below a list of basic operations on the MxST. In the following subsections we denote by $T_{\MxST}$ the maximum time taken to search for particular children of any node in MxST ($T_{\MxST}=\mathcal{O}(\log n)$ for red--black trees).
\subsection{Identifying the Maximal Cofaces of a Simplex} 

As seen in section~\ref{S5}, a simplex $\sigma\in K$ is implicitly stored in MxST$(K)$ as a face of its (at most $k$) maximal cofaces. Identifying the maximal cofaces of a simplex means to find the leaves of MxST$(K)$ that contain the maximal cofaces of $\sigma$ and also to find, on each path from the root to such a leaf, the location of the vertices of $\sigma$.

We formalize this by first defining a bijection $f$ from the set of all maximal simplices of $K$ to the set of all leaves in $\MxST(K)$ and also define a function $g$ that maps every simplex $\sigma\in K$ to the set of all maximal simplices in $K$ which contain $\sigma$ (thus if $f(\sigma)=\emptyset$ then $\sigma\notin K$). 
For simplicity, we will denote $f(g(\sigma))$ by \Ls\  and $\mid$\Ls$\mid$ by \ks\ from now on. Since identifying a simplex $\sigma$ reduces to traversing MxST$(K)$, this operation costs $\mathcal{O}(kdT_{\MxST})=\mathcal{O}(kd\log n)$.

\subsection{Insertion}
Let $\sigma$ be a  simplex not in $K$ and let $K'$ be the complex obtained by adding $\sigma$ and its subsimplices to $K$. Observe that $\sigma$ is necessarily maximal in $K'$ and write $d_{\sigma}$ for the dimension of $\sigma$.  We describe how  to insert $\sigma$ in MxST. 
We first  check  if there exists  maximal simplices in MxST$(K)$  that are contained in $\sigma$. This can be done in $\mathcal{O}(kd_{\sigma}T_{\MxST})$ time, by looking at the MxST truncated to depth $d_\sigma$. If such simplices exist,
we will need to delete them before inserting $\sigma$, which takes time at most $\mathcal{O}(kd_{\sigma}T_{\MxST})$ (see analysis of step 3 of Algorithm~\ref{algo3}).
Then, we insert $\sigma$ in MxST. This takes at most $\mathcal{O}(d_{\sigma}T_{\MxST})$ time,  which is significantly better than the time taken for the Simplex Tree, which needs $\mathcal{O}(2^{d_\sigma}T_{\MxST})$ time. 
We conclude that the time for inserting $\sigma$ is $\mathcal{O}(kd_{\sigma}T_{\MxST})=\mathcal{O}(kd_\sigma\log n)$.

\subsection{Removing a Face}
Given a simplex $\sigma$, we have to remove it (and its cofaces) from the MxST. We can perform the operation of removing simplices (the simplex and its cofaces) as described in Algorithm \ref{algo3}.

\begin{algorithm}[!h]
\caption{Removing simplices in Maximal Simplex Tree}\label{algo3}
\begin{algorithmic}[1]
\renewcommand{\algorithmicrequire}{\textbf{Input:}}
\renewcommand{\algorithmicensure}{\textbf{Output:}}
\REQUIRE A Maximal Simplex Tree and a simplex $\sigma$.
\ENSURE A Maximal Simplex Tree.
\STATE Compute \Ls.
\FOR {each $\Gamma\in$ \Ls}
\STATE Remove the branch ending at $\Gamma$.
\STATE for every vertex $v$ in $\sigma$ insert $\Gamma\setminus\{v\}$.
\ENDFOR
\end{algorithmic}
\end{algorithm}

Step 1 was shown earlier to take $\mathcal{O}(kdT_{\MxST})$ time. Since $\Gamma$ is maximal, it is associated to a leaf in the tree. Let $P(\Gamma)$ be the path from the root to the leaf associated to $\sigma$.  For removing a maximal simplex $\Gamma$, one needs to locate on $P(\Gamma)$, the last node $w$ such that the out--degree for the edges is strictly more than one (it corresponds to the last node which is shared with another maximal simplex). Then, one just has to delete the edge on the corresponding path going from $w$. So removing one maximal simplex (at step 3) takes time $\mathcal{O}(dT_{\MxST})$ (since we might have to potentially delete up to $d$ nodes). At step 4, $d_{\sigma}$ facets of $\Gamma$ need to be inserted and each insertion was shown earlier to be doable in time $\mathcal{O}(T_{\MxST}d_{\Gamma})$) (since we do not have to check  if 	there exists maximal simplices in MxST$(K)$  that are contained in facets $\Gamma$). The total algorithm costs time $\mathcal{O}($\ks$ dT_{\MxST}+ $\ks$ T_{\MxST}d_{\Gamma} ) = \mathcal{O}(kd\log(n))$.


%

\subsection{Elementary Collapse}
 An elementary collapse consists of removing  both simplices of a free pair. It preserves the homotopy type of the complex. We break down the computation of an elementary collapse into 3 steps which is described in Algorithm~\ref{algo4}.
 
\begin{algorithm}[!h]
\caption{Elementary collapse in Maximal Simplex Tree}\label{algo4}
\begin{algorithmic}[1]
\renewcommand{\algorithmicrequire}{\textbf{Input:}}
\renewcommand{\algorithmicensure}{\textbf{Output:}}
\REQUIRE A Maximal Simplex Tree and a pair of simplices $(\tau,\sigma)$.
\ENSURE A Maximal Simplex Tree.
\STATE Check if $(\tau,\sigma)$ is a free pair.
\STATE Delete $\tau$.
\STATE Insert the facets of $\tau$ which are different from $\sigma$, and are now maximal.
\end{algorithmic}
\end{algorithm}

It easily follows from the above results that Step 1 takes $\mathcal{O}(kd\log(n))$ time. As for the removing operation, step 2 is feasible in time $\mathcal{O}(dT_{\MxST})$ (see analysis of step 3 of Algorithm~\ref{algo3}). For step 3, we have $d_\tau$ facets to check if they are now maximal and then insert the ones that are indeed maximal. This can be done in time $\mathcal{O}(kdd_\sigma\log(n))$. 

\subsection{Edge Contraction}
Edge contraction is another operation on simplicial complexes that preserves the homotopy type under certain conditions and which can be implemented on the MxST using the above operations. 
We break down the computation of an edge contraction into 3 steps  which is described in Algorithm~\ref{algo5}.

\begin{algorithm}[!h]
\caption{Edge contraction in Maximal Simplex Tree}\label{algo5}
\begin{algorithmic}[1]
\renewcommand{\algorithmicrequire}{\textbf{Input:}}
\renewcommand{\algorithmicensure}{\textbf{Output:}}
\REQUIRE A Maximal Simplex Tree and a pair of vertices $(u,v)$.
\ENSURE A Maximal Simplex Tree.
\STATE Replace the label $u$ by the label $v$ each time $u$ appears in the MxST.
\STATE Store the list $L$ of all the maximal simplices (lexicographically).
\STATE Build the MxST from $L$.
\end{algorithmic}
\end{algorithm}

Steps 1 and 3 can be done in time $\mathcal{O}(kd\log n)$ (follows from our previous results). However, to perform step 2, we need to remove the simplices which were previously maximal, but are no longer maximal after the edge contraction. This can be done in time $\mathcal{O}(kd\cdot k^\prime)$ \cite{Y92}, where $k^\prime$ is the number of maximal simplices after the edge contraction. So an edge contraction can be performed in time $\mathcal{O}(kd\log n+kdk^\prime)=\mathcal{O}(kd(k+\log n))$. 
Finally, we note that the above algorithm   is a multiplicative factor of $k/\log n$  in the worst case away from the upperbound for MxST. For any $k,d$, let us consider the simplicial complex in Example~\ref{EdgeOptimal}.
\begin{example}\label{EdgeOptimal}
Let $K\in\mathcal{K}(kd/4 + k/2 + d+1,k,d,m)$ be defined by the union of the sets of maximal simplices given below:
\begin{enumerate}
\item $\{1,2,\dots , d,d+i|1\le i\le k/2\}$.
\item $\{i\cdot d+1+k/2,i\cdot d+2+k/2,\dots , i\cdot d+k/2+d\mid 1\le i\le k/4\}$.
\item $\{i\cdot d+1+k/2,i\cdot d+2+k/2,\dots , i\cdot d+k/2+d-1, kd/4 + k/2 + d+1\mid 1\le i\le k/4\}$.
\item $\{1,kd/4 + k/2 + d+1\}$.
\end{enumerate}
\end{example}
If the vertex $kd/4 + k/2 + d+1$ is contracted to $1$ then, the new simplicial complex is generated by the union of the sets of maximal simplices given below:
\begin{enumerate}
\item $\{1,2,\dots , d,d+i|1\le i\le k/2\}$.
\item $\{i\cdot d+1+k/2,i\cdot d+2+k/2,\dots , i\cdot d+k/2+d\mid 1\le i\le k/4\}$.
\item $\{1, i\cdot d+1+k/2,i\cdot d+2+k/2,\dots , i\cdot d+k/2+d-1\mid 1\le i\le k/4\}$.
\end{enumerate}

The new MxST has at least $k(d-1)/4$ new nodes (of size $\log n$) that need to be added. Thus, we have a lower bound of $\Omega(kd\log n)$  on the operation of edge contraction for MxST.  

\end{document}